\newtheorem{definition}{Definition}
\newtheorem{theorem}{Theorem}
\newtheorem{lemma}{Lemma}
\begin{document}
\title{Fully passive quantum random number generation with untrusted light}

\author{KaiWei Qiu}
    \affiliation{School of Physical and Mathematical Sciences, Nanyang Technological University, Singapore}

\author{Yu Cai}
    \affiliation{School of Physical and Mathematical Sciences, Nanyang Technological University, Singapore}

\author{Nelly H.Y. Ng}
    \email{nelly.ng@ntu.edu.sg}
    \affiliation{School of Physical and Mathematical Sciences, Nanyang Technological University, Singapore}
    \affiliation{Centre for Quantum Technologies, National University of Singapore}

\author{Jing Yan Haw}
    \email{jingyan.haw@nus.edu.sg}
    \affiliation{Centre for Quantum Technologies, National University of Singapore}

\begin{abstract}
Quantum random number generators (QRNGs) harness the inherent unpredictability of quantum mechanics to produce true randomness. Yet, in many optical implementations, the light source remains a potential vulnerability — susceptible to deviations from ideal behavior and even adversarial eavesdropping. Source-device-independent (SDI) protocols address this with a pragmatic strategy, by removing trust assumptions on the source, and instead rely on realistic modelling and characterization of the measurement device. In this work, we enhance an existing SDI-QRNG protocol by eliminating the need for a perfectly balanced beam splitter within the trusted measurement device, which is an idealized assumption made for the simplification of security analysis. We demonstrate that certified randomness can still be reliably extracted across a wide range of beam-splitting ratios, significantly improving the protocol’s practicality and robustness. Using only off-the-shelf components, our implementation achieves real-time randomness generation rates of $0.347$ Gbps. We also experimentally validate the protocol's resilience against adversarial attacks and highlight its self-testing capabilities. These advances mark a significant step toward practical, lightweight, high-performance, fully-passive, and composably secure QRNGs suitable for real-world deployment.
\end{abstract}

\maketitle

\section{Introduction} \label{sec:Introduction}

Quantum random number generators (QRNGs) leverage the intrinsic probabilistic nature of quantum theory to generate genuine randomness~\cite{RevModPhys.94.025008}. A handful of these devices operate based on principles of quantum optics, leveraging light as the primary source of randomness and using  photodetection devices to extract quantum entropy from the optical signals~\cite{Herrero_Collantes_2017, ma2016quantum}. In theory, the comprehensive knowledge of a QRNG's internal design, encompassing details of the light source used and measurements, would ensure that the extracted randomness is unpredictable to potential adversaries. Yet, achieving a full, real-time characterization is technically challenging and often entails significant costs. Therefore, online certification of high-performance QRNGs is an important and critical issue in the development of such devices. 

Depending on the assumptions based on which security is derived, QRNG certification methods are typically categorized as device independent (DI), semi-DI, or device dependent (DD). DI-QRNGs provide security with minimal assumptions, typically certified through Bell inequality violations~\cite{bell1964einstein}, but they require complex experimental setups and generally yield low generation rates~\cite{acin2016certified, liu2018device, pironio2010random, Bierhorst2018}. On the other hand, DD-QRNGs assume full knowledge and trust in the entire experimental setup~\cite{gabriel2010generator, haw2015maximization, zheng20196, christianQRNG, bruynsteen2023100, Nie68Gbps}, demanding high stability and precise control, which can be impractical in real-world deployments. The intermediate semi-DI regime offers a more practical balance between security and implementation complexity. In this approach, partial assumptions are made — such as trusting or characterizing either the light source~\cite{drahi2020certified, cao2016source, marangon2017source, smith2019simple, cheng2022mutually, michel2019real, avesani2018source, zhang2023realization}, the measurement device~\cite{cao2015loss, chaturvedi2015measurement, nie2016experimental, wang2023provably}, the dimension of the Hilbert space~\cite{lunghi2015self, mironowicz2021quantum}, or energy constraints on the emitted photons~\cite{rusca2020fast, avesani2021semi}. This enables simpler QRNG designs with high random number generation rates and strong security, provided that the device passes the appropriate certification tests.

From a practical standpoint, source-device-independence is particularly crucial because the entropy source forms the foundation of randomness in a QRNG. Any compromise in the integrity of the source directly undermines the security and reliability of the generated random numbers. To address this, source-device-independent (SDI) QRNGs relax the trust assumptions on the light source, assuming that it could be entirely controlled by a malicious adversary, Eve, while relying only on trusted and well-characterized measurement devices~\cite{drahi2020certified, cao2016source, marangon2017source, smith2019simple, cheng2022mutually, michel2019real, avesani2018source, zhang2023realization}. In this setting, the output randomness can still be certified as truly random and close to uniform after appropriate post-processing. 

Remarkably, Ref.~\cite{drahi2020certified} proposed and demonstrated a composable, high-speed (Gbps) continuous variable SDI-QRNG protocol based on a totally untrusted photonic source. While its experimental setup shared similarities with continuous-variable (CV) QRNGs employing balanced homodyne detector(s)~\cite{gabriel2010generator,haw2015maximization,zheng20196,christianQRNG,bruynsteen2023100}, the key distinction lies in the fact that there is no requirement to trust or characterize a local oscillator. This is because the protocol extracts randomness from the difference measurement between the photodetectors, rather than from a quadrature measurement typical in homodyne detection—where a strong local oscillator is treated as part of the trusted measurement device. 

However, in any practical implementation, measurement imperfections and side channels inevitably arise. These imperfections, in principle, can be exploited by an adversary using quantum hacking techniques—similar to those observed in CV quantum key distribution systems~\cite{huang2020practical, chi2011balanced, PhysRevA.84.062308, PhysRevA.87.052309, PhysRevA.98.012312}. Consequently, even within the SDI paradigm, it is desirable to minimize the assumptions and experimental requirements imposed on the measurement apparatus. In view of this, we note that one of the assumptions for the protocol proposed in Ref.~\cite{drahi2020certified} is that the difference measurement device uses a perfectly balanced (i.e. 50:50) optical beam splitter. In practice, perfect balancing is unattainable due to manufacturing imperfections and finite optical path length differences. As such, without taking this realistic imperfection into account, the aforementioned SDI-QRNG protocol will overestimate the amount of randomness generated, leading to potential security loopholes in the QRNG. 

In this paper, we first extend the security proof of the SDI protocol in Ref.~\cite{drahi2020certified} to accommodate the presence of an unbalanced optical beam splitter. This enhancement strengthens the security of the protocol under practical imperfections and mitigates a critical assumption in existing implementations. By doing so, our extended analysis alleviates the need for active balancing components, such as variable optical attenuators (VOAs) or variable optical delays (VODs), thereby simplifying the experimental setup and reducing overall system complexity. Building upon this theoretical foundation, we demonstrate the feasibility of our protocol through the development of a lightweight and cost-effective SDI-QRNG prototype, constructed entirely from off-the-shelf components. Notably, our system operates in real-time and performs randomness certification and extraction without requiring a perfectly balanced optical beam splitter, thus affirming its practicality and robustness. Finally, we evaluate the security of our protocol under conditions of intensity fluctuation, simulating an adversarial scenario where untrusted light may be injected into the QRNG system. This experimental validation further underscores the resilience of our SDI-QRNG protocol against real-world implementation vulnerabilities. 

\section{SDI-QRNG Framework} \label{sec:Protocol}

An SDI-QRNG protocol~\cite{drahi2020certified} (see Fig.~\ref{fig: SDI-QRNG Schematic}) consists of three components: (1) An untrusted light source, which is assumed to be fully controlled by Eve, (2) Randomness generation using trusted and reliably characterized measurement devices, including optical beam splitters, vacuum inputs and photodetectors, and (3) Randomness extraction protocol to extract the final random numbers that are close to being uniform and uncorrelated from Eve. We note that both the randomness generation and extraction stages are essential for the QRNG device. Without the latter stage, the device is called a quantum randomness generator (QRG) that acts as a source of raw quantum entropy, where the output could still be non-uniform and correlated to Eve. 

\begin{figure*}[t!]
    \centering
    \includegraphics[width=0.8\linewidth]{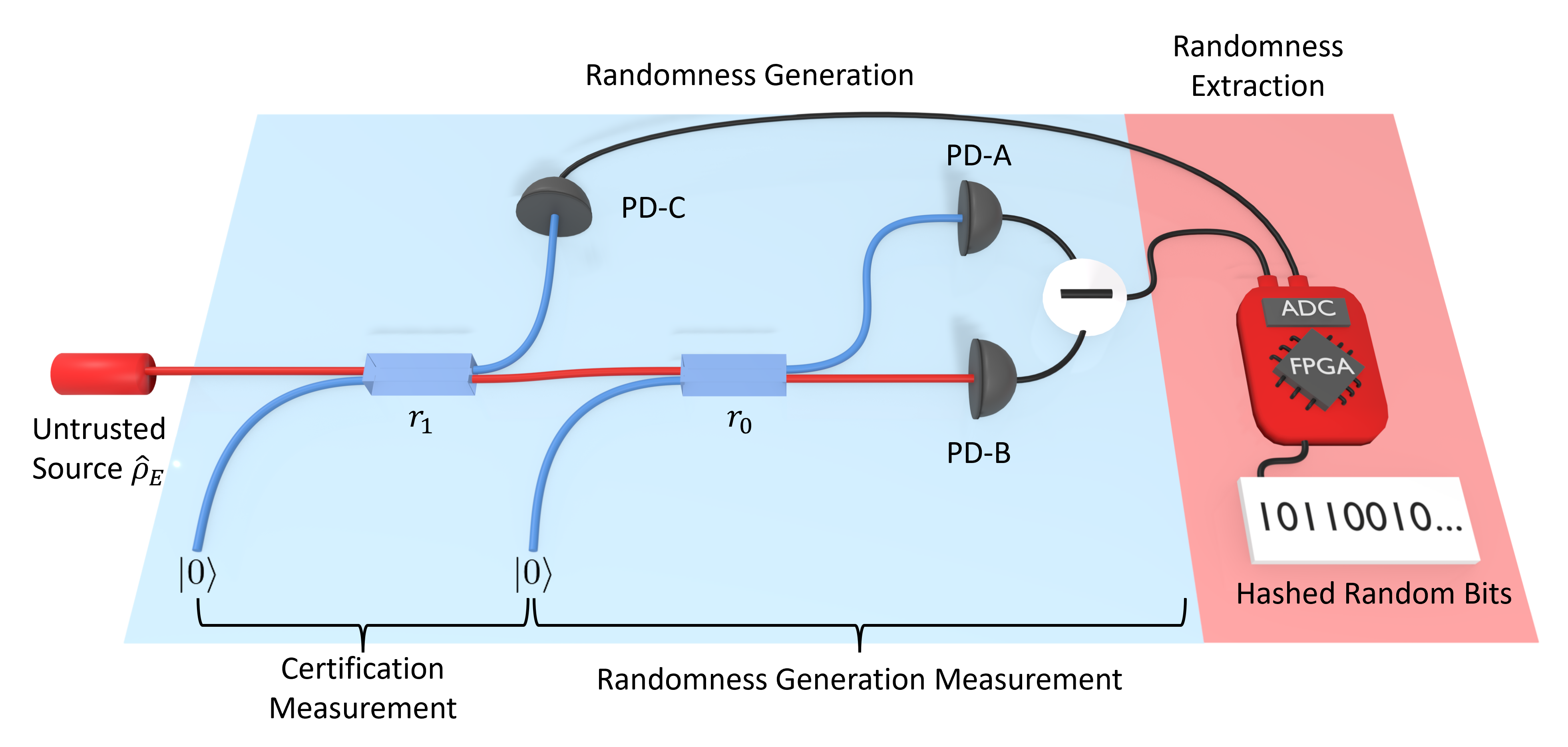}
    \caption{The schematic of the SDI-QRNG setup. An untrusted light $\hat{\rho}_E$ (assumed to be fully controlled by an eavesdropper, Eve) enters the QRG, where a fiber beam splitter of reflectivity $r_1$ reflects some of the light for the certification measurement at PD-C. The remaining light enters the randomness generation measurement stage with a fiber beam splitter of reflectivity $r_0$. Upon passing the test, randomness is generated via a difference measurement between PD-A and PD-B. Finally, the randomness output is sent for randomness extraction to produce hashed random bits which are close to being uniformly random with respect to Eve. PD: photodetector.}
    \label{fig: SDI-QRNG Schematic}
\end{figure*}

The total randomness of the classical outcome $X$ produced by randomness generation is quantified by the min-entropy of $X$ conditioned on Eve's knowledge $E$, denoted by $H_{\min}(X|E)$. This includes knowledge about the light source and measurement devices, which can in general be stored, e.g. in a quantum memory. This results in a classical-quantum state $\hat{\rho}_{XE} = \sum_x p_x \ket{x}\bra{x} \otimes \hat{\rho}_E^x$ for the joint system $XE$, where $p_x$ is the probability of $X = x$ occurring and $\hat{\rho}_E^x$ is the density operator of the state of $E$ conditional on $X=x$ \cite{konig2009operational}. It is known that randomness is adequately quantified by the conditional min-entropy,
\begin{eqnarray}
    H_{\min}(X|E) := -\log_2\left(\sup_{\{\hat{E}_x\}}\sum_x p_x \text{tr}\left(\hat{E}_x \hat{\rho}_E^x\right)\right),
\end{eqnarray}
where the supremum is over all possible POVM measurements of $\{\hat{E}_x\}$ on Eve. The term within the logarithm corresponds to Eve's best guessing probability of outcome $X$. 

\subsection{Randomness Generation}\label{sec: Randomness Generation}

We first establish a formal security definition for a certifiable randomness generation protocol. Similar to the security definition for the quantum key distribution protocol \cite{RevModPhys.94.025008}, the randomness generation protocol comprises of a \textit{Security} aspect that ensures its security with a certification test $\mathcal{P}$ for generating certified randomness. The protocol is then aborted if the test is failed. The protocol also need a \textit{Completeness} aspect to ensure that the test $\mathcal{P}$ is consistently passed with high probability under an honest implementation. The overall security of the protocol must also be composable. Formally, the certifiable QRG protocol is presented as the following~\cite{drahi2020certified}.

\begin{definition}\label{def: Definition 1}
An ($m$, $\kappa$, $\epsilon_{\textnormal{fail},m}$, $\epsilon_C$)-certified randomness generation protocol produces output $X$ made of $m$ measurement results such that
\begin{enumerate}
    [leftmargin=*]\itemsep0em
    \item \textbf{Security:} Either the certification test $\mathcal{P}$ fails, or 
    $$H_{\min}(X|E)\geq\kappa$$
    except with probability $\epsilon_{\text{fail},m}$.
    \item \textbf{Completeness:} There exists an honest implementation that passes the test $\mathcal{P}$ with probability $1-\epsilon_C$.
\end{enumerate}
\end{definition}

The SDI protocol consists of two processes: certification measurement and randomness generation measurement, as depicted in Fig.~\ref{fig: SDI-QRNG Schematic}. For $m=1$ round of measurement, the SDI protocol begins with an untrusted light source $\hat{\rho}_E$ entering the QRG. It will be mixed with a trusted vacuum state $\ket{0}$ at the optical beam splitter with reflectivity $r_1$. The reflected light will undergo a certification measurement, where a certification test $\mathcal{P}$ ensures that the number of photons entering photodetector C falls within the photon range $n_C \in [n_C^-,n_C^+]$ with a passing probability of $1-\epsilon_C$. If the test fails, the protocol will abort, and a new measurement round begins. This certification test $\mathcal{P}$ ensures that the remaining untrusted light entering the randomness generation measurement is certified and will fall in a range $n_R\in[n_R^-,n_R^+]$, except with a failure probability $\epsilon_{\text{fail}}$. In the event of successful certification, the transmitted light shall subsequently be mixed with trusted vacuum at a second beam splitter, with reflectivity $r_0$. The reflected and transmitted light are then measured by photodetectors A and B, respectively. The random bit string $X$, which corresponds to the difference in the number of photons between the two photodetectors, will have a particular conditional min-entropy, denoted by $H_{\min}^{\text{SDI}}(X|E)$, for randomness extraction. We summarize the flow of the protocol in Table~\ref{tab: SDI protocol}.

\begin{table}
\caption{A flow chart summarizing the SDI protocol.}
\label{tab: SDI protocol}
\begin{ruledtabular}
\begin{tabular}{p{0.98\linewidth}}
\textbf{SDI Protocol Flow Chart}\\ \midrule
\begin{enumerate}[leftmargin=*]
    \item \textit{Source.} For $m=1$ round of measurement, an untrusted light source $\hat{\rho}_E$ enters the QRG.
    \item \textit{Certification.} The light undergoes the certification measurement, where a certification test $\mathcal{P}$ ensures $n_C \in[n_C^-,n_C^+]$ with a passing probability of $1-\epsilon_C$. Else, the protocol is aborted.
    \item \textit{Randomness Generation.} Upon passing test $\mathcal{P}$, the remaining photons entering the randomness generation measurement, $n_R$, will be certified except with a failure probability of $\epsilon_{\text{fail}}$ if $n_R\notin[n_R^-,n_R^+]$.
    \item \textit{Certified Min-Entropy.} The randomness is generated via a difference measurement, where they will be have a particular $H^{\text{SDI}}_{\text{min}}(X|E)$ for randomness extraction. 
\end{enumerate}
\end{tabular}
\end{ruledtabular}
\end{table}

\begin{table*}
\caption{Extended SDI Protocol}
\begin{ruledtabular}\label{tab: Extended SDI protocol}
\begin{tabular}{p{0.99\linewidth}}
\textbf{Extended SDI Protocol}\\ \midrule
 An optical setup consisting of 
   \begin{enumerate}[leftmargin=*]
       \item two trusted vacuum modes
       \item two fiber beam splitters of arbitrary reflectivity $r_0$ and $r_1$
       \item two noisy photodetectors (A and B) used to make a difference measurement 
       \item a third noisy photodetector used to make a certification measurement which passes the certification test $\mathcal{P}$ if the voltage bin of the Analog-to-Digital Converter (ADC) at photodetector C, $i_C$, falls in the chosen bin range of $[i_C^-,i_C^+]$.
   \end{enumerate}
   can be used as a certified ($m,\kappa$, $\epsilon_{\text{fail},m}$, $\epsilon_C$)-randomness generation protocol, satisfying 
   
\begin{enumerate}
[leftmargin=*]\itemsep0em
   \item \textbf{Security:} the randomness obtained is given by
   \begin{eqnarray}\label{eq: Hmin definition}
        H_{\min,r_0}^{\textnormal{SDI}}(X|E)\geq \kappa \geq -m\log_2\left[\sum_{x\in\mathcal{X}^{\textnormal{SDI}}_{r_0}}r_{0}^{\left\lfloor\frac{n_R^-+x}{2}\right\rfloor}(1-r_0)^{\left\lceil\frac{n_R^--x}{2}\right\rceil} \left(\begin{array}{c}n_R^{-} \\ \left\lfloor\frac{n_R^{-}+x}{2}\right\rfloor\end{array}\right)\right] 
   \end{eqnarray}
   where
   \begin{equation}
    \mathcal{X}^{\textnormal{SDI}}_{r_0} \in \mathbb{Z} \cap \left[\mu_x - \left\lceil\frac{\delta V}{2 \alpha_D} \right\rceil , \mu_x + \left\lfloor \frac{\delta V}{2 \alpha_D} \right\rfloor \right]
   \end{equation}
   with $\mu_x = 2 \lceil(n_R^-+1)r_0-1\rceil -n_R^-$, $\delta V_D = (V_{\max}^D-V_{\min}^D)/2^{\Delta_{\textnormal{ADC}}}$ and $\Delta_{\textnormal{ADC}}$ is the effective number of bits (ENOB) of the ADC.
   For $m$-rounds of measurement, the security parameter of the protocol, $\epsilon_{\text{fail},m}$, is $\epsilon_{\textnormal{fail,m}} \leq m\cdot \epsilon_{\textnormal{fail}}$,
   where the failure probability of a single round is
    \begin{eqnarray}\label{eq: efail}
       \epsilon_{\textnormal{fail}} &=& \max_{\hat{\rho}_E} \textnormal{Pr} \left(i^-_C \leq i_C\leq i^+_C \And n_R \notin [ n_R^-,n_R^+]\right)  = \max\{\epsilon_-,\epsilon_+\}+ \epsilon_{\gamma_C},\\ \label{eq: epsilon- and epsilon+}
    \epsilon_- &\leq& \sum_{n_C=n_C^-}^{n_{E}^{-}}\frac{r_1^{n_C}(1-r_1)^{n_{E}^{-}-n_C}n_{E}^{-}!}{n_C!(n_{E}^{-}-n_C)!} ,\qquad
    \epsilon_+ \leq \sum_{n_R=n_R^+}^{n_{E}^{+}}\frac{(1-r_1)^{n_R}(r_1)^{n_{E}^{+}-n_R}n_{E}^{+}!}{n_R!(n_{E}^{+}-n_R)!} , \\ 
    \epsilon_{\gamma_C} &=& 1-\textnormal{erf}\left(\frac{\Tilde{\gamma}_C}{\sqrt{2}\sigma_{\gamma_C}}\right),
   \end{eqnarray}
    where $n_{E}^{\pm}= n_C^{\pm}+n_R^{\pm}\pm1$, $n_R^+$ is set to the saturating photon number of the difference measurement, and $\gamma_C$ is the electronic noise variable of the certification photodetector such that $\abs{\gamma_C}<\Tilde{\gamma}_C$ except with probability $\epsilon_{\gamma_C}$.
    
    \item \textbf{Completeness:} There exists an honest implementation with coherent state $\ket{\alpha}$ as input for this SDI-QRNG, such that the certification test $\mathcal{P}$ has a passing probability of
    \begin{equation} \label{eq: completeness of coherent state}
        1-\epsilon_C = \textnormal{tr}\left\{\sum_{i_C=i_C^-}^{i_C^+} \ket{\alpha}\bra{\alpha}\hat{V}_C^{\sigma_{\gamma_C},\Delta_{\textnormal{ADC}}}(i_C)\right\}.
    \end{equation}
    \end{enumerate}
\end{tabular}
\end{ruledtabular}
\end{table*}

In a realistic experimental setup, the measurement outcomes from the photodiodes are noisy voltage measurements, rather than photon numbers. Hence, there are additional considerations from the measurement devices which we summarize in Appendix~\ref{appendix: Practical-Implementation} in order to prevent overestimating the conditional min-entropy. This gives a realistic SDI protocol for practical implementation. Importantly, we extend the SDI protocol to accommodate any optical beam splitter with arbitrary reflectivity $r_0$ to generate certified randomness. By relaxing this assumption in Ref.~\cite{drahi2020certified}, the extended SDI protocol becomes more robust as it not only captures the experiment setup realistically, but also allows for the usage of only fully passive optical elements.

As such, our extended SDI protocol takes into account that the $H_{\min,r_0}^{\text{SDI}}(X|E)$ varies with different values of $r_0$. Similar to~\cite{drahi2020certified}, we can consider the worst-case scenario in which Eve always inputs her optimal state $\hat{\rho}_E=\ket{n}\bra{n}$, where $\ket{n}$ is the Fock state. This optimal input state maximizes her guessing probability of $x$, which remains true even if we consider a general attack in which the photons are entangled for all $m$ measurement rounds. 

To determine $H_{\min,r_0}^{\text{SDI}}(X|E)$, in Appendix~\ref{appendix: Proof-of-Extended-SDI-Protocol}, we show that the outcome of $x$ could be effectively modelled by a binomial distribution, where the photons go to photodetector A with a probability of $r_0$. Then, Eve's best guessing probability, denoted by $p_{\text{guess}}$, occurs precisely at the peak (mean value) of $x$. To further maximize her $p_{\text{guess}}$, Eve ensures that exactly $n_R^-$ number of photons enter the randomness generation measurement as $p_{\text{guess}}$ decreases with increasing values of $n_R$. This gives a lower bound to $\kappa$ in Definition~\ref{def: Definition 1}. 

To evaluate this lower bound, the mean value of $x$ has to be determined. Since the product of $r_0 n_R^-$ is not always an integer, rounding to the correct integer is required to obtain the maximal $p_{\text{guess}}$ for a binomial distribution. Thus, this peak value of $x$ occurs exactly at $\mu_x = 2 \lceil(n_R^-+1)r_0-1\rceil -n_R^-$ (Appendix~\ref{appendix: Proof-of-Extended-SDI-Protocol}). By further taking into account the width of the voltage bin and the ENOB of the ADC for a practical implementation (Appendix~\ref{appendix: Practical-Implementation}), the effective range of $x\in\mathcal{X}^{\text{SDI}}_{r_0}$ can be obtained to estimate $\kappa$. The complete proof for the extended SDI protocol can be found in Appendix~\ref{appendix: Proof-of-Extended-SDI-Protocol}. As for the failure probability $\epsilon_{\text{fail}}$, from Ref.~\cite{drahi2020certified}, it can summarized as the following: (1) $\epsilon_{\text{fail}}$ is the security parameter for $m=1$ round of measurement when $n_R\notin [n_R^-,n_R^+]$ even if the certification test $\mathcal{P}$ is passed, (2) $\epsilon_-$ ($\epsilon_+$) is the security parameter when $n_R<n_R^-$ ($n_R>n_R^+$), (3) $\epsilon_{\gamma_C}$ is the security parameter when the electronic noise of photodetector C is larger than a desired upper bound $\Tilde{\gamma}_C$, i.e. $\abs{\gamma_C} > \Tilde{\gamma_C}$. With the above established, we formally present the extended SDI protocol in Table.~\ref{tab: Extended SDI protocol}. Lastly, the randomness generation rate of the QRG is given by $R_G = R_{\text{sample}}\times \kappa/b$, where $\kappa$ is the min-entropy per sample and $R_\text{sample}$ is the acquisition speed of the ADC. 

\subsection{Randomness Extraction}\label{sec: Randomness Extraction}
To obtain uniform random bits from the raw quantum data, randomness extraction is performed using a two-universal hash function, which ensures that the output is statistically close to uniform, even in the presence of potential (classcial or quantum) side information accessible to an adversary. More specifically, the Toeplitz randomness extractor is used due to its simplicity in its implementation on Field-Programmable Gate Array (FPGA) \cite{Herrero_Collantes_2017,zhang2016fpga,zheng20196}. The Toeplitz randomness extractor is made up of a matrix with block size $l\times h$, where $l$ is the number of bits extracted from the raw random bits of length $h$ from the QRG. Using the randomness extraction definition in Ref.~\cite{drahi2020certified,tomamichel2011leftover}, we specify it for the Toeplitz randomness extractor in the following theorem.

\begin{theorem}
    A certified SDI ($m$, $\kappa$, $\epsilon_{\text{fail},m},\epsilon_C$)-randomness generation protocol can be processed with a random seed of length $h+l-1$, where $h=mb$ and $b$ is the ADC's bit resolution, via Toeplitz randomness extractor to produce a certified SDI random string of length $l$ given by
    \begin{eqnarray}
        l = \kappa +2 -\log_2\frac{1}{\epsilon_{\textnormal{hash}}^2}
    \end{eqnarray}
    that is $\epsilon_C$ complete and $\epsilon_l$ secure, where $\epsilon_l = \epsilon_{\textnormal{hash}}+\epsilon_{\textnormal{fail},m}$ and $\epsilon_{\textnormal{hash}}$ is the security parameter for the randomness extraction.
\end{theorem}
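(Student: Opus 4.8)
The plan is to obtain this statement as a direct application of the (quantum) Leftover Hash Lemma against quantum side information~\cite{tomamichel2011leftover}, combined with a composable accounting of the two error sources. First I would recall that the Toeplitz construction defines a two-universal family of hash functions: an $l \times h$ Toeplitz matrix is constant along its diagonals and is therefore fully determined by its first column and first row, which overlap in a single entry, so it is parametrised by exactly $h+l-1$ bits. With $h = mb$ the length of the raw string produced by $m$ rounds at bit resolution $b$, this simultaneously explains the stated seed length $h+l-1$ and certifies that the extractor is drawn from a two-universal family, which is the only structural property the Leftover Hash Lemma requires of the hashing.

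Next I would invoke the Leftover Hash Lemma in its quantum form: for a classical--quantum state $\hat{\rho}_{XE}$ and a two-universal hash $F$ mapping onto $l$ bits, the output satisfies
\begin{equation}
  \tfrac{1}{2}\left\| \hat{\rho}_{F(X)FE} - \tau_l \otimes \hat{\rho}_{FE}\right\|_1 \leq \tfrac{1}{2}\, 2^{-\frac{1}{2}\left(H_{\min}(X|E)-l\right)},
\end{equation}
where $\tau_l$ denotes the maximally mixed state on $l$ bits. Using the security guarantee $H_{\min}(X|E) \geq \kappa$ supplied by the $(m,\kappa,\epsilon_{\textnormal{fail},m},\epsilon_C)$-protocol and demanding that the right-hand side be at most $\epsilon_{\textnormal{hash}}$, I would set $\epsilon_{\textnormal{hash}} = \tfrac{1}{2}\, 2^{-\frac{1}{2}(\kappa-l)}$ and solve for $l$. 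Taking base-two logarithms gives $l - \kappa = 2\log_2(2\epsilon_{\textnormal{hash}}) = 2 + 2\log_2 \epsilon_{\textnormal{hash}}$, i.e. $l = \kappa + 2 - \log_2(1/\epsilon_{\textnormal{hash}}^2)$, which is precisely the claimed output length.

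The remaining and more delicate step is the composition of the two error contributions into the single secrecy parameter $\epsilon_l$. The generation protocol only guarantees $H_{\min}(X|E) \geq \kappa$ conditioned on passing the test $\mathcal{P}$, and even then only except with probability $\epsilon_{\textnormal{fail},m}$; the hashing bound above is valid precisely on the complementary (good) event. I would therefore split the trace distance between the real extracted state and the ideal (uniform, product-with-$E$) state across these two events: on the good event the distance is at most $\epsilon_{\textnormal{hash}}$ by the Leftover Hash Lemma, while the bad event contributes at most its probability $\epsilon_{\textnormal{fail},m}$. A triangle-inequality / union-bound argument within the composable security framework then yields $\epsilon_l \leq \epsilon_{\textnormal{hash}} + \epsilon_{\textnormal{fail},m}$, as stated. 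Completeness is untouched, since hashing is a deterministic function of the raw data and the seed and does not affect whether the certification test is passed, so the protocol remains $\epsilon_C$-complete.

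I expect the main obstacle to be making this final composition rigorous rather than the elementary algebra producing the formula for $l$. In particular, one must correctly condition on the event that $\mathcal{P}$ is passed (so that the abort branch does not leak into the ideal-state comparison) and cleanly combine the probabilistic min-entropy failure $\epsilon_{\textnormal{fail},m}$ with the distance-based hashing error $\epsilon_{\textnormal{hash}}$ into a single trace-distance bound, ensuring that the resulting guarantee is composable in the sense required for downstream use of the extracted string.
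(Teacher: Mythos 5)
Your proposal is correct and follows essentially the same route as the paper, which establishes this theorem by invoking the leftover hash lemma against quantum side information of Ref.~\cite{tomamichel2011leftover} together with the composable-security accounting of Ref.~\cite{drahi2020certified}: the Toeplitz matrix as a two-universal family parametrised by $h+l-1$ seed bits, the bound $\epsilon_{\textnormal{hash}} = \tfrac{1}{2}\,2^{-\frac{1}{2}(\kappa - l)}$ solved for $l = \kappa + 2 - \log_2(1/\epsilon_{\textnormal{hash}}^2)$, and a triangle-inequality/union-bound composition yielding $\epsilon_l = \epsilon_{\textnormal{hash}} + \epsilon_{\textnormal{fail},m}$ while leaving $\epsilon_C$-completeness untouched. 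Your added care in conditioning on the pass event when combining the probabilistic min-entropy failure with the trace-distance hashing error is exactly the point the cited composability framework handles, so no gap remains.
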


The compression ratio is given by $r=l/h$, and a higher $r$ means that a greater amount of randomness could be extracted from the raw bits. Given that both the extended SDI protocol security and randomness extraction are composable, to produce a string of random numbers of length $L$ that concatenates $l$ bits of random numbers $t$ number of times, i.e. $L=t\times l$, the overall security parameter $\epsilon$ of the SDI-QRNG is \cite{drahi2020certified}
\begin{eqnarray}
    \epsilon = t\epsilon_l \geq t(\epsilon_{\text{hash}}+m\epsilon_{\text{fail}}).
\end{eqnarray}

The total bit rate of SDI-QRNG depends on either the sampling rate of the ADC, $R_{\text{sample}}$, or the clock speed of the FPGA, $R_{\text{hash}}$, where the slower factor becomes the bottleneck. The random number generation rate of the SDI-QRNG is $R_S=\min\{R_{\text{sample}},R_{\text{hash}}\}\times r$. Finally, from the \textit{Completeness} of the protocol, the average random number generation rate is $\left<R\right> = (1-\epsilon_C)\times R_S$.

\section{Experiment Setup}\label{sec:Experiment}

The experimental setup for Fig.~\ref{fig: SDI-QRNG Schematic} consists of the following. First, the untrusted source is a laser source (Koheron LD101) operating at $\lambda = 1550 \text{nm}$, with a typical linewidth of $5$MHz. A single photodetector (Koheron PD100-DC) is used for the certification measurement (PD-C in Fig.~\ref{fig: SDI-QRNG Schematic}). For the randomness generation (PD-A and PD-B in Fig.~\ref{fig: SDI-QRNG Schematic}), we used a pair of balanced photodetectors (Koheron PD100B-AC) with a Common Mode Rejection Ratio (CMRR) of $35$dB at $1$MHz. For the purpose of our demonstration, we have assumed that the responsivity for both balanced photodetectors to be identical. The technical specifications of these photodetectors are shown in Table~\ref{tab: PD details}. Upon characterization, the reflectivity for the optical beam splitter for certification measurement is $r_1 = 0.109$ (Thorlabs TN1550R2A2) and the randomness generation measurement has a fixed fiber beam splitter of $r_0 = 0.513$ (Thorlabs TN1550R5A2). The device used to sample and post-process the measurements is the Red Pitaya STEMlab 125-14. It comes with an FPGA (Xilinx Zynq 7010), where its clock rate is $R_{\text{hash}}=125$MHz. This board also has an ADC with $b=14$ bit resolution (LTC2145-14) and an ENOB of $\Delta_{\text{ADC}} = 11.83$ bits. The voltage range is $\pm 1$V in the low voltage setting and has a sampling rate of $R_{\text{sample}}=125$MS/s.

\begin{table}[h]
    \centering
    \caption{Technical information for the photodetectors. PD: photodetector}
    \label{tab: PD details}
    \begin{ruledtabular}
    \begin{tabular}{ccc}
    {Paramters}& {Certification PD} & {Balanced PD} \\ \midrule
    Bandwidth ($\text{BW}$) & 110MHz & 100MHz\\
    Transimpedance Gain ($\text{G}$) & 3.9k$\Omega$ &  39k$\Omega$ \\ 
    Responsivity ($\eta$) & 1.03A/W~ 
    & 0.9A/W\\
    Saturating Optical Power & 0.6mW & 1.5mW/PD
    \\
    \end{tabular}
    \end{ruledtabular}
\end{table}

\section{Results} \label{sec:Results}

\subsection{Extended SDI Protocol Analysis}\label{sec: Extended SDI Protocol Analysis}

To evaluate the expected certified randomness from the extended SDI protocol for our set-up, we analyze $H_{\min,r_0}^{\text{SDI}}(X|E)$ with different optical powers and $r_0$ using the measured photon number $n_c$ of the certification measurement. For simplicity, we choose $\epsilon_{\gamma_C} = \epsilon_- = \epsilon_{\text{fail}}/2$, while ensuring that $\epsilon_->\epsilon_+$ for all $r_0$ used in this analysis. From the measurement result of $n_C^-$, the corresponding $n_R^-$ can be obtained from $\epsilon_-$ in Eq.~\ref{eq: epsilon- and epsilon+}. On the other hand, we set $n_R^+$ as the number of saturating photons of the photodetectors at the randomness generation measurement. Utilizing these values, $\epsilon_+$ can be obtained via $n_C^+$. The result of $H_{\min,r_0}^{\text{SDI}}(X|E)$ for $m=1$ round of measurement is numerically computed and shown in Fig.~\ref{fig: SDI Hmin plot} with a fixed security parameter of $\epsilon_{\text{fail}}=10^{-20}$. We allow almost all samples to pass the certification test $\mathcal{P}$ by setting $\epsilon_C=10^{-6}$, and use Eq.~\ref{eq: explicit cert test P} in Appendix~\ref{appendix: Completeness-Derivation} to determine the voltage limit for the certification test $\mathcal{P}$. 

From Fig.~\ref{fig: SDI Hmin plot}, we observe that the randomness of the extended SDI protocol decreases when $r_0$ increases from $0.5$. When $r_0=1$, as expected, no randomness can be derived as all photons will reach only one photodetector deterministically. It is interesting to note that $H_{\min,r_0}^{\text{SDI}}(X|E)$ does not drop drastically as $r_0$ increases, where around $75\%$ of the maximum randomness is still present even for $r_0=0.9$ at $2.00$mW of input optical power. The small inset graph in Fig.~\ref{fig: SDI Hmin plot} focuses on the relationship of $H_{\min,r_0}^{\text{SDI}}(X|E)$ and $r_0$. In general, as long as $r_0$ is not too close to the extremes of $0$ or $1$, the extended SDI protocol can still generate certified randomness.

For every $r_0$, importantly, the randomness drops to $0$ when one of the photodetectors at the randomness generation measurement is saturated, i.e.~when $n_R$ is more than $n_R^+$. We note that as the optical power increases, it could also lead to the situation where $\epsilon_+$ surpasses $\epsilon_-$~\cite{drahi2020certified}. In this case, the security of $\epsilon_{\text{fail}}$ will no longer hold, leading to an $H_{\min,r_0}^{\text{SDI}}(X|E)$ of $0$. On the other hand, in the regime of low optical power, no randomness is initially produced because the electronic noise of the certification photodetector is still significantly larger compared to the number of photons impinging onto the photodetector. As a result, no positive value for $n_C^-$ can be obtained to generate certified randomness.

\begin{figure}[t!]
    \centering
    \includegraphics[width=0.9\linewidth]{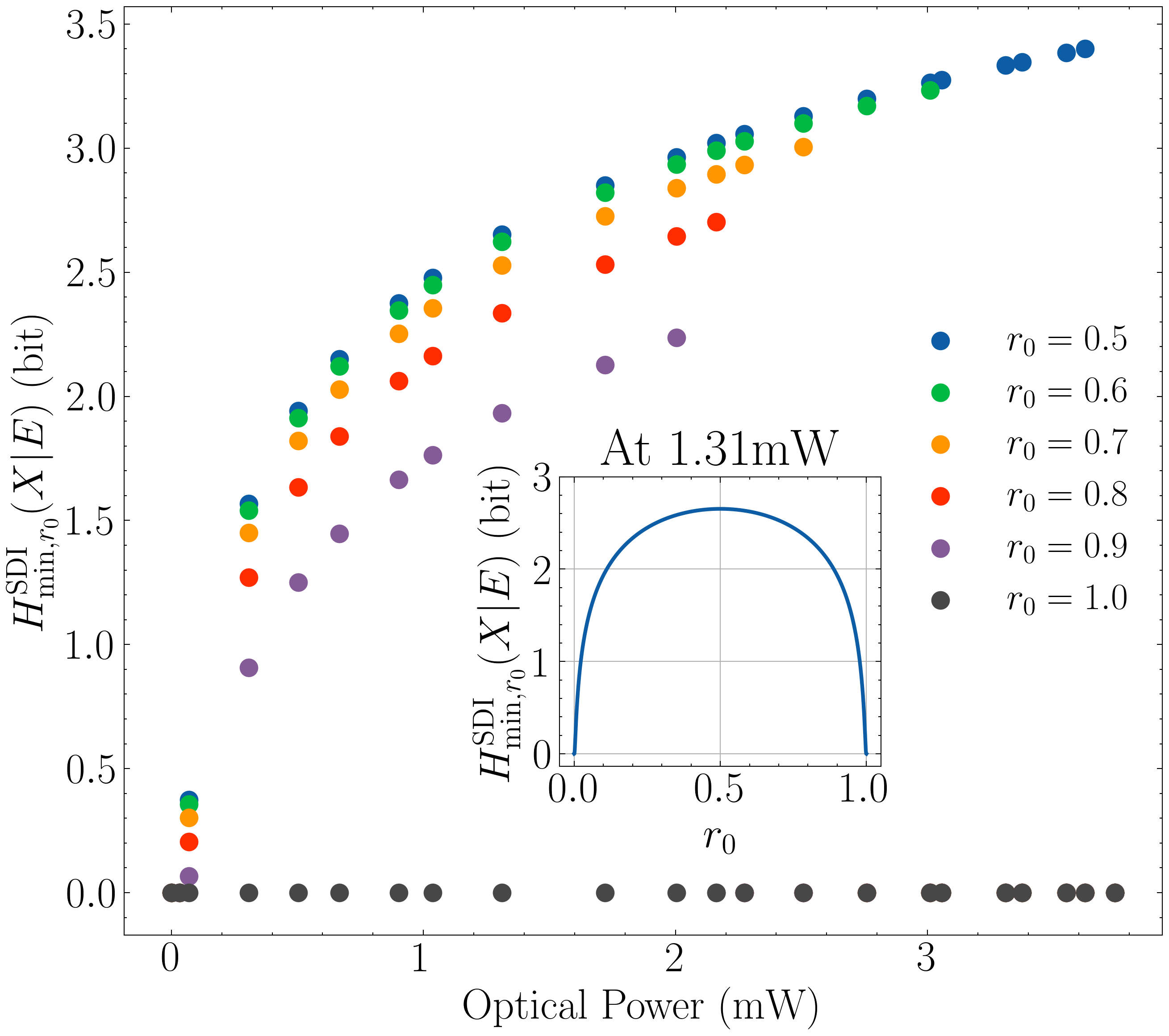}
    \caption{$H_{\min,r_0}^{\text{SDI}}(X|E)$ for $m=1$ measurement is computed with $r_0\in[0.5,1]$ at a fixed $\epsilon_{\text{fail}} = 10^{-20}$. The inset figure presents the changes in $H_{\min,r_0}^{\text{SDI}}(X|E)$ with respect to $r_0\in[0,1]$ at a particular optical power input of $1.31$mW.}
    \label{fig: SDI Hmin plot}
\end{figure}

To understand the trade-off between the security and performance for our extended SDI protocol, we can compare it with an DD-QRNG protocol. The DD protocol used here is a homodyne protocol that generates randomness by measuring the amplitude quadrature of the vacuum signal with a strong coherent local oscillator \cite{gabriel2010generator,haw2015maximization,zheng20196}. For a proper comparison, we need to devise an unbalanced DD-QRNG protocol to accommodate the various beam splitting ratios $r_0$. By considering the unbalanced homodyne detection model in Ref.~\cite{PhysRevA.87.052309}, the variance of the difference measurement output in terms of the photon number is (Appendix~\ref{appendix: Math-Details-for-Unbalanced-Homodyne-Detection})
\begin{eqnarray}
    \sigma_{\text{UHD}}^2= \left[(2r_0-1) \overline{n}_{R}f\right]^2 + 4r_0(1-r_0)\overline{n}_{R} + \sigma_{n_D}^2
\end{eqnarray}
\noindent where $f=\sqrt{\text{Var}(\hat{n}_{R})}/\Bar{n}_{R}$ \cite{chi2011balanced,PhysRevA.98.012312} is the ratio of the fluctuation of the intensity to its mean photon number $\overline{n}_{R}$ (otherwise known as Relative Intensity Noise), $\text{Var}(\hat{n}_{R})=\langle\hat{n}_R^2\rangle_{\alpha_{n_R}} -\langle\hat{n}_R\rangle^2_{\alpha_{n_R}}$ is the variance evaluated over the coherent state $\ket{\alpha_{n_R}}$, and $\sigma_{n_D}^2$ is the electronic noise of the photodetectors in photon numbers. The first term captures the contribution of the fluctuation of the local oscillator due to the imperfect cancellation of the intensity at the unbalanced detection, whereas the second term is the vacuum fluctuation $\sigma_Q^2= 4r_0(1-r_0)\overline{n}_{R}$, which is the source of the quantum randomness, quantified by $H_{\min,r_0}^{\text{DD}}(X|E)$ (see Appendix~\ref{appendix: Math-Details-for-Unbalanced-Homodyne-Detection}). 

\begin{figure}[t!]
    \centering
    \includegraphics[width=0.9\linewidth]{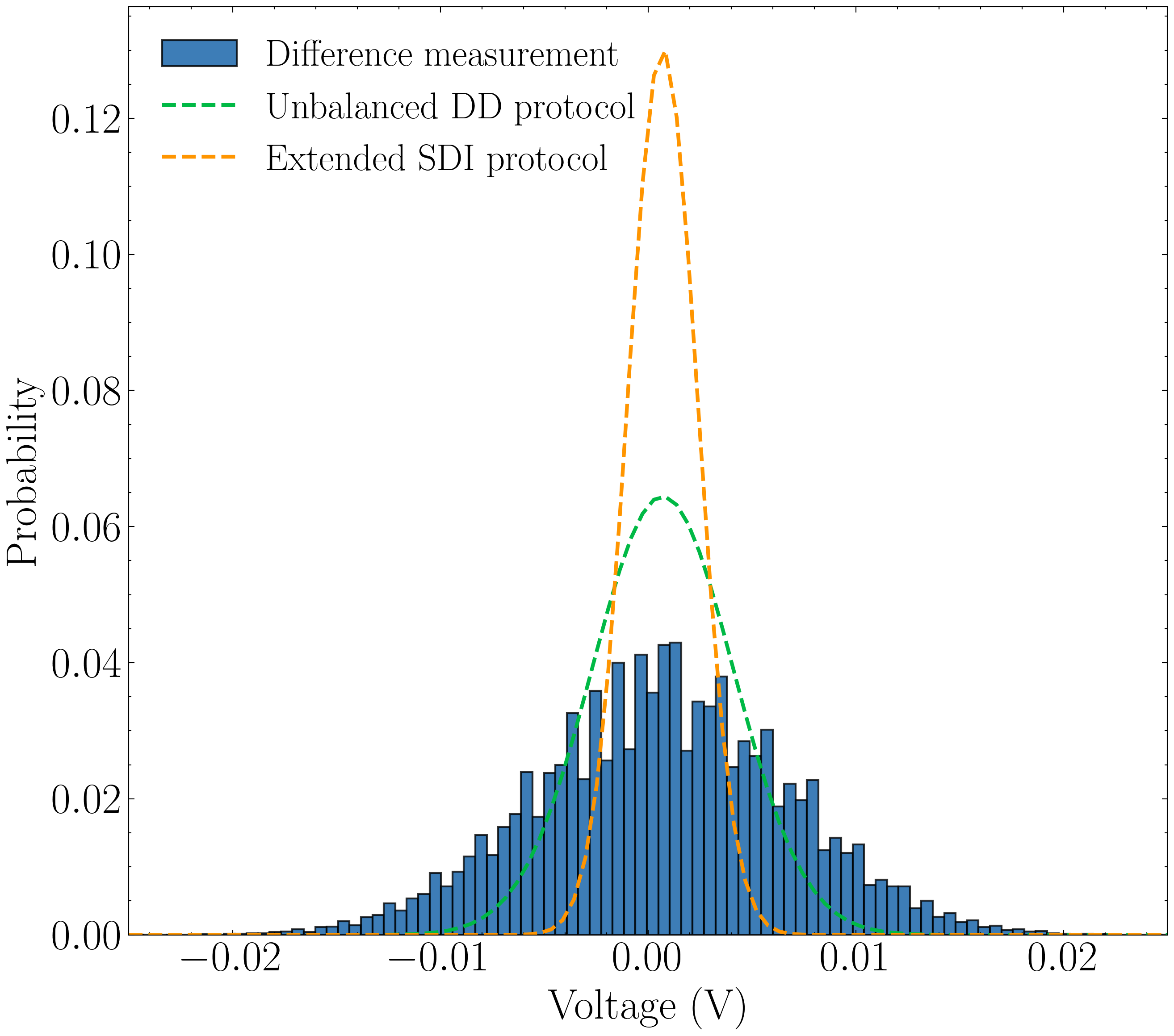}
    \caption{Probability distributions (with binwidth of ENOB) for acquired difference measurement (blue bin), computed unbalanced DD (green dotted-line) and extended SDI protocol (orange dotted-line) at $r_0=0.7$ with $2.57$mW of optical power.}
    \label{fig: comparison_between_DD_and_SDI}
\end{figure}

To illustrate the trust-performance trade-off over different randomness generation protocols, we compare the difference measurement distribution and the conditional distribution of the unbalanced DD and the extended SDI protocol at $r_0=0.7$, as shown in Fig.~\ref{fig: comparison_between_DD_and_SDI}. Here, we use a tunable fiber beam splitter to achieve $r_0=0.7$. The rest of the experimental parameters are the same as our extended SDI protocol analysis. 

The difference measurement acquired from the balanced photodetectors consists of all noise parameters in $\sigma_{\text{UHD}}^2$. Meanwhile, the probability distribution of the vacuum noise, $\sigma_Q^2$, is computed with $\overline{n}_R$ using $\overline{n}_C$ from the certification photodetector and is represented as the green dotted line. This corresponds to $H_{\text{min}}^{\text{DD}}(X|E) = 3.957$ bits per sample. The difference between these two distributions further demonstrates the presence of local oscillator fluctuations resulting from unbalanced detection, in addition to the electronic noise must be taken into account to avoid overestimating the randomness. 

The distribution of the extended SDI protocol is computed with $\sigma_A^2=r_0(1-r_0)n_R^-$ (see Appendix~\ref{appendix: Proof-of-Extended-SDI-Protocol}) and is represented by the orange dotted line, corresponding to $H_{\text{min}}^{\text{SDI}}(X|E) =2.946$ bits per sample. The randomness of the extended SDI protocol differs from the unbalanced DD protocol by $1.011$ bit per sample at $2.57$mW of optical power, which is a $25.54\%$ decrease in randomness. In fact, as shown in Appendix~\ref{appendix: HminDD-HminSDI}, their difference tends towards $1$ bit of randomness in the asymptotic limit of large $\overline{n}_R$ and $n_R^-$. Moreover, in this regime, it will converge to $1$ bit of randomness even as their bit depth increases. This suggests that one can opt for a higher ENOB to minimize the performance trade-off when switching from a DD model to an SDI model.

\subsection{Real-time SDI-QRNG Performance}\label{sec: Real-Time SDI-QRNG Performance}

We evaluate the online performance of our extended SDI-QRNG protocol by generating random numbers in a real-time operation. To this end, we employ PYNQ~\cite{PYNQ}, an open source project from Xilinx that facilitates the deployment of FPGA images and acquires their output in the Python environment. As Red Pitaya does not support the functionality of PYNQ natively, an operating system containing PYNQ is installed from an open-source code~\cite{RP-PYNQ}. As such, we further performed the necessary calibration for the Red Pitaya acquisition functions. 

As we aim to demonstrate real-time operation using a cost-effective FPGA-based system that handles both acquisition and post-processing, it is crucial to optimize resources to maximize the random number generation rate. Understanding the resource consumption, along with the hashing security parameter $\epsilon_{\text{hash}}$, across different hashing block sizes, is essential for selecting optimal FPGA parameters for this operation~\cite{zheng20196}.
We provide further details of our implementation and optimization in Appendix~\ref{appendix: FPGA-Architecture-and-Analysis}.

We operate our set-up in a real-time manner using an optimal optical power input of $3.43 \text{mW}$, where a fixed optical beam splitter of $r_0=0.513$ is used. The other relevant parameters used are presented in Table~\ref{tab: parameters for SDI-QRNG}. From this performance evaluation, the QRG generation rate is $R_G  = 0.419$Gb/s. In our case, the bottleneck of our SDI-QRNG is the ADC acquisition rate $R_{\text{sample}}$, hence the random number generation rate is $R_S = R_{\text{sample}}\times r = 0.350$Gb/s, with a compression ratio $r=19.98\%$ (for min-entropy per sample of $3.354$ bits over $14$ bits with $\epsilon_{\text{hash}} = 2.33\times 10^{-16}$). Finally, the average QRNG throughput is $\langle R \rangle = 0.347$Gb/s with $1-\epsilon_C = 0.992$ and an overall composable security of $\epsilon=8.12\times10^{-13}$. The NIST statistical test suite (STS) for random number generators~\cite{NIST_STS,FAST-NIST} is conducted using an accumulated 1 Gbit of random bits, and the test is successfully passed (see Appendix~\ref{appendix: NIST-Test-Results}).

\begin{table}[t]
    \caption{Parameters for the real-time SDI-QRNG operation.
    }
    \centering
    \begin{ruledtabular}
    \begin{tabular*}{\linewidth}{@{\extracolsep{\fill}}ccc}
    Parameters & Notations & Value \\
    \midrule
        Reflectivity & $r_0$ & $0.513$ \\
        Min-entropy per sample & $H_{\min,r_0}^{\text{SDI}}$ & $3.354$ bits \\
        Hash cycles performed & $t$ & $2500$ \\
        Samples per hash & $m$ & $183$\\
        Length of hashing input & $h$ & $2562$ bits\\
        Length of hashing output & $l$ & $512$ bits \\
        Compression ratio & $r$ & $19.98\%$ \\
        Sample failure prob & $\epsilon_{\text{fail}}$ & $5.00 \times 10^{-19}$\\
        Hashing failure prob & $\epsilon_{\text{hash}}$ & $2.33 \times 10^{-16}$ \\
        Single hashing failure prob & $\epsilon_l$ & $3.25 \times 10^{-16}$\\
        Total failure prob & $\epsilon$ & $8.12 \times 10^{-13}$\\
        Certification failure prob & $\epsilon_C$ & $0.008$ \\
        Randomness generation rate & $R_G$ & $0.419$ Gb/s\\
        Randomness extraction rate & $R_S$ & $0.350$ Gb/s\\
        Average bit rate & $\langle R \rangle$ & $0.347$ Gb/s\\
        $\epsilon$-random bits per string & $L$ & $1.28$ Mb\\
    \end{tabular*}
    \end{ruledtabular}
    \label{tab: parameters for SDI-QRNG}
\end{table}

\subsection{Experimental Verification of SDI Protocol Implementation}\label{sec: Verification of Implementation}

\begin{figure*}[t!]
    \centering
    \includegraphics[width=0.8\linewidth]{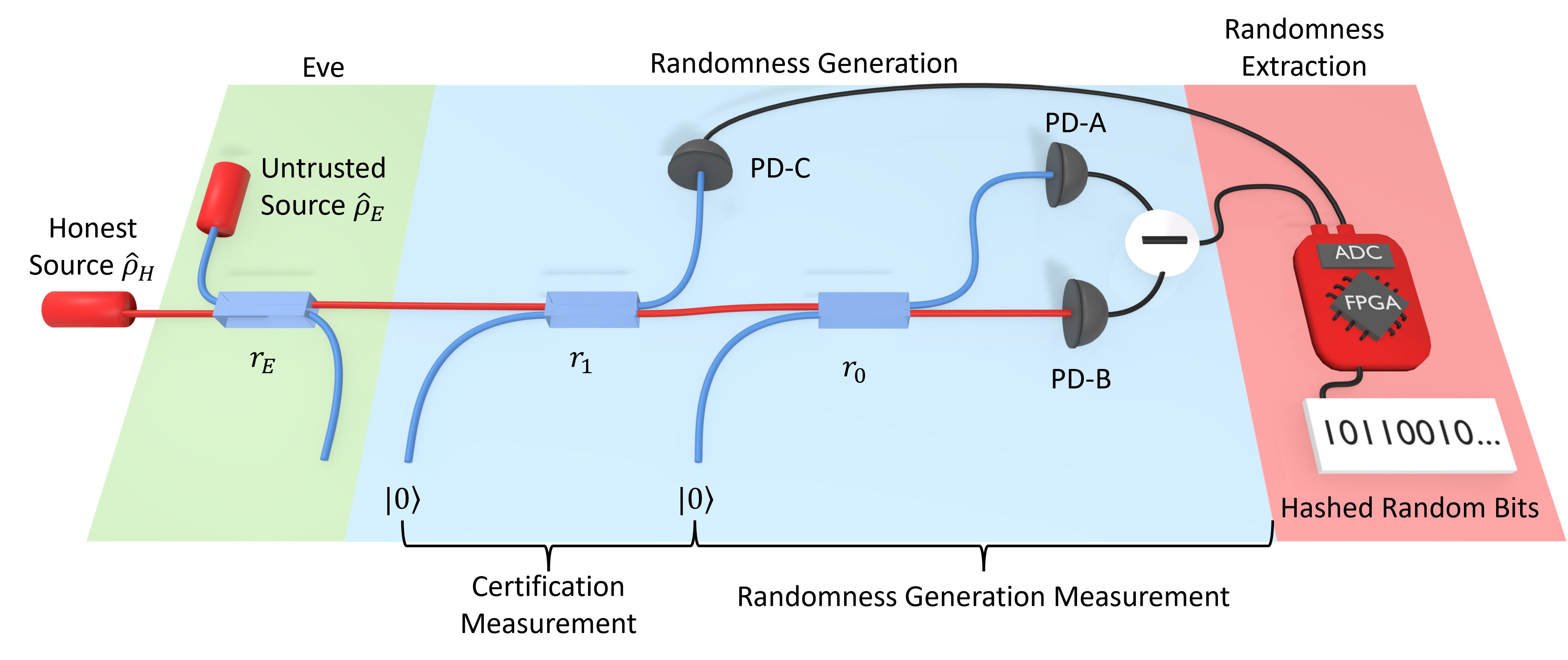}
    \caption{Schematic of Eve's light intensity operation. A fiber beam splitter of reflectivity $r_E$ is inserted between the honest source $\hat{\rho}_H$ and the measurement devices so that Eve can input her light source $\hat{\rho}_E$ into the QRNG. PD: photodetector}    
    \label{fig: light intensity verification schematic}
\end{figure*}

\begin{figure}[t!]
    \centering
    \includegraphics[width=0.9\linewidth]{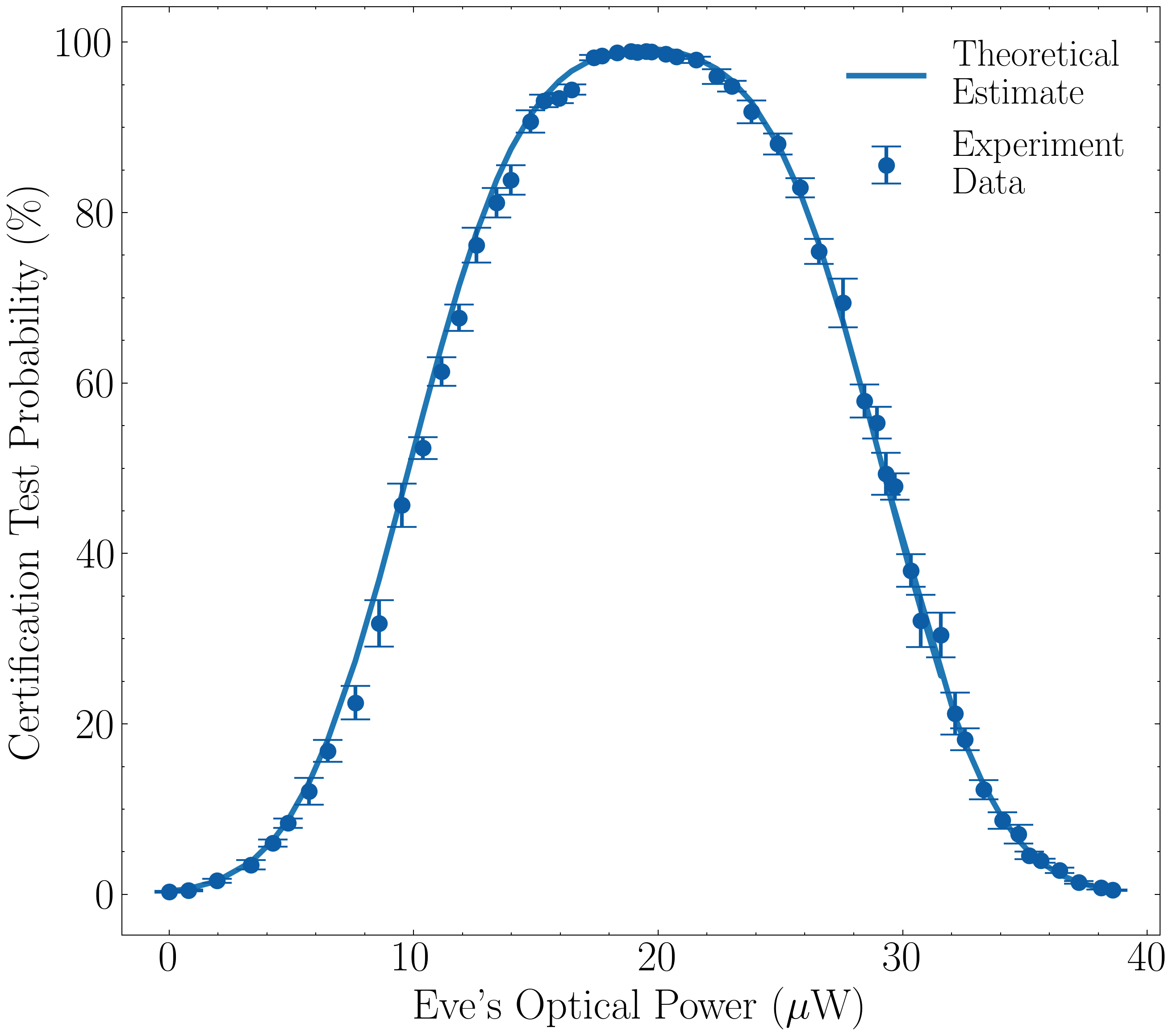}
    \caption{Experimental results for the light intensity verification. The blue experimental data points are the passing probability of the certification test, along with its error bar. The theoretical estimate is plotted in blue line.}
    \label{fig: light intensity verification result}
\end{figure}

To further evaluate our experimental SDI protocol set-up over an untrusted light source, we emulate the scenario whereby an eavesdropper could inject and manipulate additional light sources into the QRNG. We start by assuming an honest light source, $\hat{\rho}_{\text{H}}$, entering the QRNG. The eavesdropper, Eve, is allowed to change the total light intensity by injecting her own coherent light source (Koheron LD101), $\hat{\rho}_{\text{E}}$, by placing an additional beam splitter between $\hat{\rho}_{\text{H}}$ and the measurement devices, as illustrated in Fig.~\ref{fig: light intensity verification schematic}. Here, we set the additional beam splitter with reflectivity $r_E = 0.0105$, which allows fine-tuning of the injected light entering the QRNG. Since the certification test is sensitive to intensity changes, the full spectrum of the passing probability response can be obtained. 

To illustrate the impact of Eve's malicious activity over the light source, we set the protocol to have $1-\epsilon_C = 99.2\%$ passing probability. To compensate for the addition of $r_E$ from the beam splitter, the input power from $\hat{\rho}_H$ is initially adjusted so that $0.5\%$ of the samples passes the test in the absence of Eve's light source. We see in Fig.~\ref{fig: light intensity verification result} that when Eve injects $18.9\mu\text{W}$ of optical power into the system, an optimal passing probability of $99.2\%$ is reached. As Eve adjusts her optical power away from this optimal point, the passing probability decreases, demonstrating the security feature of the SDI protocol when the light intensity varies around the optimal input for the certification test $\mathcal{P}$. The theoretical estimations for this probability, derived from Eq.~\ref{eq: completeness of coherent state} (and Eq.~\ref{eq: explicit cert test P}), optimized for a coherent light source of $1550$nm, exhibit a strong correspondence with the experimental data, as evidenced by an R-square value of $0.9978$ in Fig.~\ref{fig: light intensity verification result}. This illustrates that our verification model is robust and validates the protocol's response to intensity variations via the certification measurement. In other words, as the intensity deviates from the optimal values, the average certified randomness generation rate $\langle{R}\rangle$ will be scaled down according to its corresponding passing probability $1-\epsilon_C$ (Sec.~\ref{sec: Randomness Extraction}).

\section{Discussion and Conclusion} \label{sec:Discussion and Conclusion}

When comparing the performance of the unbalanced DD protocol with our extended SDI protocol (in Sec.~\ref{sec: Extended SDI Protocol Analysis}), several key advantages emerge. First, our protocol is notably easier to implement, even when the light source is entirely untrusted. In particular, by explicitly accounting for an unbalanced beam splitter at the difference measurement process, our model becomes inherently robust against local oscillator fluctuations—an issue that must be addressed explicitly in the unbalanced DD protocol. Furthermore, as discussed in Sec.~\ref{sec: Extended SDI Protocol Analysis}, the performance trade-off of the extended SDI protocol, as compared to the DD approach, can be optimized by resorting to an ADC with higher bit depth.

Secondly, by removing the requirement for perfectly balanced photodetectors, our protocol widens the technological applicability of certifiable QRNGs. This is relevant for fiber-based QRNG systems that utilize ultra-high-speed balanced detectors \cite{struszewski2017characterization,FraunhoferBPD} to achieve high-bit rate. Maintaining a high level of optical field cancellation (i.e. high CMRR) in such systems is notoriously difficult at high bandwidths and typically demands finely tuned optical path lengths \cite{thorlabs_pdb482c}. By translating minimal guaranteed CMRR into an equivalent beam splitting ratio \cite{chi2011balanced,huang2020practical}, our protocol enables a conservative, yet secure, estimation of certified randomness under realistic constraints. 

Thirdly, even in terms of photonic integrated circuits (PIC) QRNG systems based on balanced detection, which target device miniaturization for wider applications, 
our protocol presents an opportunity to minimize both the footprint of the system and the complexity of implementation while achieving light source independence. For instance, during the detection stage in PIC, there could be differences in the photodiode efficiencies, as well as finite on-chip electronic subtraction, which lead to imbalance detection, or a non-negligible CMRR \cite{Bruynsteen:21,wang2024compact,bai202118,raffaelli2018homodyne}. Recent progress in PIC for SDI-QRNG \cite{li2024chip,bertapelle2023high,du2023source} highlights the feasibility of implementing our protocol within these compact and scalable architectures.

Lastly, the SDI protocol allows, in principle, \textit{any} light source to operate the QRNG without requiring a new security proof; the only change required is a simple update on the certification test $\mathcal{P}$. For example, an incoherent, broadband amplified spontaneous emission (ASE) source can serve as an honest implementation, provided that an optical filter is employed before the measurement devices, to ensure that the wavelength of the laser entering has a narrow linewidth centered at $1550$nm \footnote{In general, to  ensure security against arbitrary untrusted light source, regardless of its wavelength or intrinsic properties, it is desirable to integrate a narrow $1550$nm optical filter. This measure will ensure that the light source is effectively filtered prior to entry, fulfilling the assumptions of the SDI protocol.}~\cite{williams2010fast,qi2017true,yang2020randomness}. This configuration satisfies both the assumptions of the SDI protocol and the requirements of the measurement devices. We present a detailed theoretical treatment of the ASE source and its characterization process to determine the certification test $\mathcal{P}$ in Appendix~\ref{appendix: ASE-Based-SDI-QRNG-Theory}. 

In conclusion, we implemented a practical SDI-QRNG using compact and readily available components, demonstrating the feasibility of a lightweight and cost-effective QRNG. Our prototype generates random numbers at an average rate of $0.347$ Gb/s with an overall composable security of $\epsilon = 8.12\times 10^{-13}$. Furthermore, our Red Pitaya–based implementation—featuring a single board integrating both ADC and FPGA—can serve as a versatile platform for other QRNG architectures. To the best of our knowledge, this is the first demonstration of a randomness extractor implemented on a single-board solution of this kind. Our system reliably generates certified randomness across a broad range of beam splitter ratios, thereby simplifying experimental implementation and reducing dependence on idealized measurement assumptions. To validate the security of our protocol, we experimentally performed adversarial manipulation of the light source, with results aligning closely with theoretical predictions. These findings establish our system as a robust, high-performance, and fully passive SDI-QRNG, well-suited for quantum-safe applications such as quantum key distribution and post-quantum cryptography.

\section*{Acknowledgements} \label{sec:acknowledgements}
We thank Nathan Walk for helpful discussions. We acknowledge funding support from National Research Foundation, Singapore and A*STAR under its Quantum Engineering Programme (National Quantum-Safe Network, NRF2021-QEP2-04-P01), the start-up grant for Nanyang Assistant Professorship of Nanyang Technological University, Singapore, and the Tier 1 MOE grant RT1/23 ``Catalyzing quantum security: bridging between theory and practice in quantum communication protocols". 

\appendix
\onecolumngrid
\newpage

\section{Practical Implementation Voltage POVM}\label{appendix: Practical-Implementation}

\begin{figure}[h]
    \centering
    \includegraphics[width=0.7\linewidth]{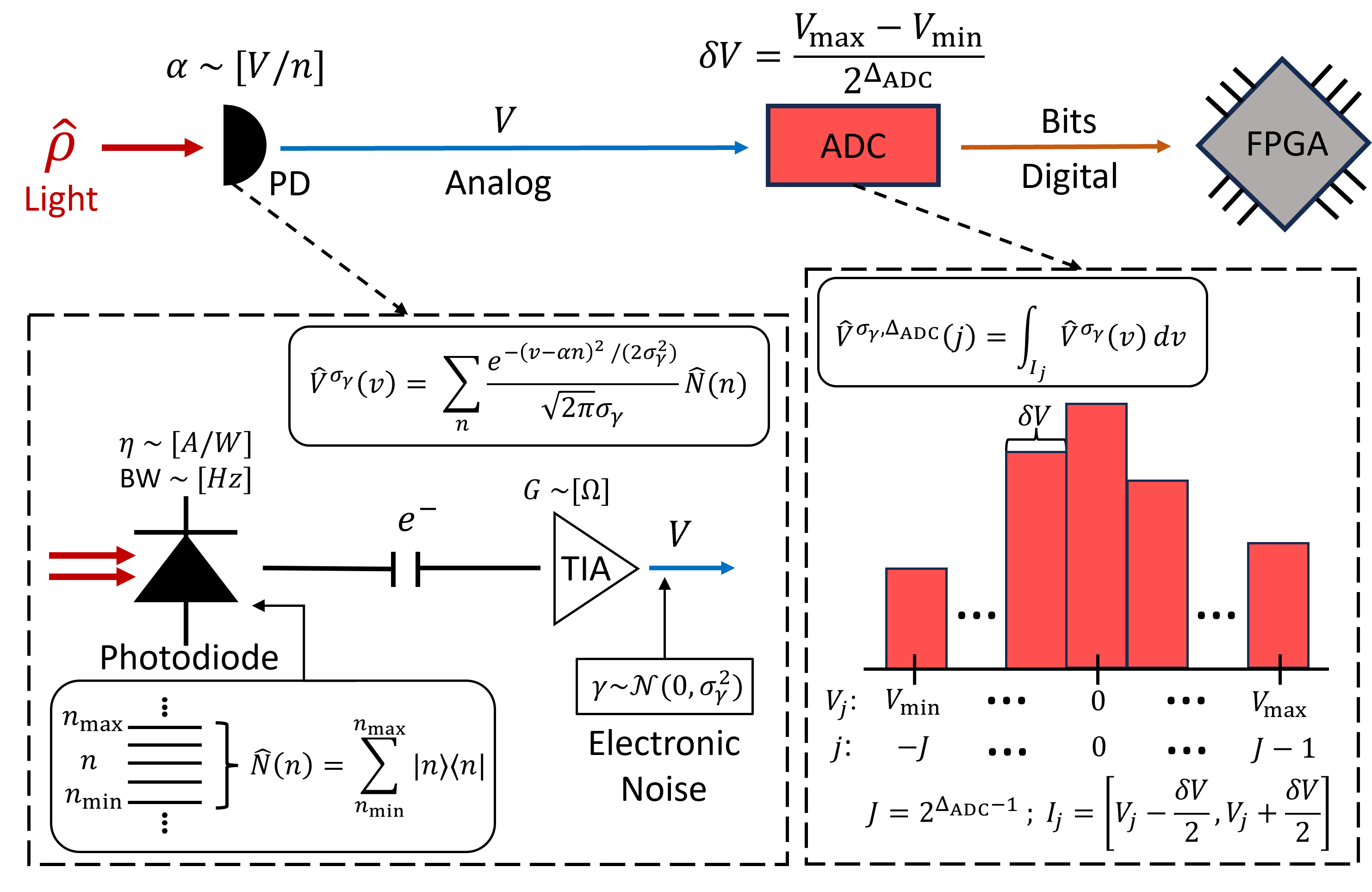}
    \caption{Practical implementation of the photodetectors and ADC components along with their POVM elements.}
    \label{fig: PD overview}
\end{figure}

According to \cite{drahi2020certified}, there are three factors that should be taken into account when estimating the conditional min-entropy for a practical QRNG, as illustrated in Fig.~\ref{fig: PD overview}. Firstly, when light is incident onto the photodiode with a finite operating photon number range $[n_{\min},n_{\max}]$, the light is converted into photocurrent, indicated by $e^-$ in Fig~.\ref{fig: PD overview}. Subsequently, this photocurrent will be converted for voltage measurement by the Transimpedance Amplifier (TIA). The conversion factor for this process can be represented by~\cite{drahi2020certified},
\begin{eqnarray}
    \alpha = \frac{hc\text{BW}\eta G}{\lambda}
\end{eqnarray}
where $h$ is the Planck's constant, $c$ is the speed of light, BW is the bandwidth of the photodetector, $\eta$ is the responsitivity of the photodetector at a particular wavelength $\lambda$, and $G$ is the gain of the TIA. The $\alpha$ is specific to the chosen $\lambda$. Hence, to ensure that $\alpha$ is constant throughout the acquisition window, either the linewidth of the laser used must be narrow or an optical filter is employed.

Secondly, the photodetector exhibits intrinsic electronic noises, which are classical and Gaussian distributed with a noise variable of $\gamma$ and a variance of $\sigma_{\gamma}^2$. This causes voltage measurements to be noisy, and for $n$ number of photons, the resultant voltage measurement is given by $v=\alpha n + \gamma$. The Positive Operator-Valued Measure (POVM) element of voltage $\{\hat{V}^{\sigma_{\gamma}}(v)\}$, with a general photon number measurement projector $\hat{N}(n)=\ket{n}\bra{n}$, is given by
\begin{equation}
    \hat{V}^{\sigma_{\gamma}}(v) = \sum_{n = n_{\min}}^{n_{\max}} \frac{e^{-(v-\alpha n)^2/(2\sigma_{\gamma}^2)}}{\sqrt{2\pi}\sigma_{\gamma}}\hat{N}(n).
\end{equation}

Lastly, the analog voltage signals from the photodetector are fed into the Analog-to-Digital Converter (ADC), which has a finite voltage range $[V_{\min},V_{\max}]$ and a resolution of $b$ bits that outputs $2^b$ number of voltage bins. The ADC also exhibits internal electronic noise, and the Effective Number of Bits (ENOB) of the ADC, $\Delta_{\text{ADC}}$, must be considered during the estimation of the conditional min-entropy. Thus, the resultant number of voltage outputs of the ADC reduces to $2^{\Delta_{\text{ADC}}}$. This results in every $j$-th voltage bin having a width of $\delta V = (V_{\max}-V_{\min})/2^{\Delta_{\text{ADC}}}$. The voltage measurement of a particular $j$-th bin is given by the integral over the interval $I_j$. Combining all these factors, the realistic voltage measurement is represented with the following POVM elements:
\begin{equation}\label{eq: overall POVM voltage}
    \hat{V}^{\sigma_{\gamma},\Delta_{\text{ADC}}}(j) = \int_{I_j}\hat{V}^{\sigma_{\gamma}}(v)dv,
\end{equation}
where the integration limit for $j$-th voltage bin is
\begin{eqnarray}
    I_j  = \left[V_j-\frac{\delta V}{2}, V_j + \frac{\delta V}{2}\right] \quad \text{s.t} \quad
     j = \mathbb{N} \cap \left[-2^{\Delta_{\text{ADC}}-1},2^{\Delta_{\text{ADC}}-1}-1\right].
\end{eqnarray}

\section{Proof of the Extended SDI Protocol} \label{appendix: Proof-of-Extended-SDI-Protocol}

The proof of our Extended SDI-QRNG protocol focuses on the derivation of the lower bound of the conditional min-entropy $\kappa$ for arbitrary $r_0$. Firstly, the POVM element of the measurement outcome of a general photon number $n_1$ and $n_2$ after an arbitrary beam splitter with reflectivity $r_{i\in[0,1]}$ is \cite{drahi2020certified}
\begin{eqnarray}
    \hat{M}\left(n_1, n_2\right)=&\frac{r_{i}^{n_1}\left(1-r_{i}\right)^{n_2}\left(n_1+n_2\right) !}{n_{1} ! n_{2} !}\times \left|n_1+n_2\right\rangle\left\langle n_1+n_2\right|
\end{eqnarray}

\noindent and $n = n_1+n_2$. From this, the POVM element for the difference measurement $\mathbb{X} = \left\{\hat{X}_{r_0}(x)\right\}$ with arbitrary $r_0$ is
\begin{eqnarray}
    \hat{X}_{r_0}(x) &&= \sum^{\left\lfloor\frac{n_R^++x}{2}\right\rfloor}_{n_A=\left\lfloor\frac{n_R^-+x}{2}\right\rfloor} r_0^{n_A}(1-r_0)^{n_A-x}\left(\begin{array}{c}
    {2n_A-x} \\ 
    {n_A}
    \end{array}\right)\nonumber\ket{2n_A-x}\bra{2n_A-x}\nonumber\\
    && = \sum^{n_R^+}_{n_R=n_R^-} r_0^{\left\lfloor\frac{n_R+x}{2}\right\rfloor}(1-r_0)^{\left\lceil\frac{n_R-x}{2}\right\rceil}\left(\begin{array}{c}
    {n_R} \\
    {\left\lfloor\frac{n_R+x}{2}\right\rfloor}
    \end{array}\right)\ket{n_R}\bra{n_R}
\end{eqnarray}
where the subscript A and B represent photodetector A and B at the randomness generation measurement in Fig.~\ref{fig: SDI-QRNG Schematic} respectively, $n_A = \lfloor(n_R+x)/2\rfloor$, $n_B = \lceil(n_R-x)/2\rceil$ \footnote{Since $n_A+n_B=n_R$, then by the property of the sum of the floor and ceiling function, $\lfloor(n_R+x)/2\rfloor + \lceil(n_R-x)/2\rceil = n_R$.}, and $\lfloor \cdot \rfloor$ ($\lceil \cdot \rceil$) is the floor (ceiling) function. In all of our analysis, the worst case in which Eve has complete knowledge of the photodetectors is always assumed. Hence, the overall electronic noise of the photodetectors at the difference measurement, denoted by $\gamma_D$, is given to Eve in a shot-by-shot basis. This implies that $\gamma_D$ can be effectively removed from the realistic POVM element of the difference measurement in Eq.~\eqref{eq: overall POVM voltage}, resulting in the following POVM element for estimating $\kappa$ given by
\begin{eqnarray}
    \hat{V}_D^{\Delta_{\text{ADC}}}(j) &&= \int_{I_j^D-\gamma_D} \hat{V}_D(v_D)dv_D \\ \nonumber &&= \sum_{x\in\mathcal{X}^{\text{SDI}}_{r_0}}\hat{X}_{r_0}(x)
\end{eqnarray}

\noindent for some range $\mathcal{X}^{\text{SDI}}_{r_0}= \{x : \alpha_D x +\gamma_D \in I_j^D\}$ and the subscript $D$ in the voltage POVM element represents the photodetectors at the difference measurement. To understand what the optimal photon state is that Eve can input into the QRG to achieve her best guessing probability, $p_{\text{guess}}$, we will need the following lemma in Ref.~\cite{drahi2020certified}.

\begin{lemma}[\textbf{Lemma 1 in \cite{drahi2020certified}}]
    For an $m$-round SDI protocol involving a measurement $\mathbb{Q} = \{\hat{Q}(q)\}$ in each round that is diagonal in the number state basis with POVM elements
    \begin{equation}
        \hat{Q}(q) = \sum_n c_n(q)\hat{N}(n), \quad \text{s.t.} \quad \sum_q \hat{Q}(q) = \mathbb{I},
    \end{equation}
    Eve's optimal strategy to maximize the probability of guessing a desired outcome $q'$ is to input a pure Fock state $\ket{n'}$ for each round. Moreover, this remains true for inputs with restricted support in the Fock state basis.
\end{lemma}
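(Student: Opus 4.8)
The plan is to exploit the diagonality of the measurement $\mathbb{Q}$ in the Fock basis, which collapses Eve's optimization over arbitrary input states into a linear optimization over photon-number distributions. First I would observe that for any single-round input $\hat{\rho}$,
\[
    \text{tr}\left(\hat{Q}(q')\hat{\rho}\right) = \sum_n c_n(q')\bra{n}\hat{\rho}\ket{n} = \sum_n c_n(q')\,p_n,
\]
where $p_n := \bra{n}\hat{\rho}\ket{n}$ satisfies $p_n \geq 0$ and $\sum_n p_n = 1$. The crucial point is that only the diagonal entries $p_n$ enter: every coherence of $\hat{\rho}$ is annihilated by the diagonal POVM, so Eve may be assumed to input a classical mixture of Fock states without loss of generality.

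For a single round, the guessing probability $\sum_n c_n(q')\,p_n$ is affine in the distribution $\{p_n\}$, which ranges over the probability simplex. Since an affine functional on a simplex attains its maximum at a vertex, the optimum places all weight on $n' = \arg\max_n c_n(q')$, i.e. on the pure Fock state $\ket{n'}$. This settles the single-round statement.

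The main obstacle is the $m$-round case, where Eve may in principle entangle or correlate her inputs across rounds in the hope of gaining an advantage. Here I would use that the joint measurement factorizes as $\bigotimes_{i=1}^m \hat{Q}(q_i)$ and therefore remains diagonal in the joint Fock basis $\ket{n_1,\dots,n_m}$. Consequently the probability of any target outcome sequence $(q_1',\dots,q_m')$ depends only on the joint photon-number distribution $p_{n_1\cdots n_m} := \bra{n_1\cdots n_m}\hat{\rho}\ket{n_1\cdots n_m}$, and equals
\[
    \sum_{n_1,\dots,n_m} p_{n_1\cdots n_m}\prod_{i=1}^m c_{n_i}(q_i').
\]
This is again affine in the joint distribution, so its maximum over the joint simplex is attained at a vertex — a deterministic assignment $(n_1',\dots,n_m')$. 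Because the coefficient factorizes, the optimal vertex maximizes each factor separately, $n_i' = \arg\max_{n_i} c_{n_i}(q_i')$, giving the product Fock state $\ket{n_1'}\otimes\cdots\otimes\ket{n_m'}$. Thus entanglement confers no advantage: the linearity of the trace together with the tensor-factorized diagonal structure of the measurement collapses every correlated strategy onto a product of single-round optima.

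Finally, the restricted-support refinement is immediate: if the admissible inputs are supported on a subset $S$ of Fock states (as enforced by the certification test $\mathcal{P}$), the same affine-maximization argument applies with the argmax taken over $n \in S$. This restriction also guarantees that the maximizer exists even when the unrestricted coefficients $c_n(q')$ do not attain a supremum, so the optimal input is always a well-defined Fock state $\ket{n'}$ with $n' \in S$.
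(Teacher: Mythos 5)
Your proof is correct and is essentially the standard argument: note that the paper itself does not prove this lemma but imports it verbatim from Ref.~\cite{drahi2020certified}, whose proof rests on exactly the reduction you give — the number-diagonal POVM annihilates all coherences, so the outcome probability is affine in the (joint) photon-number distribution, an affine functional on the probability simplex is maximized at a vertex, and the tensor-product structure of the $m$-round measurement forces the optimal vertex to be a product Fock state, so neither cross-round entanglement nor classical correlation helps. One small overstatement to fix: your final claim that restricting the support to a subset $S$ guarantees the maximizer exists is only valid when $S$ is finite (as it is in the protocol, where $S$ corresponds to $n \in [n_R^-, n_R^+]$); for an infinite $S$ the quantity $\sup_{n\in S} c_n(q')$ still upper-bounds the guessing probability, which is all the security analysis needs, but it need not be attained by any Fock state.
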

    
This lemma holds true even if we consider a general attack model where Eve chooses to input states that are entangled for all $m$ rounds \cite{drahi2020certified}. Given that our difference measurement POVM element, $\hat{X}_{r_0}(x)$, is diagonal in the number state basis, i.e. $\ket{n_R}\bra{n_R}$, and Eve's input state $\ket{n}$ has restricted support over the range $n\in[n_R^-,n_R^+]$, the condition for Lemma 1 is satisfied. Thus, for every round of measurement, Eve's best strategy to guess the outcome of $x$ is to input a pure Fock state $\ket{n}$ into the QRG and find her best $p_{\text{guess}}$, which occurs precisely at the peak of the probability distribution of $x$. 

The expectation value of $x$ is given by $\mu_x=\mu_A - \mu_B=r_0n_R - (1-r_0)n_R$. However, for a binomial distribution, the relevant values must be in integers and for any given $r_0\in[0,1]$, $\mu_A = r_0 n_R$ will not necessarily be an integer. This could cause an issue when it comes to rounding off $r_0 n_R$ to the nearest desired integer, as the probability of the binomial distribution might not always be maximal. Hence, by the property of binomial distribution for non-integers, there exists a positive integer $M$ such that $(n_R+1)r_0-1\leq M < (n_R+1)r_0$ always gives the maximal probability for $\mu_A=r_0 n_R$. This results in $\mu_x = 2M -n_R = 2\lceil(n_R+1)r_0-1\rceil - n_R$, which will always guarantee the maximal probability of $p_{\text{guess}}$. With this, $p_{\text{guess}}$ is expressed as 
\begin{eqnarray}
    p_{\text {guess }}&&=\max _{n \in\left[n_R^{-}, n_R^{+}\right]}\left\langle n\left|\sum_{x \in \mathcal{X}^{\text{SDI}}_{r_0}} \hat{X}_{r_0}(x)\right| n\right\rangle \nonumber \\
    && \leq \sum_{x \in \mathcal{X}^{\text{SDI}}_{r_0}} r_0^{\left\lfloor\frac{n_R^{-}+x}{2}\right\rfloor}(1-r_0)^{\left\lceil\frac{n_R^{-}-x}{2}\right\rceil}\left(\begin{array}{c}n_R^{-} \\ \left\lfloor\frac{n_R^{-}+x}{2}\right\rfloor\end{array}\right)
\end{eqnarray}

\noindent where in the last line, we use the following lemma to show that the inequality of $p_{\text{guess}}$ is due to the fact that the probability of the binomial distribution at $\mu_x$ decreases with increasing values of $n_R$. 

\begin{lemma}
    For any $0\leq r \leq 1$ and $n\in\mathbb{Z}^+$, the probability of the binomial distribution of the form
    \begin{eqnarray}
        P(n)= r^{\left\lfloor\frac{n+\mu}{2}\right\rfloor}(1-r)^{\left\lceil\frac{n-\mu}{2}\right\rceil}\left(\begin{array}{c}n \\ \left\lfloor\frac{n+\mu}{2}\right\rfloor\end{array}\right)
    \end{eqnarray}
    
    \noindent where it is maximal at its expectation value $\mu = 2M-n$, where $M \in \mathbb{Z}^+$ and $(n+1)r-1\leq M < (n+1)r$, and $P(n)$ decreases for increasing values of $n$.

\end{lemma}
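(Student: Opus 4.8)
The plan is to strip away the floor/ceiling dressing and recognize $P(n)$ as an ordinary binomial probability mass, and then treat the two assertions — location of the maximum, and monotone decay in $n$ — separately.

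First I would substitute $k = \lfloor\frac{n+\mu}{2}\rfloor$. Using the identity $\lfloor\frac{n+\mu}{2}\rfloor + \lceil\frac{n-\mu}{2}\rceil = n$ (the same footnoted property invoked for $n_A + n_B = n_R$), the exponent of $(1-r)$ is exactly $n-k$, so
\begin{equation}
    P(n) = \binom{n}{k} r^{k}(1-r)^{n-k},
\end{equation}
which is the pmf of $\mathrm{Bin}(n,r)$ at $k$ successes. When $\mu = 2M-n$ with $M\in\mathbb{Z}^{+}$ we have $\frac{n+\mu}{2}=M$ exactly, hence $k=M$ and $P(n)=\binom{n}{M}r^{M}(1-r)^{n-M}$. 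The two claims then reduce to: (i) $M$ is the mode of $\mathrm{Bin}(n,r)$ and satisfies $(n+1)r-1\le M<(n+1)r$; and (ii) the peak height $P_{\max}(n):=\binom{n}{M}r^{M}(1-r)^{n-M}$ does not increase with $n$.

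For (i) I would use the standard consecutive-ratio test. Since $\frac{P(k)}{P(k-1)}=\frac{n-k+1}{k}\cdot\frac{r}{1-r}\ge 1$ precisely when $k\le (n+1)r$, the mass rises up to $k=\lfloor(n+1)r\rfloor$ and falls thereafter; taking $M=\lfloor(n+1)r\rfloor$ yields the stated bracketing, with the boundary case $(n+1)r\in\mathbb{Z}$ producing a tie between two adjacent modes that the half-open convention resolves by selecting the lower one, $M=(n+1)r-1$.

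For (ii), the key observation is the elementary recursion obtained by viewing a $\mathrm{Bin}(n+1,r)$ variable as a $\mathrm{Bin}(n,r)$ variable plus an independent $\mathrm{Bernoulli}(r)$:
\begin{equation}
    P_{n+1}(k) = r\,P_{n}(k-1) + (1-r)\,P_{n}(k).
\end{equation}
Each right-hand side is a convex combination of two values, each at most $P_{\max}(n)$, so $P_{n+1}(k)\le P_{\max}(n)$ for every $k$, and maximizing over $k$ gives $P_{\max}(n+1)\le P_{\max}(n)$. (Equivalently, this is Young's convolution inequality $\lVert P_{n}*\mathrm{Ber}(r)\rVert_{\infty}\le \lVert P_{n}\rVert_{\infty}$, since $\lVert\mathrm{Ber}(r)\rVert_{1}=1$.) I expect the only real obstacle to be bookkeeping: justifying the floor/ceiling substitution so that the difference-coordinate peak at $\mu=2M-n$ aligns with the success-count mode $k=M$, and confirming that the mode shifts by at most one as $n\to n+1$ so that the recursion applies cleanly at the peak. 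Once the convolution identity is in hand, the monotone decay — which is exactly the step that powers the bound fixing $p_{\text{guess}}$ at $n_R^{-}$ in the main proof — is immediate.
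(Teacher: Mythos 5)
Your proposal is correct, and for the mode location it coincides with what the paper simply takes as given; for the monotone decay of the peak --- the substantive claim --- you take a genuinely different route. The paper computes the ratio $P(n+1)/P(n)$ directly: it implicitly uses the brackets $(n+1)r-1\leq M<(n+1)r$ and $(n+2)r-1\leq M'<(n+2)r$ to restrict to the two cases $M'=M$ and $M'=M+1$ (this restriction is never spelled out there, though it does follow since $r-1<M'-M<r+1$), and then bounds the ratio by $1$ in each case via $n+1-M>(n+1)(1-r)$ and $M+1\geq(n+1)r$ respectively. Your argument replaces this case analysis with the Pascal recursion $P_{n+1}(k)=r\,P_{n}(k-1)+(1-r)\,P_{n}(k)$: each entry of the $(n+1)$-row is a convex combination of entries of the $n$-row, so $\max_{k}P_{n+1}(k)\leq\max_{k}P_{n}(k)$ immediately. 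Notably, the one obstacle you anticipate --- confirming that the mode shifts by at most one so the recursion ``applies cleanly at the peak'' --- is not needed at all: your bound holds at every $k$, in particular at the new mode wherever it sits, which is exactly the advantage of the convolution view over the paper's computation (the paper, by contrast, genuinely needs $M'\in\{M,M+1\}$). What the paper's explicit ratios buy in exchange is equality information: the ratio is strictly below $1$ for $0<r<1$ when $M'=M$, and equals $1$ in the degenerate cases $r\in\{0,1\}$, whereas your Young-type bound $\lVert P_{n}\ast\mathrm{Ber}(r)\rVert_{\infty}\leq\lVert P_{n}\rVert_{\infty}$ yields only non-increase --- which is, however, all that the main text's argument pinning $p_{\text{guess}}$ at $n_{R}^{-}$ requires, and your argument generalizes verbatim to any increment distribution. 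Two bookkeeping points you handled correctly: with $\mu=2M-n$ the floor and ceiling act on integers, so $P(n)$ is exactly the $\mathrm{Bin}(n,r)$ pmf at $k=M$; and in the tie case $(n+1)r\in\mathbb{Z}$ the half-open bracket forces $M=(n+1)r-1$ rather than $\lfloor(n+1)r\rfloor=(n+1)r$, but the two tied modes have equal probability, so identifying the lemma's $P(n)$ with $P_{\max}(n)$ is harmless.
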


\begin{proof}
    Consider the ratio of successive terms of $n$, where 
    \begin{eqnarray}
        \frac{P(n+1)}{P(n)}=\frac{r^{\left\lfloor\frac{n+1+\mu'}{2}\right\rfloor}(1-r)^{\left\lceil\frac{n+1-\mu'}{2}\right\rceil}\left(\begin{array}{c}n+1 \\ \left\lfloor\frac{n+1+\mu'}{2}\right\rfloor\end{array}\right)}{r^{\left\lfloor\frac{n+\mu}{2}\right\rfloor}(1-r)^{\left\lceil\frac{n-\mu}{2}\right\rceil}\left(\begin{array}{c}n \\ \left\lfloor\frac{n+\mu}{2}\right\rfloor\end{array}\right)} = 
        \frac{r^{M'}(1-r)^{ n+1 - M'}\left(\begin{array}{c}n+1 \\ M'\end{array}\right)}{r^{M}(1-r)^{n-M}\left(\begin{array}{c}n \\ M\end{array}\right)}
    \end{eqnarray}
    with $\mu' =2M'-n$, where $ M'\in \mathbb{Z}^+$ and $(n+2)r-1\leq M' < (n+2)r$. Now, there are two cases to consider;
    \begin{itemize}
        \item Case 1: $M'=M$. In this case, $r=1$ is not possible, whereas for $r=0$, the only possible way is when $M'=M=0$. Then 
        \begin{eqnarray}
            \frac{P(n+1)}{P(n)} = (1-r)(n+1)\frac{M!(n-M)!}{M'!(n+1-M')!} = \begin{cases}
                \frac{n+1}{n+1} =1 &\text{for $r=0$} \\
                \frac{(1-r)(n+1)}{n+1-M} < \frac{(1-r)(n+1)}{(n+1)(1-r)}=1 &\text{for $0<r<1$} 
            \end{cases}
        \end{eqnarray}
        \item Case 2: $M'=M+1$. In this case, $r=0$ is not possible, whereas for $r=1$, the only possible way is when $M=n$ and $M'=n+1$. Then 
        \begin{eqnarray}
            \frac{P(n+1)}{P(n)} = r(n+1)\frac{M!(n-M)!}{M'!(n+1-M')!} = \begin{cases}
                \frac{r(n+1)}{M+1} \leq \frac{r(n+1)}{((n+1)r-1)+1}=1 &\text{for $0<r<1$} \\
                \frac{(n+1)}{M+1} =\frac{(n+1)}{n+1} =1 &\text{for $r=1$}
            \end{cases}
        \end{eqnarray}
    \end{itemize}
    \noindent Since for both cases, the ratio of successive terms of $n$ is either less than, less than equals to or equals to 1, we have shown that at the expectation value of the binomial distribution where its probability is maximal, the probability decreases for increasing values of $n$.
\end{proof}

Thus, Eve's best strategy will be to input $n_R^-$ number of photons such that the guessing probability is maximized over the range $[n_R^-,n_R^+]$. The range of $\mathcal{X}^{\text{SDI}}_{r_0}$ considering the width of the ENOB voltage bin, $\left\lceil \delta V/\alpha_D \right\rceil$, spread equally around the peak of $x$ is given by
    \begin{equation}
        \mathcal{X}^{\text{SDI}}_{r_0} \in \mathbb{Z} \cap \left[\mu_x- \left\lceil\frac{\delta V}{2 \alpha_D} \right\rceil , \mu_x+ \left\lfloor \frac{\delta V}{2 \alpha_D} \right\rfloor \right].
    \end{equation}

\noindent In principle, $r_0$ can be chosen to be arbitrarily small. To ensure that we can approximate from a binomial distribution to a normal distribution, we will consider $n_R^- >10^5$ to be sufficiently large, as well as $r_0n_R^- >5$ and $(1-r_0)n_R^- >5$. Then $p_{\text{guess}}$ can be approximated by making a change of variable, where we let $n_A^- = (n_R^-+x)/2$, with a mean of $\mu_A^- = r_0n_R^-$ and a variance of $\sigma_A^2 = r_0(1-r_0)n_R^-$. The summation about the ENOB voltage bin width becomes an integral and $p_{\text{guess}}$ becomes
\begin{eqnarray}
    p_{\text{guess}} &&
    \leq  \frac{1}{2} \left[\text{erf}\left(\frac{\frac{\delta V}{2 \alpha_D}}{\sqrt{2\sigma^2_A}}\right) - \text{erf}\left(\frac{\frac{-\delta V}{2 \alpha_D}-1}{\sqrt{2\sigma^2_A}}\right)\right]
\end{eqnarray}

\noindent Therefore, for $m$ rounds of measurement,
\begin{eqnarray} \label{eq: Hmin equation}
    H_{\min,r_0}^{\text{SDI}} (X|E) \geq \kappa && = - m\log_2\left( p_{\text{guess}}\right) \nonumber \\
    && \geq - m\log_2\left(  \frac{1}{2} \left[\text{erf}\left(\frac{\frac{\delta V}{2 \alpha_D}}{\sqrt{2\sigma^2_A}}\right) - \text{erf}\left(\frac{\frac{-\delta V}{2 \alpha_D}-1}{\sqrt{2\sigma^2_A}}\right)\right]\right)
\end{eqnarray}

\noindent This completes the proof for $H_{\min,r_0}^{\text{SDI}} (X|E)$~\footnote{Note that the form presented here is different from Eq.C10 in Ref.~\cite{drahi2020certified} when $r_0=0.5$, as their conditional min-entropy has a typo with a missing factor of $\sqrt{2}$ in the denominator.}. By assuming the worst case, the lower bound of $\kappa$ will always be used to estimate the conditional min-entropy for the certified randomness generated. 

\section{Explicit form of \textit{Completeness}}\label{appendix: Completeness-Derivation}

Calculating the probability of certification test, $1-\epsilon_C$, requires the use of Eq.~\eqref{eq: completeness of coherent state}. However, this equation lacks an explicit form suitable for numerical computation. Therefore, this section attempts to present an explicit formulation for Eq.~\eqref{eq: completeness of coherent state}. From Appendix~\ref{appendix: Practical-Implementation}, the \textit{Completeness} of the SDI protocol with a coherent source input is defined as follows:
\begin{eqnarray}
    1-\epsilon_C &&= \text{tr}\left\{ \sum_{i_C=i_C^-}^{i_C^+}\ket{\alpha}\bra{\alpha}\hat{V}_{C}^{\sigma_{\gamma_C},\Delta_{\text{ADC}}}(i_C)\right\} \nonumber \\
    && = \text{tr}\left\{ \sum_{i_C=i_C^-}^{i_C^+}\ket{\alpha}\bra{\alpha}\int_{I_i^C}\hat{V}_{C}^{\sigma_C}(v_C)dv_C\right\} \nonumber \\
    && = \text{tr}\left\{ \sum_{i_C=i_C^-}^{i_C^+}\int_{v_C=L_a}^{L_b}\frac{e^{-\gamma_C^2/(2\sigma_{\gamma_C}^2)}}{\sqrt{2\pi}\sigma_{\gamma_C}}\sum_{n_c=n_C^{\min}}^{n_C^{\max}}\ket{\alpha}\bra{\alpha}\hat{N}_C(n_C)dv_C\right\}
\end{eqnarray}

\noindent where
\begin{eqnarray}
    I_i^C = \left[ \underbrace{\delta V_C\left(i_C-\frac{1}{2}\right)}_{L_a}, \underbrace{\delta V_C\left(i_C+\frac{1}{2}\right)}_{L_b} \right]. 
\end{eqnarray}

\noindent Since
\begin{eqnarray}
    \text{tr}\left\{\sum_{n_C=n_C^{\min}}^{n_C^{\max}}\ket{\alpha}\bra{\alpha}\hat{N}_C(n_C)\right\} = \sum_{n_C=n_C^{\min}}^{n_C^{\max}}\frac{e^{-\bar{n}_C}(\bar{n}_C)^{n_C}}{n_C!} \xrightarrow{\text{Gaussian}} \int_{n_C=n_C^{\min}}^{n_C^{\max}}\frac{e^{-(n_C-\bar{n}_C)^2/(2\bar{n}_C)}}{\sqrt{2\pi\bar{n}_C}}dn_C
\end{eqnarray}

\noindent where $\overline{n}_C$ is the mean photon number of $n_C$ and we approximate the probability distribution of the coherent source from Poisson to Gaussian since we consider $n_C>10^5$ to be sufficiently large. For consistency with the units, we will convert $n_C$ to $\alpha_C n_C$ to express everything here in terms of voltage. This gives $\alpha_C n_C \sim \mathcal{N}(\alpha_C \bar{n}_C,\bar{n}_C\alpha_C^2)$, where we will denote $\mu_{n_C} = \alpha_C \bar{n}_C$ and $\sigma_{n_C}^2= \alpha_C^2\bar{n}_C$. Then, we have
\begin{eqnarray}
    1-\epsilon_C = \sum_{i_C=i_C^-}^{i_C^+}\int_{v_C=L_a}^{L_b}\int_{n_C=n_C^{\min}}^{n_C^{\max}}\frac{e^{-\gamma_C^2/(2\sigma_{\gamma_C}^2)}}{\sqrt{2\pi}\sigma_{\gamma_C}}\frac{e^{-(\alpha_C n_C-\mu_{n_C})^2/(2\sigma_{n_C}^2)}}{\sqrt{2\pi}\sigma_{n_C}}dn_Cdv_C
\end{eqnarray}\\

The two exponents within the integral of $n_C$ can be further reduced to form a sum of two independent normal distributions for $v_C = \gamma_C + \alpha_C n_C $, where the probability distribution of $v_C \sim \mathcal{N}(0+\mu_{n_C},\sigma_{\gamma_C}^2+\sigma_{n_C}^2)$. To show this, we will first assume that $n_{C}^{\min}\ll\mu_{n_C}\ll n_{C}^{\max}$, as this is set to achieve optimal performance for the QRNG, as well as to prevent saturation at the certification photodetector. This allows us to do the following approximation
\begin{eqnarray}
\int_{n_C=n_C^{\min}}^{n_C^{\max}}\frac{e^{-(\alpha_C n_C-\mu_{n_C})^2/(2\sigma_{n_C}^2)}}{\sqrt{2\pi}\sigma_{n_C}}dn_C 
\approx&& 
\int_{n_C=-\infty}^{\infty}\frac{e^{-(\alpha_C n_C-\mu_{n_C})^2/(2\sigma_{n_C}^2)}}{\sqrt{2\pi}\sigma_{n_C}}dn_C =1.
\end{eqnarray}
Subsequently, using this approximation, the probability distribution for $v_C$ can be obtained as follows.
\begin{eqnarray}
\int_{n_C=n_C^{\min}}^{n_C^{\max}}\frac{e^{-\gamma_C^2/(2\sigma_{\gamma_C}^2)}}{\sqrt{2\pi}\sigma_{\gamma_C}}\frac{e^{-(\alpha_C n_C-\mu_{n_C})^2/(2\sigma_{n_C}^2)}}{\sqrt{2\pi}\sigma_{n_C}}dn_C 
\approx&& 
\int_{n_C=-\infty}^{\infty}\frac{e^{-\gamma_C^2/(2\sigma_{\gamma_C}^2)}}{\sqrt{2\pi}\sigma_{\gamma_C}}\frac{e^{-(\alpha_C n_C-\mu_{n_C})^2/(2\sigma_{n_C}^2)}}{\sqrt{2\pi}\sigma_{n_C}}dn_C \nonumber \\
=&&\frac{e^{-(v_C-\mu_{v_C})^2/(2\sigma_{v_C}^2)}}{\sqrt{2\pi}\sigma_{v_C}}
\end{eqnarray}

\noindent where we use the convolution proof of the sum of two normal independent random variables and we have $\mu_{v_C} = \mu_{n_C} =\alpha_C \bar{n}_C$ and $\sigma_{v_C}^2 = \sigma_{\gamma_C}^2+\sigma_{n_C}^2$ for completing the squares in the intermediate steps. Thus, the explicit form of \textit{Completeness} is given by
\begin{eqnarray} \label{eq: explicit cert test P}
    1-\epsilon_C && = \sum_{i_C=i_C^-}^{i_C^+}\int_{v_C=L_a}^{L_b} \frac{e^{-(v_C-\mu_{v_C})^2/(2\sigma_{v_C}^2)}}{\sqrt{2\pi}\sigma_{v_C}} dv_C \nonumber \\
    && = \sum_{i_C=i_C^-}^{i_C^+} \frac{1}{2}\left[\text{erf}\left(\frac{\delta V_C(i_C+\frac{1}{2})-\mu_{v_C}}{\sqrt{2}\sigma_{v_C}}\right)- \text{erf}\left(\frac{\delta V_C(i_C-\frac{1}{2})-\mu_{v_C}}{\sqrt{2}\sigma_{v_C}}\right)\right]
\end{eqnarray}

\noindent where $\delta V_C = (V_{C,\max}-V_{C,\min})/2^{\Delta_{\text{ADC}}}$. Therefore, if the voltage measurement at the certification photodetector exhibits a Gaussian distribution, then the \textit{Completeness} of the SDI protocol can be evaluated using Eq.~\eqref{eq: explicit cert test P}. 

\section{Mathematical Details for Unbalanced Device-Dependent QRNG Protocol} \label{appendix: Math-Details-for-Unbalanced-Homodyne-Detection}
 
\subsection{Unbalanced Homodyne Detection}

When the beam splitter is unbalanced, it causes an imperfect cancellation of the local oscillator fluctuation during the detection process. Due to this, the voltage variance of the local oscillator fluctuation, $\sigma_{\text{LO},V}^2$, is then mixed and captured together with the fluctuation of the vacuum signal, $\sigma_{Q,V}^2$, and the electronic noise of the photodetectors, $\sigma_{\gamma}^2$. The resultant total voltage variance measured at the unbalanced homodyne detection, $\sigma_{\text{UHD},V}^2$, is given by
\begin{eqnarray}
    \sigma_{\text{UHD},V}^2 = \sigma_{\text{LO},V}^2 + \sigma_{Q,V}^2  + \sigma_{\gamma}^2 
\end{eqnarray}
\noindent where these variance terms become independent of each other in the linear regime \cite{huang2020practical}. Both $\sigma_{\text{LO},V}^2$ and $\sigma_{\gamma}^2$ can be measured experimentally, but the experimental value of $\sigma_{Q,V}^2$ can only be obtained by subtracting them from $\sigma_{\text{UHD},V}^2$. To understand how each variance is theoretically obtained in the unbalanced homodyne model, we derive $\sigma_{Q,V}^2$ and $\sigma_{\text{LO},V}^2$ using the quadrature formalism in the shot-noise unit \cite{laudenbach2018continuous}. 

As illustrated in Fig.~\ref{fig: unbalanced homodyne}, the output $\hat{a}_A$ and $\hat{a}_B$ from port A and B, respectively, after an arbitrary beam splitter with reflectivity $r_0$, using a local oscillator $\hat{a}_{\text{LO}}$ and a signal $\hat{a}_{S}$ as input, is given by
\begin{eqnarray}
    \begin{pmatrix}
        \hat{a}_A \\
        \hat{a}_B
    \end{pmatrix} = 
    \begin{pmatrix}
        \sqrt{r_0} & \sqrt{1-r_0}\\
        -\sqrt{1-r_0} & \sqrt{r_0}
    \end{pmatrix}
    \begin{pmatrix}
        \hat{a}_{\text{LO}}\\
        \hat{a}_S
    \end{pmatrix}
    \quad \Rightarrow  \quad
    \begin{aligned}
        \hat{a}_A &= \sqrt{r_0}\hat{a}_{\text{LO}} + \sqrt{1-r_0}\hat{a}_S \\
        \hat{a}_B &= -\sqrt{1-r_0}\hat{a}_{\text{LO}}+\sqrt{r_0}\hat{a}_S.
    \end{aligned}
\end{eqnarray}
The difference of the photon number in the outputs, denoted by $\hat{N}_-$, is 
\begin{eqnarray}
    \hat{N}_- &&= \hat{a}_A^\dagger \hat{a}_A - \hat{a}_B^\dagger \hat{a}_B \nonumber \\
    &&=(2r_0-1)(\hat{n}_{\text{LO}} -\hat{n}_{S})+ 2\sqrt{1-r_0}\sqrt{r_0}(\hat{a}_S^\dagger\hat{a}_{\text{LO}}+\hat{a}_{\text{LO}}^\dagger\hat{a}_S)
\end{eqnarray}

\begin{figure}[t!]
    \centering
    \includegraphics[width=0.3\textwidth]{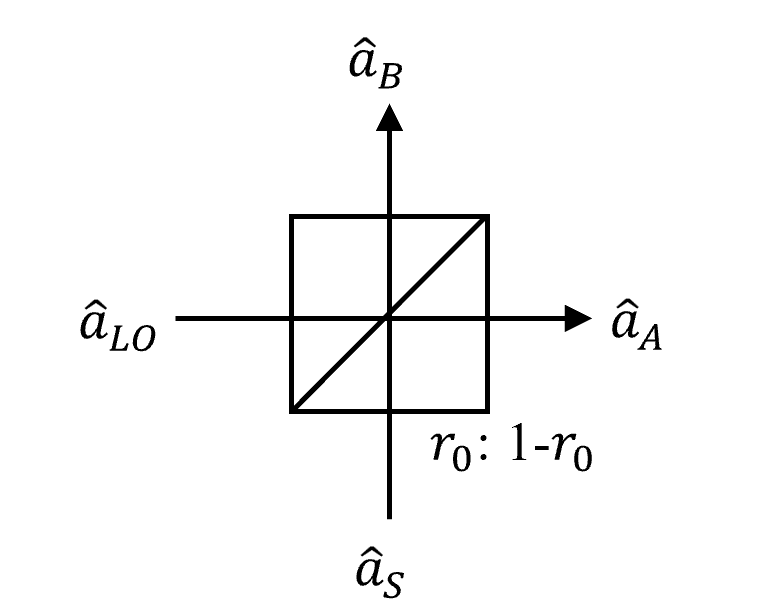}
    \caption{Unbalanced homodyne detection in quadrature formalism}
    \label{fig: unbalanced homodyne}
\end{figure}

\noindent where $\hat{n}_{\text{LO}} = \hat{a}_{\text{LO}}^\dagger\hat{a}_{\text{LO}}$ and $\hat{n}_{S} = \hat{a}_{S}^\dagger\hat{a}_{S}$. Since the local oscillator is a single mode and with an \textit{a priori} understanding that the signal amplitude is small with respect to the local oscillator amplitude, that is, $\alpha_S \ll \alpha_{\text{LO}}$, then using the formalism in Ref.~\cite{bachor2019guide}, $\hat{a}_{\text{LO}}$ can be decomposed into $\hat{a}_{\text{LO}} = \alpha_{\text{LO}}+ \delta\hat{a}_{\text{LO}}$, where $\alpha_{\text{LO}}$ is the mean amplitude of the local oscillator and $\delta\hat{a}_{\text{LO}}$ represents a small quadrature amplitude fluctuation about $\alpha_{\text{LO}}$. The number operator of the local oscillator can be rewritten as
\begin{eqnarray}
    \hat{n}_{\text{LO}} &&\approx \overline{n}_{\text{LO}} + \alpha_{\text{LO}} (\delta\hat{a}_{\text{LO}}^\dagger + \delta\hat{a}_{\text{LO}}) \nonumber \\
    && = \overline{n}_{\text{LO}} +\alpha_{\text{LO}}\delta\hat{x}_{\text{LO}}
\end{eqnarray}
\noindent where $\overline{n}_{\text{LO}}$ is the mean photon number of the local oscillator, $\delta\hat{x}_{\text{LO}} = \delta\hat{a}_{\text{LO}}^\dagger + \delta\hat{a}_{\text{LO}}$ is the amplitude fluctuation quadrature operator of the local oscillator and we ignore the second-order $\delta$ terms. With this, $\hat{N}_-$ can be simplified to
\begin{eqnarray}
    \hat{N}_- \approx (2r_0-1)(\overline{n}_{\text{LO}}+\alpha_{\text{LO}}\delta\hat{x}_{\text{LO}} - \hat{n}_{S})+ 2\sqrt{1-r_0}\sqrt{r_0}(\hat{a}_S^\dagger\hat{a}_{\text{LO}}+\hat{a}_{\text{LO}}^\dagger\hat{a}_S).
\end{eqnarray}
The variance of $\hat{N}_-$ is then given by
\begin{eqnarray}
    \text{Var}(\hat{N}_-) &&= \langle\hat{N}_-^2\rangle_{\alpha_{\text{LO}}} -\langle\hat{N}_-\rangle^2_{\alpha_{\text{LO}}} \nonumber \\
    && = \underbrace{(2r_0-1)^2\text{Var}(\hat{n}_{\text{LO}})}_{\sigma_{\text{LO}}^2 } + \underbrace{4r_0(1-r_0)(\overline{n}_{\text{LO}}\text{Var}(\hat{x}_S)+\overline{n}_S)}_{\sigma_Q^2}   
\end{eqnarray}

\noindent where $\langle\hat{N}_-\rangle_{\alpha_{\text{LO}}} = \text{Tr}\{\hat{N}_-(\hat{\rho}_S \otimes \ket{\alpha_{\text{LO}}}\bra{\alpha_{\text{LO}}})\}$ with a coherent local oscillator, $\langle\hat{a}_S^\dagger\hat{a}_{\text{LO}} + \hat{a}_{\text{LO}}^\dagger\hat{a}_S\rangle \approx \alpha_{\text{LO}}\langle\hat{x}_S\rangle$, $\langle \hat{a}_S^\dagger+\hat{a}_S \rangle= \langle\hat{x}_S\rangle$, where $\hat{x}_{S}$ is the amplitude quadrature of the signal, $\text{Var}(\delta\hat{x}_{\text{LO}}) = \langle(\delta\hat{x}_{\text{LO}})^2\rangle - \langle\delta\hat{x}_{\text{LO}}\rangle^2, \text{Var}(\hat{x}_{S}) = \langle\hat{x}_{S}^2\rangle - \langle\hat{x}_{S}\rangle^2$ and $\text{Var}(\hat{n}_{\text{LO}})=\overline{n}_{\text{LO}}\text{Var}(\delta\hat{x}_{\text{LO}})$. The variance from the difference measurement is made up of two contributions: (i) the variance of the fluctuation of the local oscillator, $\sigma_{\text{LO}}^2 = (2r_0-1)^2\text{Var}(\hat{n}_{\text{LO}})$ and (ii) the variance of the fluctuation of the signal, $\sigma_{Q}^2 = 4r_0(1-r_0)(\overline{n}_{\text{LO}}\text{Var}(\hat{x}_S)+\overline{n}_S)$. We can represent the ratio of the fluctuation of the local oscillator to its mean photon number as $f=\sqrt{\text{Var}(\hat{n}_{\text{LO}})}/\Bar{n}_{\text{LO}}$ \cite{chi2011balanced,PhysRevA.98.012312}.

For our case, since our signal is vacuum, we have $\text{Var}(\hat{x}_{S})=1$ in the shot-noise unit and $\overline{n}_S=0$ for $\sigma_Q^2$. Hence, using $\sigma^2_{\text{UHD}}$ to represent the unbalanced homodyne detection in our experimental setup, we have 
\begin{eqnarray}
    \sigma^2_{\text{UHD}} = (2r_0-1)^2f^2 \overline{n}^2_{\text{LO}} + 4r_0(1-r_0)\overline{n}_{\text{LO}} +\sigma^2_{n_\gamma}
\end{eqnarray}
where we have included the variance of the photodetector's electronic noise, $\sigma^2_{n_\gamma}$, in terms of photon number. Interestingly, $\sigma^2_{\text{UHD}}$ depends on $\overline{n}_{\text{LO}}$ quadratically and linearly due to the fluctuation of the local oscillator and the vacuum, respectively. Lastly, its voltage variance is given by
\begin{eqnarray}
    \sigma_{\text{UHD},V}^2 = \alpha_D^2 \left((2r_0-1)^2f^2 \overline{n}^2_{\text{LO}}\right) + \alpha_D^2\left(4r_0(1-r_0)\overline{n}_{\text{LO}}\right) + \sigma_{\gamma}^2.
\end{eqnarray}

\subsection{Conditional Min-Entropy of Unbalanced Device-Dependent Protocol}

The conditional min-entropy of the DD homodyne protocol is $H_{\min,r_0}^{\text{DD}}(X|E) = -\log_2\left[\max\left(c_1,c_2\right)\right]$ \cite{haw2015maximization,huang2020practical,zheng20196}, where

\begin{eqnarray}
    c_1 = \frac{1}{2}\left[\text{erf}\left(\frac{\gamma_{D,\max}-V_{\max} +3\delta V/2 }{\sqrt{2\sigma_{Q,V}^2}}\right)+1\right] \quad \text{and} \quad c_2 = \text{erf}\left(\frac{\delta V/2}{\sqrt{2\sigma_{Q,V}^2}}\right)
\end{eqnarray}

\noindent with $\sigma_{Q,V}^2 = \alpha_D^2 4r_0(1-r_0)\overline{n}_R$ and $\overline{n}_R$ is the average photon number of $n_R$. $c_1$ is the probability when the voltage measurement outcome of $x$ is the highest point at the saturation limits of the ADC sampling range, while $c_2$ is the probability of the voltage measurement outcome of $x$ is the highest at its mean. In our experimental setup, the sampling range of the ADC is fixed from $V_{\min}=-1$V to $V_{\max}=1$V, and the voltage measurement of $x$ is much lower than $V_{\max}$ and much higher than $V_{\min}$, indicating that $x$ will always be within the sampling range. Moreover, the variance of our electronic noise from the AC coupled balanced detector $\gamma^2_D$ is measured to be very small with respect to $\sigma^2_{Q,V}$. Thus, we can safely assume that $c_1 \leq c_2$ and simplifies our analysis for $H_{\min,r_0}^{\text{DD}}(X|E)$ using $c_2$. To remain consistent with $H_{\min,r_0}^{\text{SDI}}(X|E)$, we modify $H_{\min,r_0}^{\text{DD}}(X|E)$ in terms of photon number, and it can be rewritten as 

\begin{eqnarray}\label{eq: HminDD c2}
     H_{\min,r_0}^{\text{DD}}(X|E) = -\log_2\left[\frac{1}{2}\left(\text{erf}\left(\frac{\frac{\delta V}{2\alpha_D}}{\sqrt{2\sigma_{Q}^2}}\right) - \text{erf}\left(\frac{\frac{-\delta V}{2\alpha_D}-1}{\sqrt{2\sigma_{Q}^2}}\right)\right)\right]
\end{eqnarray}  
where $\sigma_{Q}^2 = 4r_0(1-r_0)\overline{n}_R$.

\section{Comparison between unbalanced DD and extended SDI protocol} \label{appendix: HminDD-HminSDI}

The purpose of this analysis is to compare the difference between the randomness generated in the unbalanced DD and the extended SDI protocol during practical implementation. The difference in the ideal photon-counting ADC case will be considered first, followed by their experimental difference.

\subsection{Ideal photon-counting ADC}

Assuming an ideal ADC that could distinguish between $n$ and $n+1$ photons, the width of this ADC is set to be $\delta V_0/\alpha_D =1$ photon wide. Then, the general expression for the min-entropy of both protocols is given by

\begin{eqnarray}\label{eq: Hmin ideal ADC}
     H_{\min,r_0}^{\text{protocol}}(X|E) = -\log_2\left[\frac{1}{2}\left(\text{erf}\left(\frac{\frac{\delta V_0}{2\alpha_D}}{\sqrt{2\sigma_{k}^2}}\right) - \text{erf}\left(\frac{\frac{-\delta V_0}{2\alpha_D}-1}{\sqrt{2\sigma_{k}^2}}\right)\right)\right].
\end{eqnarray}    

\noindent for $k\in\{Q,A\}$. In this case, we assume $\sigma_Q^2$ and $\sigma_A^2$ to be greater than $10^5$ as, in principle, $r_0$ can be arbitrarily small, and to simplify our analysis, we will also consider only $0.5\leq r_0 \leq 0.9$. This ensures that $\delta V_0/2\alpha_D \ll \sqrt{2\sigma_k^2}$ and $ -\delta V_0/2\alpha_D-1 \ll \sqrt{2\sigma_k^2}$, allowing us to perform a Taylor expansion to approximate the error function up to the first-order term. The unbalanced DD and extended SDI protocol for an ideal photon-counting ADC is given by
\begin{eqnarray}
    H^{\text{DD}}_{\min,r_0}(X|E) &&= -\log_2\left(\frac{2}{\sqrt{2\pi\sigma_Q^2}}\right) = \frac{1}{2}\log_2\left(2\pi (4r_0)(1-r_0)\overline{n}_R\right)-1 \\
    H^{\text{SDI}}_{\min,r_0}(X|E) &&\geq -\log_2\left(\frac{2}{\sqrt{2\pi\sigma_A^2}}\right) 
    = \frac{1}{2}\log_2\left(2\pi r_0(1-r_0)n_R^-\right) -1.
\end{eqnarray}
Therefore, their difference is 
\begin{eqnarray} \label{eq: theoretical diff of HminDD and HminSDI} 
    \Lambda_{\text{ideal}} &&= H_{\min,r_0}^{\text{DD}}\left(X|E\right) - H_{\min,r_0}^{\text{SDI}}\left(X|E\right) \geq 1+\frac{1}{2}\log_2\left(\frac{\overline{n}_R}{n_R^-}\right).
\end{eqnarray}
Since $\overline{n}_R>n_R^-$, the above relation will always yield a value greater than $1$ bit. 

\subsection{Comparison between the experimental difference}

Based on the experimental parameters in Sec.~\ref{sec: Extended SDI Protocol Analysis}, we have $ \delta V/2\alpha_D\gg1$, and we can do the following approximation: $ -\delta V/2\alpha_D-1\approx -\delta V/2\alpha_D$. This results in the following experimental min-entropy. 
\begin{eqnarray}\label{eq: Hmin experimental ADC}
     H_{\min,r_0}^{\text{protocol}}(X|E) \approx -\log_2\left[\text{erf}\left(\frac{\frac{\delta V}{2\alpha_D}}{\sqrt{2\sigma_{k}^2}}\right)\right].
\end{eqnarray}    
The experimental difference is given by
\begin{eqnarray}\label{eq: HminDD-HminSDI difference}
    \Lambda_{\text{ENOB},r_0} &&= H_{\min,r_0}^{\text{DD}}\left(X|E\right) - H_{\min,r_0}^{\text{SDI}}\left(X|E\right) \nonumber \\[0.5cm]
    &&\gtrsim -\log_2\left[\text{erf}\left(\frac{\frac{\delta V}{2\alpha_D}}{\sqrt{2\sigma_Q^2}} \right)\right] - \left(-\log_2 \left[\text{erf}\left(\frac{\frac{\delta V}{2\alpha_D}}{\sqrt{2\sigma_A^2}} \right)\right]\right) \nonumber \\[0.5cm]
    && = 1+\frac{1}{2}\log_2\left(\frac{\overline{n}_R}{n_R^-}\right) + \log_2\left[\frac{\sqrt{n_R^-}\text{erf}\left(\frac{\frac{\delta V}{2\alpha_D}}{\sqrt{2\sigma_A^2}} \right)}{\sqrt{4\overline{n}_R}\text{erf}\left(\frac{\frac{\delta V}{2\alpha_D}}{\sqrt{2\sigma_Q^2}} \right)}\right]
\end{eqnarray}
The first two terms of $\Lambda_{\text{ENOB},r_0}$ are the result of $\Lambda_{\text{ideal}}$ and the last term is due to the contribution of the ENOB. In the asymptotic limit, when both $\Bar{n}_R$ and $n_R^-$ tend towards infinity, we have
\begin{eqnarray} \label{eq: Lambda ENOB to 1 bit}
    \lim_{\overline{n}_R,n_R^-\rightarrow\infty} \Lambda_{\text{ENOB},r_0}=&&\lim_{\overline{n}_R,n_R^-\rightarrow\infty} \left(\Lambda_{\text{ideal}} + \log_2\left[\frac{\sqrt{n_R^-}\text{erf}\left(\frac{\frac{\delta V}{2\alpha_D}}{\sqrt{2\sigma_A^2}} \right)}{\sqrt{4\overline{n}_R}\text{erf}\left(\frac{\frac{\delta V}{2\alpha_D}}{\sqrt{2\sigma_Q^2}} \right)}\right]\right) \nonumber \\
    = &&1 + \lim_{\overline{n}_R,n_R^-\rightarrow\infty} \log_2\left(\frac{\overline{n}_R}{n_R^-}\right)\nonumber \\
    =&&1 
\end{eqnarray}
where in the first line, we approximate the error functions in the last term of $\Lambda_{\text{ENOB},r_0}$ using the Taylor expansion up to its first-order term, as both $(\delta V/2\alpha_D)/\sqrt{2\sigma_A^2} \ll 1$ and $(\delta V/2\alpha_D)/\sqrt{2\sigma_Q^2} \ll 1$. After expansion, the term inside the logarithm becomes $1$ and $\log_2(1)=0$, where the ENOB contribution vanishes. This also indicates that the effect of ENOB vanished as both $\Bar{n}_R$ and $n_R^-$ increase, and having a sufficiently lower/higher ENOB bit depth will also result in the same outcome. Therefore, the two protocols have $1$ bit of difference in randomness at the asymptotic limit of large $\Bar{n}_R$ and $n_R^-$.

\section{FPGA Resources and Architecture} \label{appendix: FPGA-Architecture-and-Analysis}

\subsection{Choice of Hashing Block Size}

Understanding the usage of FPGA resources and the hashing security parameter $\epsilon_{\text{hash}}$ with different hashing block sizes of $l\times h$ is necessary to choose the optimal parameters for the real-time SDI-QRNG operation. For this analysis, we will use the same experimental parameter as Sec.~\ref{sec: Real-Time SDI-QRNG Performance}. The $H_{\min,r_0}^{\text{SDI}} =3.354$ bits per sample at $r_0=0.513$ give an upper bound for the compression ratio $r\leq H_{\min,r_0}^{\text{SDI}}/b = 23.96\%$. We have opted for a conservative compression ratio of $r\approx20\%$ for the real-time operation. 

\begin{figure}[ht]
    \centering
    
    \subfigure[FPGA resources utilization.]{\includegraphics[width=0.48\linewidth]{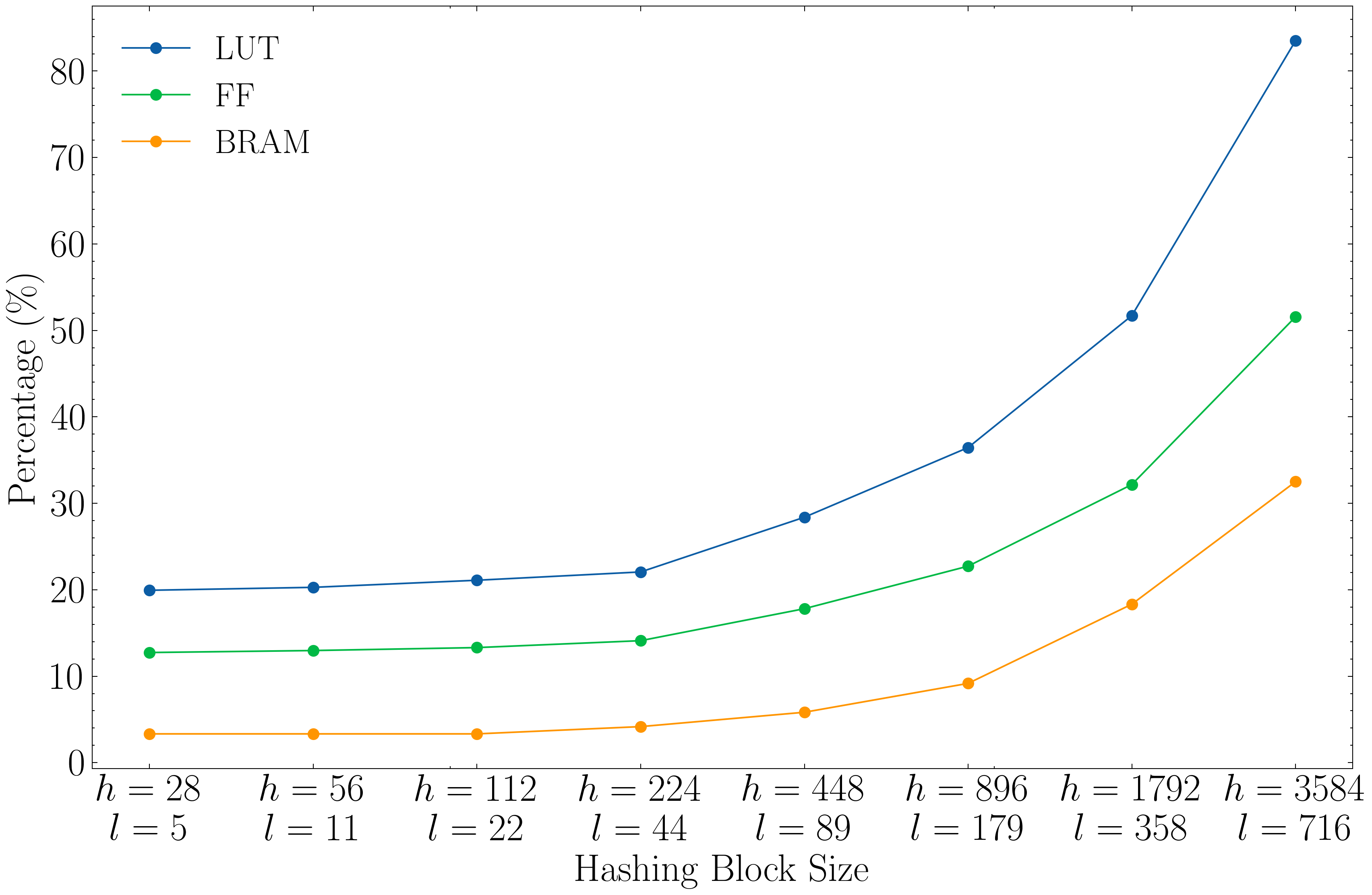}\label{fig: FPGA resources}}
    \centering
    \subfigure[Hashing failure probability $\epsilon_{\text{hash}}$.]{\includegraphics[width=0.48\linewidth]{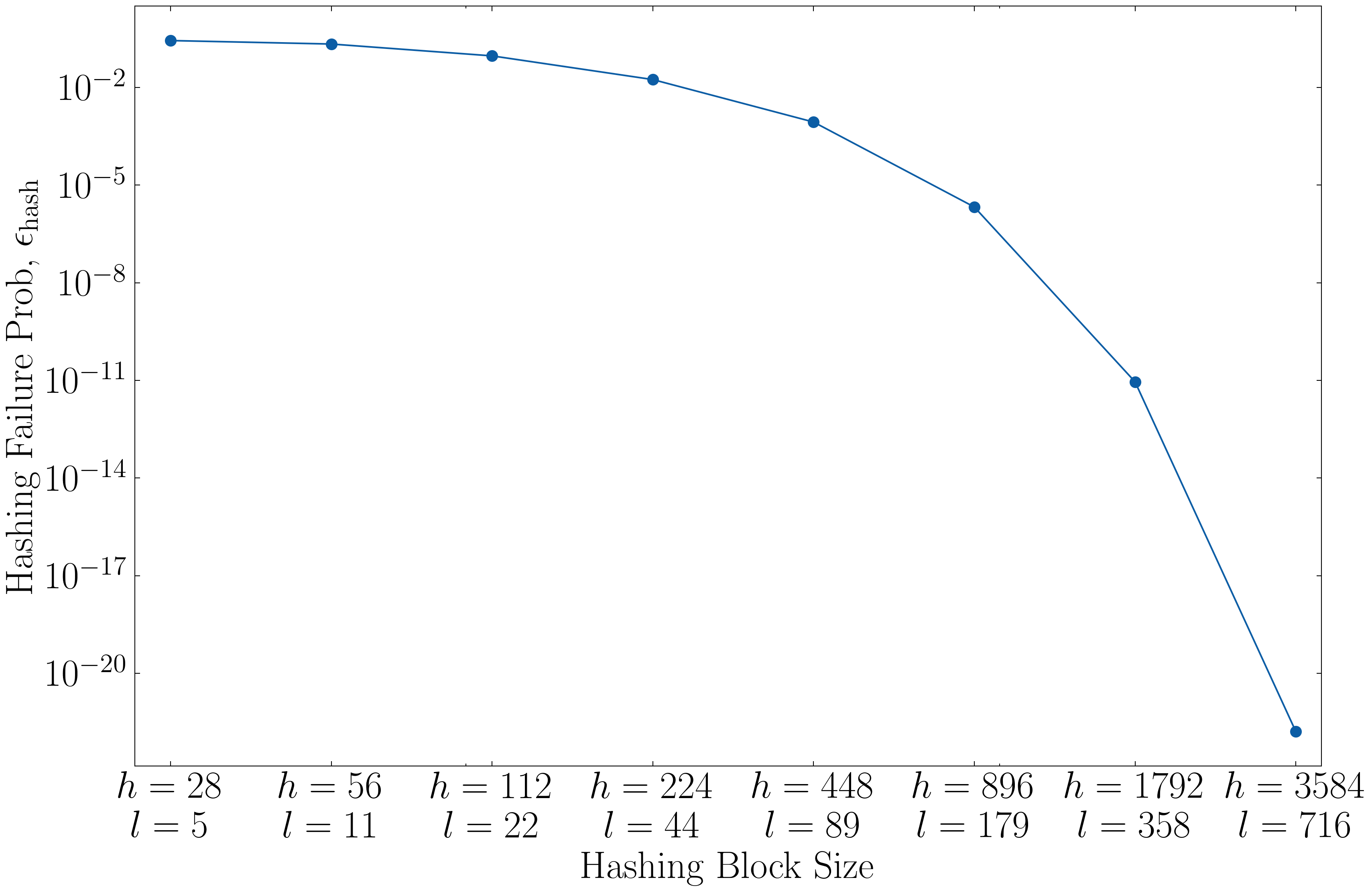}\label{fig: hashing block with ehash}}
    
    \caption{Overview of FPGA resources utilization and the hashing failure probability with different choices of hashing block sizes. The hashing block sizes are selected in the following manner: $h$ is chosen such that $h=2^k\times 14$, for $k\in [1,8]$, and $l$ is chosen such that $l/h \simeq 20\%$. (a): The FPGA resource utilization increases as the hashing block sizes increases, where the LUT is the first to be utilized. (b): The $\epsilon_{\text{hash}}$ decreases with increasing hashing block sizes, indicating that the larger block size is better for hashing.}
    \label{fig: FPGA resources compile}
\end{figure}

To find the optimal hashing parameters, we evaluate the usage of FPGA resources with different hashing block sizes~\cite{zheng20196}], as plotted in Fig.~\ref{fig: FPGA resources}. 
The FPGA resources to be analyzed are the programmable logics (Look-Up Tables (LUT) and Flip-Flops (FF)), as well as the Block Random Access Memory (BRAM) responsible for storing the Toeplitz matrix. The results of these parameters are obtained from the Vivado implementation report after designing the FPGA algorithm. This allows users to understand how well their algorithm will work on their FPGA board prior to deployment. In the Red Pitaya FPGA board, the total number of programmable logic available for LUT, FF, and BRAM are $17600$, $35200$, and $60$ ($2.1$Mb), respectively. The LUT is close to full utilization ($83.51\%$) when the hashing block size is $716 \times 3584$, whereas the FF and BRAM still have sufficient resources left. This illustrates that the LUT is the bottleneck in our FPGA board, and increasing the hashing block size any further will result in utilizing all the LUT first.

Moreover, the hashing security parameter, $\epsilon_{\text{hash}}$, must be small so that the overall composable security $\epsilon$ of the SDI-QRNG can also be small. For example, in Fig.~\ref{fig: hashing block with ehash}, to achieve a relatively small $\epsilon_{\text{hash}}$ such that $\epsilon$ can be lower than $10^{-10}$, the size of the hashing block must be at least $358 \times 1792$, while maintaining $r$ at approximately $20\%$. With these analysis done, the suitable security parameters and hashing block size for real-time SDI-QRNG can be determined.

Lastly, designing the FPGA algorithm with PYNQ has its own constraint when choosing the length of the output $l$. One of the hardware Intellectual Property (IP) that the PYNQ library supports to acquire the ADC measurement is the Direct Memory Access (DMA). In simple terms, the DMA manages the transfer of ADC measurements (in binary form) from the programmable logic (PL) fabric to the memory block in the processing system (PS). We note that our DMA is designed to transfer binary data only in a block size of $2^{k}$ bits, up to a maximum of $2^{10} = 1024$ bits. For example, if $l=358$ bits of binary are produced from the hashing, then a DMA block size of at least $2^9=512$ bits will be needed. Despite not filling up the block, the memory in the PS will still be allocated to receive $512$ bits of data rather than $358$ bits. With this in mind, maximizing the length of $l$ in each DMA block without wasting unnecessary PS memory resources is another factor to consider for our optimization. Taking the FPGA resource evaluation for different block sizes and hashing security parameter into account, we choose a block size of $512 \times 2562$, i.e. $l=512$ bits and $h=2562$ bits for our Toeplitz extractor. 

\subsection{FPGA Architecture}

 \begin{figure*}
    \centering
    \subfigure[]{\includegraphics[width=0.48\linewidth]{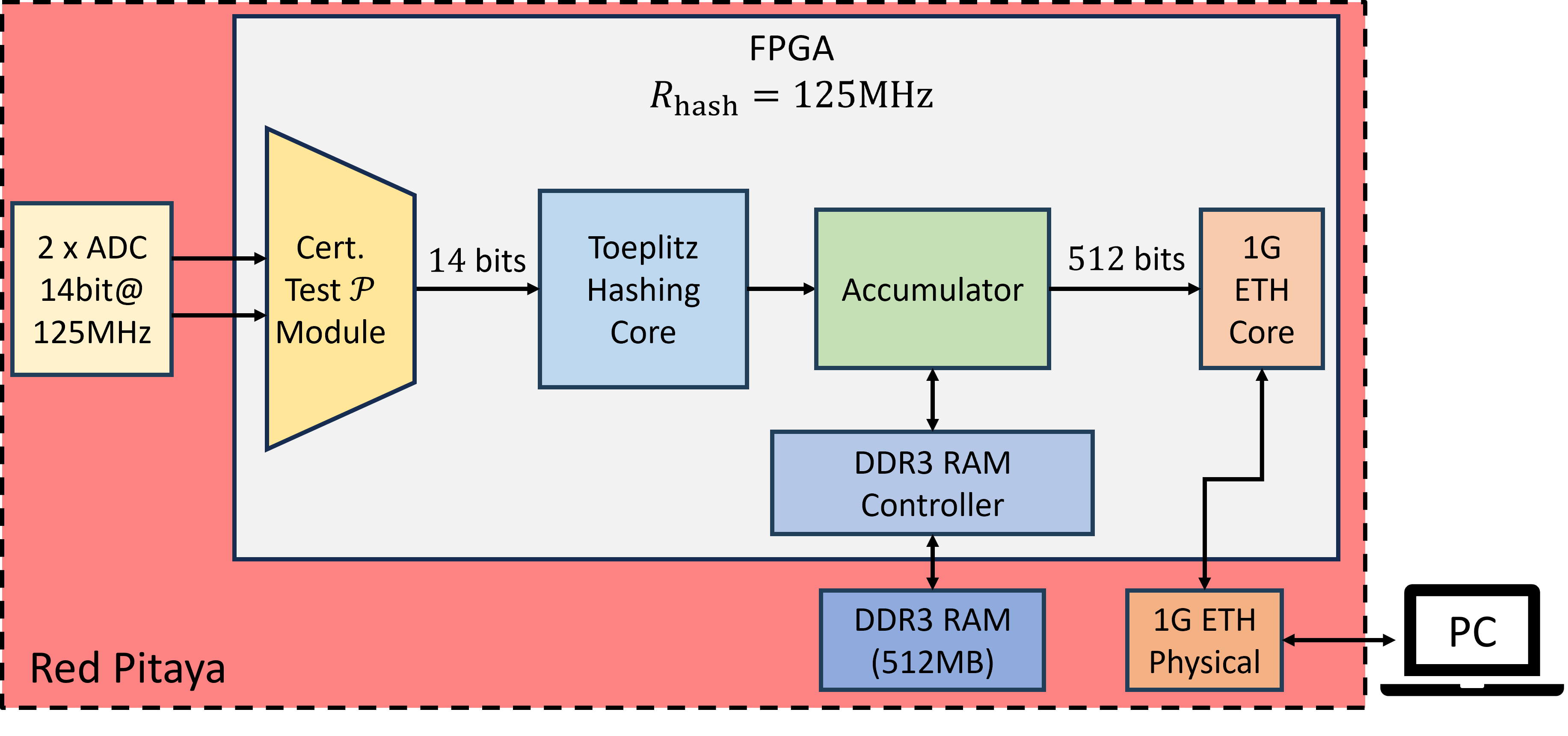}\label{fig: FPGA schematic}}
    \centering
    \subfigure[]{\includegraphics[width=0.49\linewidth]{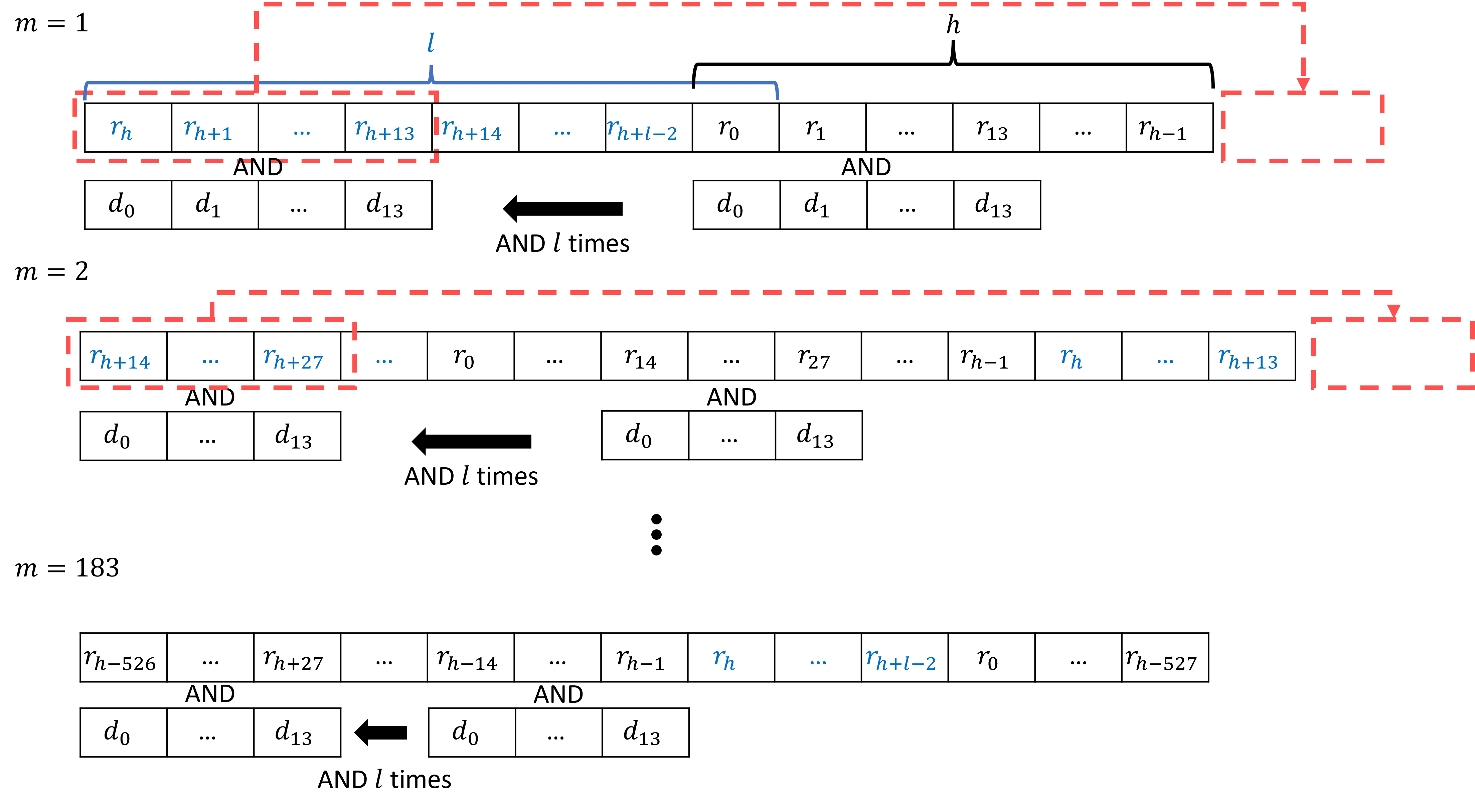}\label{fig: Toeplitz Hashing Schematic}}
    \caption{Overview of FPGA implementation (a): FPGA design schematic. DDR3: Double Date Rate 3, RAM: Random-Access Memory, ETH: Ethernet (b): Toeplitz hashing algorithm in the FPGA.}

\end{figure*}   

A high-level schematic of the FPGA architecture design is shown in Fig.~\ref{fig: FPGA schematic}. For $m=1$ round of measurement, samples from the two channels of the ADC, one for the certification measurement and the other for the randomness generation measurement, are sent into the certification test $\mathcal{P}$ module that assesses and rejects samples failing the test. Upon passing the test, it sends $b=14$ bits of the randomness generation measurement to the Toeplitz Hashing Core that performs randomness extraction. To save some on-chip memory resources in the FPGA, the matrix is represented in a binary string of length $l+h-1$. Subsequently, the $14$ bits entering will perform a bitwise "AND" operation with the corresponding $14$ bits substring from the Toeplitz binary string for $l=512$ times to produce a subhashed binary string at the end. This Toeplitz hashing process is illustrated in Fig.~\ref{fig: Toeplitz Hashing Schematic} for clarity. Afterwards, this subhashed string will enter the accumulator, and an "XOR" bitwise operation will be performed with other subhashed strings that were stored in the DDR3 RAM for the next few rounds. This process will repeat for $m=183$ rounds to produce the final hashed random bits of length $l$ and will be transferred to the computer via the Ethernet cable. This completes $t=1$ cycle of hashing. 

\section{Application for SDI-QRNG with ASE light source} \label{appendix: ASE-Based-SDI-QRNG-Theory}

The theoretical description of the ASE source is presented as follows. For one mode of ASE source\footnote{Note that the modes mentioned here do not have the same meaning as the "single mode" in single mode fiber. Instead, the modes here simply mean the number of degeneracy, $M$, in the ASE source.}, its photon statistics is equivalent to that of a thermal state, where it can be described by the Bose-Einstein distribution \cite{martin2015quantum,wong1998photon,pietralunga2003photon,yang2020randomness,li2021experimental}

\begin{equation}
    P\left(n,\Bar{n}\right) = \frac{\Bar{n}^n}{(1+\Bar{n})^{1+n}}
\end{equation}

\noindent where $P\left(n,\Bar{n}\right)$ is the probability of counting $n$ photons and $\Bar{n}$ is the average number of photons. Generally, an ASE source contains $M$ number of independent modes, is related to the ratio of its optical bandwidth $B_{\text{opt}}$ to the bandwidth of the photodetector $B_{\text{pd}}$ during detection. The photon statistics of the ASE source can be described by the $M$-fold degenerate Bose-Einstein distribution \cite{martin2015quantum,wong1998photon,pietralunga2003photon,yang2020randomness,li2021experimental}

\begin{equation}
    P\left(n,\Bar{n},M\right) = \frac{\Gamma(n+M)}{\Gamma(n+1)\Gamma(M)}\left(1+\frac{1}{\Bar{n}}\right)^{-n}\left(1+\Bar{n}\right)^{-M}
\end{equation}

\noindent where $\Gamma(\cdot)$ is the gamma function, $n$ is the number of photons per mode and $\Bar{n}$ is the average number of photons per mode. In addition, for an ASE source with a Gaussian power spectral density, $M$ is given by \cite{pietralunga2003photon,yang2020randomness,li2021experimental}

\begin{eqnarray}
    M  = s \frac{\pi\Tilde{B}^2}{\pi\Tilde{B}~\text{erf}(\sqrt{\pi} \Tilde{B})-\left[1-\text{exp}\left(-\pi \Tilde{B}^2\right)\right]}
\end{eqnarray}

\noindent where $\Tilde{B} = B_{\text{opt}}/B_{\text{pd}}$ and $s$ is the polarization degeneracy of the ASE source. For a polarized ASE source, we have $s=1$, while for an unpolarized ASE source, we have $s=2$. The average photon number for the ASE source with $M$ modes is denoted by $\Bar{n}_{\text{ASE},M}$. Thus, the number of photons in $M=1$ mode is

\begin{equation}
    \Bar{n}_{\text{ASE}} = \frac{\Bar{n}_{\text{ASE},M}}{M}
\end{equation}

\noindent with a variance of $\sigma_{\text{ASE}}^2 = \Bar{n}_{\text{ASE}} + (\Bar{n}_{\text{ASE}}^2/M)$ \cite{pietralunga2003photon}. Subsequently, by the sum of independent random variables, the variance of the ASE source with $M$ independent modes is \cite{li2021experimental}

\begin{equation}
    \sigma_{\text{ASE},M}^2 = \Bar{n}_{\text{ASE},M} + \Bar{n}_{\text{ASE}}^2.
\end{equation}

The characterization process of the ASE source can be performed by first placing a narrow bandpass filter before the measurement devices to ensure that the light entering is centered at $1550$nm and obtain the desired narrowband optical spectrum. This is done to ensure that the ASE source is operating in accordance with the assumption of the SDI protocol at $1550$nm. Subsequently, the voltage bound for the certification test $\mathcal{P}_{\text{ASE}}$ can be determined with an optical spectrum analyzer to measure $B_{\text{opt}}$ as it enters the certification photodetector. Once $B_{\text{opt}}$ is obtained, the number of $M$, $\Bar{n}_{\text{ASE}}$ and the variance $\sigma_{\text{ASE},M}^2$ could be obtained. With a large number of modes (usually for $M>100$) present in the ASE source, its photon distribution could be well modeled by a Gaussian distribution with a variance of $\sigma_{\text{ASE},M}^2$. This behavior has been verified separately in Ref.~\cite{yang2020randomness} and Ref.~\cite{li2021experimental}, and this usually holds in a higher optical power regime. Afterwards, the voltage bound of the certification test $\mathcal{P}_{\text{ASE}}$ could be obtained by using Eq.~\ref{eq: explicit cert test P}. 

\section{NIST Test Results} \label{appendix: NIST-Test-Results}

The NIST test results for the SDI-QRNG operation in Sec.~\ref{sec: Real-Time SDI-QRNG Performance} is presented in Table.~\ref{tab: NIST test}. $1$ Gbits of random binary data are collected and divided into 1000 sequences of 1 Mb for testing. The result shows that the random binary data successfully passed the NIST test suite.

\begin{table}[h]
\centering
\caption{NIST test results}
\begin{ruledtabular}
\begin{tabular*}{\linewidth}{@{\extracolsep{\fill}}lccc}
\multicolumn{4}{c}{\textbf{NIST Test}}\\ \toprule

Statistical Test & 
P-value & Proportion & Result
\\ \midrule

Frequency & 
0.073876 & 0.9910 & Pass \\

Block Frequency & 
0.257992 & 0.9920 & Pass \\      

Cumulative Sums & 
0.123324 & 0.9850 & Pass  \\

Runs & 
0.284119 & 0.9890 & Pass  \\ 

Longest Run & 
0.768789 & 0.9920 & Pass  \\

Rank & 
0.882397 & 0.9910 & Pass \\

FFT & 
0.126631 & 0.9900 & Pass  \\ 

Non-Overlapping Template & 
0.434772 & 0.9810 & Pass  \\ 

Overlapping Template  & 
0.145265 & 0.9910 & Pass \\

Universal & 
0.105580 & 0.9890 & Pass  \\

Approximate Entropy & 
0.124058 & 0.9860 & Pass  \\

Random Excursions & 
0.191052 & 0.9841 & Pass \\

Random Excursions Variant & 
0.826794 & 0.9825 & Pass  \\

Serial & 
0.898959 & 0.9850 & Pass  \\

Linear Complexity & 
0.776924 & 0.9900 & Pass  \\

\end{tabular*}
\end{ruledtabular}
\label{tab: NIST test}
\end{table}

%


\begin{thebibliography}{61}%
\makeatletter
\providecommand \@ifxundefined [1]{%
 \@ifx{#1\undefined}
}%
\providecommand \@ifnum [1]{%
 \ifnum #1\expandafter \@firstoftwo
 \else \expandafter \@secondoftwo
 \fi
}%
\providecommand \@ifx [1]{%
 \ifx #1\expandafter \@firstoftwo
 \else \expandafter \@secondoftwo
 \fi
}%
\providecommand \natexlab [1]{#1}%
\providecommand \enquote  [1]{``#1''}%
\providecommand \bibnamefont  [1]{#1}%
\providecommand \bibfnamefont [1]{#1}%
\providecommand \citenamefont [1]{#1}%
\providecommand \href@noop [0]{\@secondoftwo}%
\providecommand \href [0]{\begingroup \@sanitize@url \@href}%
\providecommand \@href[1]{\@@startlink{#1}\@@href}%
\providecommand \@@href[1]{\endgroup#1\@@endlink}%
\providecommand \@sanitize@url [0]{\catcode `\\12\catcode `\$12\catcode `\&12\catcode `\#12\catcode `\^12\catcode `\_12\catcode `\%12\relax}%
\providecommand \@@startlink[1]{}%
\providecommand \@@endlink[0]{}%
\providecommand \url  [0]{\begingroup\@sanitize@url \@url }%
\providecommand \@url [1]{\endgroup\@href {#1}{\urlprefix }}%
\providecommand \urlprefix  [0]{URL }%
\providecommand \Eprint [0]{\href }%
\providecommand \doibase [0]{https://doi.org/}%
\providecommand \selectlanguage [0]{\@gobble}%
\providecommand \bibinfo  [0]{\@secondoftwo}%
\providecommand \bibfield  [0]{\@secondoftwo}%
\providecommand \translation [1]{[#1]}%
\providecommand \BibitemOpen [0]{}%
\providecommand \bibitemStop [0]{}%
\providecommand \bibitemNoStop [0]{.\EOS\space}%
\providecommand \EOS [0]{\spacefactor3000\relax}%
\providecommand \BibitemShut  [1]{\csname bibitem#1\endcsname}%
\let\auto@bib@innerbib\@empty
\bibitem [{\citenamefont {Portmann}\ and\ \citenamefont {Renner}(2022)}]{RevModPhys.94.025008}%
  \BibitemOpen
  \bibfield  {author} {\bibinfo {author} {\bibfnamefont {C.}~\bibnamefont {Portmann}}\ and\ \bibinfo {author} {\bibfnamefont {R.}~\bibnamefont {Renner}},\ }\bibfield  {title} {\bibinfo {title} {Security in quantum cryptography},\ }\href {https://doi.org/10.1103/RevModPhys.94.025008} {\bibfield  {journal} {\bibinfo  {journal} {Rev. Mod. Phys.}\ }\textbf {\bibinfo {volume} {94}},\ \bibinfo {pages} {025008} (\bibinfo {year} {2022})}\BibitemShut {NoStop}%
\bibitem [{\citenamefont {Herrero-Collantes}\ and\ \citenamefont {Garcia-Escartin}(2017)}]{Herrero_Collantes_2017}%
  \BibitemOpen
  \bibfield  {author} {\bibinfo {author} {\bibfnamefont {M.}~\bibnamefont {Herrero-Collantes}}\ and\ \bibinfo {author} {\bibfnamefont {J.~C.}\ \bibnamefont {Garcia-Escartin}},\ }\bibfield  {title} {\bibinfo {title} {Quantum random number generators},\ }\bibfield  {journal} {\bibinfo  {journal} {Reviews of Modern Physics}\ }\textbf {\bibinfo {volume} {89}},\ \href {https://doi.org/10.1103/revmodphys.89.015004} {10.1103/revmodphys.89.015004} (\bibinfo {year} {2017})\BibitemShut {NoStop}%
\bibitem [{\citenamefont {Ma}\ \emph {et~al.}(2016)\citenamefont {Ma}, \citenamefont {Yuan}, \citenamefont {Cao}, \citenamefont {Qi},\ and\ \citenamefont {Zhang}}]{ma2016quantum}%
  \BibitemOpen
  \bibfield  {author} {\bibinfo {author} {\bibfnamefont {X.}~\bibnamefont {Ma}}, \bibinfo {author} {\bibfnamefont {X.}~\bibnamefont {Yuan}}, \bibinfo {author} {\bibfnamefont {Z.}~\bibnamefont {Cao}}, \bibinfo {author} {\bibfnamefont {B.}~\bibnamefont {Qi}},\ and\ \bibinfo {author} {\bibfnamefont {Z.}~\bibnamefont {Zhang}},\ }\bibfield  {title} {\bibinfo {title} {Quantum random number generation},\ }\href {https://doi.org/10.1038/npjqi.2016.21} {\bibfield  {journal} {\bibinfo  {journal} {npj Quantum Information}\ }\textbf {\bibinfo {volume} {2}},\ \bibinfo {pages} {1} (\bibinfo {year} {2016})}\BibitemShut {NoStop}%
\bibitem [{\citenamefont {Bell}(1964)}]{bell1964einstein}%
  \BibitemOpen
  \bibfield  {author} {\bibinfo {author} {\bibfnamefont {J.~S.}\ \bibnamefont {Bell}},\ }\bibfield  {title} {\bibinfo {title} {On the einstein podolsky rosen paradox},\ }\href@noop {} {\bibfield  {journal} {\bibinfo  {journal} {Physics Physique Fizika}\ }\textbf {\bibinfo {volume} {1}},\ \bibinfo {pages} {195} (\bibinfo {year} {1964})}\BibitemShut {NoStop}%
\bibitem [{\citenamefont {Ac{\'\i}n}\ and\ \citenamefont {Masanes}(2016)}]{acin2016certified}%
  \BibitemOpen
  \bibfield  {author} {\bibinfo {author} {\bibfnamefont {A.}~\bibnamefont {Ac{\'\i}n}}\ and\ \bibinfo {author} {\bibfnamefont {L.}~\bibnamefont {Masanes}},\ }\bibfield  {title} {\bibinfo {title} {Certified randomness in quantum physics},\ }\href {https://doi.org/10.1038/nature20119} {\bibfield  {journal} {\bibinfo  {journal} {Nature}\ }\textbf {\bibinfo {volume} {540}},\ \bibinfo {pages} {213} (\bibinfo {year} {2016})}\BibitemShut {NoStop}%
\bibitem [{\citenamefont {Liu}\ \emph {et~al.}(2018)\citenamefont {Liu}, \citenamefont {Zhao}, \citenamefont {Li}, \citenamefont {Guan}, \citenamefont {Zhang}, \citenamefont {Bai}, \citenamefont {Zhang}, \citenamefont {Liu}, \citenamefont {Wu}, \citenamefont {Yuan} \emph {et~al.}}]{liu2018device}%
  \BibitemOpen
  \bibfield  {author} {\bibinfo {author} {\bibfnamefont {Y.}~\bibnamefont {Liu}}, \bibinfo {author} {\bibfnamefont {Q.}~\bibnamefont {Zhao}}, \bibinfo {author} {\bibfnamefont {M.-H.}\ \bibnamefont {Li}}, \bibinfo {author} {\bibfnamefont {J.-Y.}\ \bibnamefont {Guan}}, \bibinfo {author} {\bibfnamefont {Y.}~\bibnamefont {Zhang}}, \bibinfo {author} {\bibfnamefont {B.}~\bibnamefont {Bai}}, \bibinfo {author} {\bibfnamefont {W.}~\bibnamefont {Zhang}}, \bibinfo {author} {\bibfnamefont {W.-Z.}\ \bibnamefont {Liu}}, \bibinfo {author} {\bibfnamefont {C.}~\bibnamefont {Wu}}, \bibinfo {author} {\bibfnamefont {X.}~\bibnamefont {Yuan}}, \emph {et~al.},\ }\bibfield  {title} {\bibinfo {title} {Device-independent quantum random-number generation},\ }\href {https://doi.org/10.1038/s41586-018-0559-3} {\bibfield  {journal} {\bibinfo  {journal} {Nature}\ }\textbf {\bibinfo {volume} {562}},\ \bibinfo {pages} {548} (\bibinfo {year} {2018})}\BibitemShut {NoStop}%
\bibitem [{\citenamefont {Pironio}\ \emph {et~al.}(2010)\citenamefont {Pironio}, \citenamefont {Ac{\'\i}n}, \citenamefont {Massar}, \citenamefont {de~La~Giroday}, \citenamefont {Matsukevich}, \citenamefont {Maunz}, \citenamefont {Olmschenk}, \citenamefont {Hayes}, \citenamefont {Luo}, \citenamefont {Manning} \emph {et~al.}}]{pironio2010random}%
  \BibitemOpen
  \bibfield  {author} {\bibinfo {author} {\bibfnamefont {S.}~\bibnamefont {Pironio}}, \bibinfo {author} {\bibfnamefont {A.}~\bibnamefont {Ac{\'\i}n}}, \bibinfo {author} {\bibfnamefont {S.}~\bibnamefont {Massar}}, \bibinfo {author} {\bibfnamefont {A.~B.}\ \bibnamefont {de~La~Giroday}}, \bibinfo {author} {\bibfnamefont {D.~N.}\ \bibnamefont {Matsukevich}}, \bibinfo {author} {\bibfnamefont {P.}~\bibnamefont {Maunz}}, \bibinfo {author} {\bibfnamefont {S.}~\bibnamefont {Olmschenk}}, \bibinfo {author} {\bibfnamefont {D.}~\bibnamefont {Hayes}}, \bibinfo {author} {\bibfnamefont {L.}~\bibnamefont {Luo}}, \bibinfo {author} {\bibfnamefont {T.~A.}\ \bibnamefont {Manning}}, \emph {et~al.},\ }\bibfield  {title} {\bibinfo {title} {Random numbers certified by bell’s theorem},\ }\href {https://doi.org/10.1038/nature09008} {\bibfield  {journal} {\bibinfo  {journal} {Nature}\ }\textbf {\bibinfo {volume} {464}},\ \bibinfo {pages} {1021} (\bibinfo {year} {2010})}\BibitemShut {NoStop}%
\bibitem [{\citenamefont {Bierhorst}\ \emph {et~al.}(2018)\citenamefont {Bierhorst}, \citenamefont {Knill}, \citenamefont {Glancy}, \citenamefont {Zhang}, \citenamefont {Mink}, \citenamefont {Jordan}, \citenamefont {Rommal}, \citenamefont {Liu}, \citenamefont {Christensen}, \citenamefont {Nam} \emph {et~al.}}]{Bierhorst2018}%
  \BibitemOpen
  \bibfield  {author} {\bibinfo {author} {\bibfnamefont {P.}~\bibnamefont {Bierhorst}}, \bibinfo {author} {\bibfnamefont {E.}~\bibnamefont {Knill}}, \bibinfo {author} {\bibfnamefont {S.}~\bibnamefont {Glancy}}, \bibinfo {author} {\bibfnamefont {Y.}~\bibnamefont {Zhang}}, \bibinfo {author} {\bibfnamefont {A.}~\bibnamefont {Mink}}, \bibinfo {author} {\bibfnamefont {S.}~\bibnamefont {Jordan}}, \bibinfo {author} {\bibfnamefont {A.}~\bibnamefont {Rommal}}, \bibinfo {author} {\bibfnamefont {Y.-K.}\ \bibnamefont {Liu}}, \bibinfo {author} {\bibfnamefont {B.}~\bibnamefont {Christensen}}, \bibinfo {author} {\bibfnamefont {S.~W.}\ \bibnamefont {Nam}}, \emph {et~al.},\ }\bibfield  {title} {\bibinfo {title} {Experimentally generated randomness certified by the impossibility of superluminal signals},\ }\href@noop {} {\bibfield  {journal} {\bibinfo  {journal} {Nature}\ }\textbf {\bibinfo {volume} {556}},\ \bibinfo {pages} {223} (\bibinfo {year} {2018})}\BibitemShut {NoStop}%
\bibitem [{\citenamefont {Gabriel}\ \emph {et~al.}(2010)\citenamefont {Gabriel}, \citenamefont {Wittmann}, \citenamefont {Sych}, \citenamefont {Dong}, \citenamefont {Mauerer}, \citenamefont {Andersen}, \citenamefont {Marquardt},\ and\ \citenamefont {Leuchs}}]{gabriel2010generator}%
  \BibitemOpen
  \bibfield  {author} {\bibinfo {author} {\bibfnamefont {C.}~\bibnamefont {Gabriel}}, \bibinfo {author} {\bibfnamefont {C.}~\bibnamefont {Wittmann}}, \bibinfo {author} {\bibfnamefont {D.}~\bibnamefont {Sych}}, \bibinfo {author} {\bibfnamefont {R.}~\bibnamefont {Dong}}, \bibinfo {author} {\bibfnamefont {W.}~\bibnamefont {Mauerer}}, \bibinfo {author} {\bibfnamefont {U.~L.}\ \bibnamefont {Andersen}}, \bibinfo {author} {\bibfnamefont {C.}~\bibnamefont {Marquardt}},\ and\ \bibinfo {author} {\bibfnamefont {G.}~\bibnamefont {Leuchs}},\ }\bibfield  {title} {\bibinfo {title} {A generator for unique quantum random numbers based on vacuum states},\ }\href {https://doi.org/10.1038/nphoton.2010.197} {\bibfield  {journal} {\bibinfo  {journal} {Nature Photonics}\ }\textbf {\bibinfo {volume} {4}},\ \bibinfo {pages} {711} (\bibinfo {year} {2010})}\BibitemShut {NoStop}%
\bibitem [{\citenamefont {Haw}\ \emph {et~al.}(2015)\citenamefont {Haw}, \citenamefont {Assad}, \citenamefont {Lance}, \citenamefont {Ng}, \citenamefont {Sharma}, \citenamefont {Lam},\ and\ \citenamefont {Symul}}]{haw2015maximization}%
  \BibitemOpen
  \bibfield  {author} {\bibinfo {author} {\bibfnamefont {J.~Y.}\ \bibnamefont {Haw}}, \bibinfo {author} {\bibfnamefont {S.~M.}\ \bibnamefont {Assad}}, \bibinfo {author} {\bibfnamefont {A.~M.}\ \bibnamefont {Lance}}, \bibinfo {author} {\bibfnamefont {N.~H.~Y.}\ \bibnamefont {Ng}}, \bibinfo {author} {\bibfnamefont {V.}~\bibnamefont {Sharma}}, \bibinfo {author} {\bibfnamefont {P.~K.}\ \bibnamefont {Lam}},\ and\ \bibinfo {author} {\bibfnamefont {T.}~\bibnamefont {Symul}},\ }\bibfield  {title} {\bibinfo {title} {Maximization of extractable randomness in a quantum random-number generator},\ }\href {https://doi.org/10.1103/PhysRevApplied.3.054004} {\bibfield  {journal} {\bibinfo  {journal} {Phys. Rev. Appl.}\ }\textbf {\bibinfo {volume} {3}},\ \bibinfo {pages} {054004} (\bibinfo {year} {2015})}\BibitemShut {NoStop}%
\bibitem [{\citenamefont {Zheng}\ \emph {et~al.}(2019)\citenamefont {Zheng}, \citenamefont {Zhang}, \citenamefont {Huang}, \citenamefont {Yu},\ and\ \citenamefont {Guo}}]{zheng20196}%
  \BibitemOpen
  \bibfield  {author} {\bibinfo {author} {\bibfnamefont {Z.}~\bibnamefont {Zheng}}, \bibinfo {author} {\bibfnamefont {Y.}~\bibnamefont {Zhang}}, \bibinfo {author} {\bibfnamefont {W.}~\bibnamefont {Huang}}, \bibinfo {author} {\bibfnamefont {S.}~\bibnamefont {Yu}},\ and\ \bibinfo {author} {\bibfnamefont {H.}~\bibnamefont {Guo}},\ }\bibfield  {title} {\bibinfo {title} {6 gbps real-time optical quantum random number generator based on vacuum fluctuation},\ }\bibfield  {journal} {\bibinfo  {journal} {Review of Scientific Instruments}\ }\textbf {\bibinfo {volume} {90}},\ \href {https://doi.org/10.1063/1.5078547} {10.1063/1.5078547} (\bibinfo {year} {2019})\BibitemShut {NoStop}%
\bibitem [{\citenamefont {Shi}\ \emph {et~al.}(2016)\citenamefont {Shi}, \citenamefont {Chng},\ and\ \citenamefont {Kurtsiefer}}]{christianQRNG}%
  \BibitemOpen
  \bibfield  {author} {\bibinfo {author} {\bibfnamefont {Y.}~\bibnamefont {Shi}}, \bibinfo {author} {\bibfnamefont {B.}~\bibnamefont {Chng}},\ and\ \bibinfo {author} {\bibfnamefont {C.}~\bibnamefont {Kurtsiefer}},\ }\bibfield  {title} {\bibinfo {title} {Random numbers from vacuum fluctuations},\ }\bibfield  {journal} {\bibinfo  {journal} {Applied Physics Letters}\ }\textbf {\bibinfo {volume} {109}},\ \href {https://doi.org/10.1063/1.4959887} {10.1063/1.4959887} (\bibinfo {year} {2016})\BibitemShut {NoStop}%
\bibitem [{\citenamefont {Bruynsteen}\ \emph {et~al.}(2023)\citenamefont {Bruynsteen}, \citenamefont {Gehring}, \citenamefont {Lupo}, \citenamefont {Bauwelinck},\ and\ \citenamefont {Yin}}]{bruynsteen2023100}%
  \BibitemOpen
  \bibfield  {author} {\bibinfo {author} {\bibfnamefont {C.}~\bibnamefont {Bruynsteen}}, \bibinfo {author} {\bibfnamefont {T.}~\bibnamefont {Gehring}}, \bibinfo {author} {\bibfnamefont {C.}~\bibnamefont {Lupo}}, \bibinfo {author} {\bibfnamefont {J.}~\bibnamefont {Bauwelinck}},\ and\ \bibinfo {author} {\bibfnamefont {X.}~\bibnamefont {Yin}},\ }\bibfield  {title} {\bibinfo {title} {100-gbit/s integrated quantum random number generator based on vacuum fluctuations},\ }\href {https://doi.org/10.1103/PRXQuantum.4.010330} {\bibfield  {journal} {\bibinfo  {journal} {PRX Quantum}\ }\textbf {\bibinfo {volume} {4}},\ \bibinfo {pages} {010330} (\bibinfo {year} {2023})}\BibitemShut {NoStop}%
\bibitem [{\citenamefont {Nie}\ \emph {et~al.}(2015)\citenamefont {Nie}, \citenamefont {Huang}, \citenamefont {Liu}, \citenamefont {Payne}, \citenamefont {Zhang},\ and\ \citenamefont {Pan}}]{Nie68Gbps}%
  \BibitemOpen
  \bibfield  {author} {\bibinfo {author} {\bibfnamefont {Y.-Q.}\ \bibnamefont {Nie}}, \bibinfo {author} {\bibfnamefont {L.}~\bibnamefont {Huang}}, \bibinfo {author} {\bibfnamefont {Y.}~\bibnamefont {Liu}}, \bibinfo {author} {\bibfnamefont {F.}~\bibnamefont {Payne}}, \bibinfo {author} {\bibfnamefont {J.}~\bibnamefont {Zhang}},\ and\ \bibinfo {author} {\bibfnamefont {J.-W.}\ \bibnamefont {Pan}},\ }\bibfield  {title} {\bibinfo {title} {The generation of 68 gbps quantum random number by measuring laser phase fluctuations},\ }\bibfield  {journal} {\bibinfo  {journal} {Review of Scientific Instruments}\ }\textbf {\bibinfo {volume} {86}},\ \href {https://doi.org/10.1063/1.4922417} {10.1063/1.4922417} (\bibinfo {year} {2015})\BibitemShut {NoStop}%
\bibitem [{\citenamefont {Drahi}\ \emph {et~al.}(2020)\citenamefont {Drahi}, \citenamefont {Walk}, \citenamefont {Hoban}, \citenamefont {Fedorov}, \citenamefont {Shakhovoy}, \citenamefont {Feimov}, \citenamefont {Kurochkin}, \citenamefont {Kolthammer}, \citenamefont {Nunn}, \citenamefont {Barrett},\ and\ \citenamefont {Walmsley}}]{drahi2020certified}%
  \BibitemOpen
  \bibfield  {author} {\bibinfo {author} {\bibfnamefont {D.}~\bibnamefont {Drahi}}, \bibinfo {author} {\bibfnamefont {N.}~\bibnamefont {Walk}}, \bibinfo {author} {\bibfnamefont {M.~J.}\ \bibnamefont {Hoban}}, \bibinfo {author} {\bibfnamefont {A.~K.}\ \bibnamefont {Fedorov}}, \bibinfo {author} {\bibfnamefont {R.}~\bibnamefont {Shakhovoy}}, \bibinfo {author} {\bibfnamefont {A.}~\bibnamefont {Feimov}}, \bibinfo {author} {\bibfnamefont {Y.}~\bibnamefont {Kurochkin}}, \bibinfo {author} {\bibfnamefont {W.~S.}\ \bibnamefont {Kolthammer}}, \bibinfo {author} {\bibfnamefont {J.}~\bibnamefont {Nunn}}, \bibinfo {author} {\bibfnamefont {J.}~\bibnamefont {Barrett}},\ and\ \bibinfo {author} {\bibfnamefont {I.~A.}\ \bibnamefont {Walmsley}},\ }\bibfield  {title} {\bibinfo {title} {Certified quantum random numbers from untrusted light},\ }\href {https://doi.org/10.1103/PhysRevX.10.041048} {\bibfield  {journal} {\bibinfo  {journal} {Phys. Rev. X}\ }\textbf {\bibinfo {volume} {10}},\ \bibinfo {pages} {041048} (\bibinfo {year}
  {2020})}\BibitemShut {NoStop}%
\bibitem [{\citenamefont {Cao}\ \emph {et~al.}(2016)\citenamefont {Cao}, \citenamefont {Zhou}, \citenamefont {Yuan},\ and\ \citenamefont {Ma}}]{cao2016source}%
  \BibitemOpen
  \bibfield  {author} {\bibinfo {author} {\bibfnamefont {Z.}~\bibnamefont {Cao}}, \bibinfo {author} {\bibfnamefont {H.}~\bibnamefont {Zhou}}, \bibinfo {author} {\bibfnamefont {X.}~\bibnamefont {Yuan}},\ and\ \bibinfo {author} {\bibfnamefont {X.}~\bibnamefont {Ma}},\ }\bibfield  {title} {\bibinfo {title} {Source-independent quantum random number generation},\ }\href {https://doi.org/10.1103/PhysRevX.6.011020} {\bibfield  {journal} {\bibinfo  {journal} {Phys. Rev. X}\ }\textbf {\bibinfo {volume} {6}},\ \bibinfo {pages} {011020} (\bibinfo {year} {2016})}\BibitemShut {NoStop}%
\bibitem [{\citenamefont {Marangon}\ \emph {et~al.}(2017)\citenamefont {Marangon}, \citenamefont {Vallone},\ and\ \citenamefont {Villoresi}}]{marangon2017source}%
  \BibitemOpen
  \bibfield  {author} {\bibinfo {author} {\bibfnamefont {D.~G.}\ \bibnamefont {Marangon}}, \bibinfo {author} {\bibfnamefont {G.}~\bibnamefont {Vallone}},\ and\ \bibinfo {author} {\bibfnamefont {P.}~\bibnamefont {Villoresi}},\ }\bibfield  {title} {\bibinfo {title} {Source-device-independent ultrafast quantum random number generation},\ }\href {https://doi.org/10.1103/PhysRevLett.118.060503} {\bibfield  {journal} {\bibinfo  {journal} {Phys. Rev. Lett.}\ }\textbf {\bibinfo {volume} {118}},\ \bibinfo {pages} {060503} (\bibinfo {year} {2017})}\BibitemShut {NoStop}%
\bibitem [{\citenamefont {Smith}\ \emph {et~al.}(2019)\citenamefont {Smith}, \citenamefont {Marangon}, \citenamefont {Lucamarini}, \citenamefont {Yuan},\ and\ \citenamefont {Shields}}]{smith2019simple}%
  \BibitemOpen
  \bibfield  {author} {\bibinfo {author} {\bibfnamefont {P.~R.}\ \bibnamefont {Smith}}, \bibinfo {author} {\bibfnamefont {D.~G.}\ \bibnamefont {Marangon}}, \bibinfo {author} {\bibfnamefont {M.}~\bibnamefont {Lucamarini}}, \bibinfo {author} {\bibfnamefont {Z.~L.}\ \bibnamefont {Yuan}},\ and\ \bibinfo {author} {\bibfnamefont {A.~J.}\ \bibnamefont {Shields}},\ }\bibfield  {title} {\bibinfo {title} {Simple source device-independent continuous-variable quantum random number generator},\ }\href {https://doi.org/10.1103/PhysRevA.99.062326} {\bibfield  {journal} {\bibinfo  {journal} {Phys. Rev. A}\ }\textbf {\bibinfo {volume} {99}},\ \bibinfo {pages} {062326} (\bibinfo {year} {2019})}\BibitemShut {NoStop}%
\bibitem [{\citenamefont {Cheng}\ \emph {et~al.}(2022)\citenamefont {Cheng}, \citenamefont {Qin}, \citenamefont {Liang}, \citenamefont {Li}, \citenamefont {Yan}, \citenamefont {Jia},\ and\ \citenamefont {Peng}}]{cheng2022mutually}%
  \BibitemOpen
  \bibfield  {author} {\bibinfo {author} {\bibfnamefont {J.}~\bibnamefont {Cheng}}, \bibinfo {author} {\bibfnamefont {J.}~\bibnamefont {Qin}}, \bibinfo {author} {\bibfnamefont {S.}~\bibnamefont {Liang}}, \bibinfo {author} {\bibfnamefont {J.}~\bibnamefont {Li}}, \bibinfo {author} {\bibfnamefont {Z.}~\bibnamefont {Yan}}, \bibinfo {author} {\bibfnamefont {X.}~\bibnamefont {Jia}},\ and\ \bibinfo {author} {\bibfnamefont {K.}~\bibnamefont {Peng}},\ }\bibfield  {title} {\bibinfo {title} {Mutually testing source-device-independent quantum random number generator},\ }\href {https://doi.org/10.1364/PRJ.444853} {\bibfield  {journal} {\bibinfo  {journal} {Photonics Research}\ }\textbf {\bibinfo {volume} {10}},\ \bibinfo {pages} {646} (\bibinfo {year} {2022})}\BibitemShut {NoStop}%
\bibitem [{\citenamefont {Michel}\ \emph {et~al.}(2019)\citenamefont {Michel}, \citenamefont {Haw}, \citenamefont {Marangon}, \citenamefont {Thearle}, \citenamefont {Vallone}, \citenamefont {Villoresi}, \citenamefont {Lam},\ and\ \citenamefont {Assad}}]{michel2019real}%
  \BibitemOpen
  \bibfield  {author} {\bibinfo {author} {\bibfnamefont {T.}~\bibnamefont {Michel}}, \bibinfo {author} {\bibfnamefont {J.~Y.}\ \bibnamefont {Haw}}, \bibinfo {author} {\bibfnamefont {D.~G.}\ \bibnamefont {Marangon}}, \bibinfo {author} {\bibfnamefont {O.}~\bibnamefont {Thearle}}, \bibinfo {author} {\bibfnamefont {G.}~\bibnamefont {Vallone}}, \bibinfo {author} {\bibfnamefont {P.}~\bibnamefont {Villoresi}}, \bibinfo {author} {\bibfnamefont {P.~K.}\ \bibnamefont {Lam}},\ and\ \bibinfo {author} {\bibfnamefont {S.~M.}\ \bibnamefont {Assad}},\ }\bibfield  {title} {\bibinfo {title} {Real-time source-independent quantum random-number generator with squeezed states},\ }\href {https://doi.org/10.1103/PhysRevApplied.12.034017} {\bibfield  {journal} {\bibinfo  {journal} {Phys. Rev. Appl.}\ }\textbf {\bibinfo {volume} {12}},\ \bibinfo {pages} {034017} (\bibinfo {year} {2019})}\BibitemShut {NoStop}%
\bibitem [{\citenamefont {Avesani}\ \emph {et~al.}(2018)\citenamefont {Avesani}, \citenamefont {Marangon}, \citenamefont {Vallone},\ and\ \citenamefont {Villoresi}}]{avesani2018source}%
  \BibitemOpen
  \bibfield  {author} {\bibinfo {author} {\bibfnamefont {M.}~\bibnamefont {Avesani}}, \bibinfo {author} {\bibfnamefont {D.~G.}\ \bibnamefont {Marangon}}, \bibinfo {author} {\bibfnamefont {G.}~\bibnamefont {Vallone}},\ and\ \bibinfo {author} {\bibfnamefont {P.}~\bibnamefont {Villoresi}},\ }\bibfield  {title} {\bibinfo {title} {Source-device-independent heterodyne-based quantum random number generator at 17 gbps},\ }\href {https://doi.org/https://doi.org/10.1038/s41467-018-07585-0} {\bibfield  {journal} {\bibinfo  {journal} {Nature communications}\ }\textbf {\bibinfo {volume} {9}},\ \bibinfo {pages} {5365} (\bibinfo {year} {2018})}\BibitemShut {NoStop}%
\bibitem [{\citenamefont {Zhang}\ \emph {et~al.}(2023)\citenamefont {Zhang}, \citenamefont {Yang}, \citenamefont {Li}, \citenamefont {Sun}, \citenamefont {Liu}, \citenamefont {Wei}, \citenamefont {Duan}, \citenamefont {Xie}, \citenamefont {Gong},\ and\ \citenamefont {Zhu}}]{zhang2023realization}%
  \BibitemOpen
  \bibfield  {author} {\bibinfo {author} {\bibfnamefont {J.-N.}\ \bibnamefont {Zhang}}, \bibinfo {author} {\bibfnamefont {R.}~\bibnamefont {Yang}}, \bibinfo {author} {\bibfnamefont {X.}~\bibnamefont {Li}}, \bibinfo {author} {\bibfnamefont {C.-W.}\ \bibnamefont {Sun}}, \bibinfo {author} {\bibfnamefont {Y.-C.}\ \bibnamefont {Liu}}, \bibinfo {author} {\bibfnamefont {Y.}~\bibnamefont {Wei}}, \bibinfo {author} {\bibfnamefont {J.-C.}\ \bibnamefont {Duan}}, \bibinfo {author} {\bibfnamefont {Z.}~\bibnamefont {Xie}}, \bibinfo {author} {\bibfnamefont {Y.-X.}\ \bibnamefont {Gong}},\ and\ \bibinfo {author} {\bibfnamefont {S.-N.}\ \bibnamefont {Zhu}},\ }\bibfield  {title} {\bibinfo {title} {Realization of a source-device-independent quantum random number generator secured by nonlocal dispersion cancellation},\ }\href {https://doi.org/10.1117/1.AP.5.3.036003} {\bibfield  {journal} {\bibinfo  {journal} {Advanced Photonics}\ }\textbf {\bibinfo {volume} {5}},\ \bibinfo {pages} {036003} (\bibinfo {year} {2023})}\BibitemShut
  {NoStop}%
\bibitem [{\citenamefont {Cao}\ \emph {et~al.}(2015)\citenamefont {Cao}, \citenamefont {Zhou},\ and\ \citenamefont {Ma}}]{cao2015loss}%
  \BibitemOpen
  \bibfield  {author} {\bibinfo {author} {\bibfnamefont {Z.}~\bibnamefont {Cao}}, \bibinfo {author} {\bibfnamefont {H.}~\bibnamefont {Zhou}},\ and\ \bibinfo {author} {\bibfnamefont {X.}~\bibnamefont {Ma}},\ }\bibfield  {title} {\bibinfo {title} {Loss-tolerant measurement-device-independent quantum random number generation},\ }\href {https://doi.org/10.1088/1367-2630/17/12/125011} {\bibfield  {journal} {\bibinfo  {journal} {New Journal of Physics}\ }\textbf {\bibinfo {volume} {17}},\ \bibinfo {pages} {125011} (\bibinfo {year} {2015})}\BibitemShut {NoStop}%
\bibitem [{\citenamefont {Chaturvedi}\ and\ \citenamefont {Banik}(2015)}]{chaturvedi2015measurement}%
  \BibitemOpen
  \bibfield  {author} {\bibinfo {author} {\bibfnamefont {A.}~\bibnamefont {Chaturvedi}}\ and\ \bibinfo {author} {\bibfnamefont {M.}~\bibnamefont {Banik}},\ }\bibfield  {title} {\bibinfo {title} {Measurement-device--independent randomness from local entangled states},\ }\href {https://doi.org/10.1209/0295-5075/112/30003} {\bibfield  {journal} {\bibinfo  {journal} {Europhysics Letters}\ }\textbf {\bibinfo {volume} {112}},\ \bibinfo {pages} {30003} (\bibinfo {year} {2015})}\BibitemShut {NoStop}%
\bibitem [{\citenamefont {Nie}\ \emph {et~al.}(2016)\citenamefont {Nie}, \citenamefont {Guan}, \citenamefont {Zhou}, \citenamefont {Zhang}, \citenamefont {Ma}, \citenamefont {Zhang},\ and\ \citenamefont {Pan}}]{nie2016experimental}%
  \BibitemOpen
  \bibfield  {author} {\bibinfo {author} {\bibfnamefont {Y.-Q.}\ \bibnamefont {Nie}}, \bibinfo {author} {\bibfnamefont {J.-Y.}\ \bibnamefont {Guan}}, \bibinfo {author} {\bibfnamefont {H.}~\bibnamefont {Zhou}}, \bibinfo {author} {\bibfnamefont {Q.}~\bibnamefont {Zhang}}, \bibinfo {author} {\bibfnamefont {X.}~\bibnamefont {Ma}}, \bibinfo {author} {\bibfnamefont {J.}~\bibnamefont {Zhang}},\ and\ \bibinfo {author} {\bibfnamefont {J.-W.}\ \bibnamefont {Pan}},\ }\bibfield  {title} {\bibinfo {title} {Experimental measurement-device-independent quantum random-number generation},\ }\href {https://doi.org/10.1103/PhysRevA.94.060301} {\bibfield  {journal} {\bibinfo  {journal} {Phys. Rev. A}\ }\textbf {\bibinfo {volume} {94}},\ \bibinfo {pages} {060301} (\bibinfo {year} {2016})}\BibitemShut {NoStop}%
\bibitem [{\citenamefont {Wang}\ \emph {et~al.}(2023)\citenamefont {Wang}, \citenamefont {Primaatmaja}, \citenamefont {Ng}, \citenamefont {Haw}, \citenamefont {Ho}, \citenamefont {Zhang}, \citenamefont {Zhang},\ and\ \citenamefont {Lim}}]{wang2023provably}%
  \BibitemOpen
  \bibfield  {author} {\bibinfo {author} {\bibfnamefont {C.}~\bibnamefont {Wang}}, \bibinfo {author} {\bibfnamefont {I.~W.}\ \bibnamefont {Primaatmaja}}, \bibinfo {author} {\bibfnamefont {H.~J.}\ \bibnamefont {Ng}}, \bibinfo {author} {\bibfnamefont {J.~Y.}\ \bibnamefont {Haw}}, \bibinfo {author} {\bibfnamefont {R.}~\bibnamefont {Ho}}, \bibinfo {author} {\bibfnamefont {J.}~\bibnamefont {Zhang}}, \bibinfo {author} {\bibfnamefont {G.}~\bibnamefont {Zhang}},\ and\ \bibinfo {author} {\bibfnamefont {C.}~\bibnamefont {Lim}},\ }\bibfield  {title} {\bibinfo {title} {Provably-secure quantum randomness expansion with uncharacterised homodyne detection},\ }\href {https://doi.org/https://doi.org/10.1038/s41467-022-35556-z} {\bibfield  {journal} {\bibinfo  {journal} {Nature Communications}\ }\textbf {\bibinfo {volume} {14}},\ \bibinfo {pages} {316} (\bibinfo {year} {2023})}\BibitemShut {NoStop}%
\bibitem [{\citenamefont {Lunghi}\ \emph {et~al.}(2015)\citenamefont {Lunghi}, \citenamefont {Brask}, \citenamefont {Lim}, \citenamefont {Lavigne}, \citenamefont {Bowles}, \citenamefont {Martin}, \citenamefont {Zbinden},\ and\ \citenamefont {Brunner}}]{lunghi2015self}%
  \BibitemOpen
  \bibfield  {author} {\bibinfo {author} {\bibfnamefont {T.}~\bibnamefont {Lunghi}}, \bibinfo {author} {\bibfnamefont {J.~B.}\ \bibnamefont {Brask}}, \bibinfo {author} {\bibfnamefont {C.~C.~W.}\ \bibnamefont {Lim}}, \bibinfo {author} {\bibfnamefont {Q.}~\bibnamefont {Lavigne}}, \bibinfo {author} {\bibfnamefont {J.}~\bibnamefont {Bowles}}, \bibinfo {author} {\bibfnamefont {A.}~\bibnamefont {Martin}}, \bibinfo {author} {\bibfnamefont {H.}~\bibnamefont {Zbinden}},\ and\ \bibinfo {author} {\bibfnamefont {N.}~\bibnamefont {Brunner}},\ }\bibfield  {title} {\bibinfo {title} {Self-testing quantum random number generator},\ }\href {https://doi.org/http://dx.doi.org/10.1103/PhysRevLett.114.150501} {\bibfield  {journal} {\bibinfo  {journal} {Physical review letters}\ }\textbf {\bibinfo {volume} {114}},\ \bibinfo {pages} {150501} (\bibinfo {year} {2015})}\BibitemShut {NoStop}%
\bibitem [{\citenamefont {Mironowicz}\ \emph {et~al.}(2021)\citenamefont {Mironowicz}, \citenamefont {Ca{\~n}as}, \citenamefont {Cari{\~n}e}, \citenamefont {G{\'o}mez}, \citenamefont {Barra}, \citenamefont {Cabello}, \citenamefont {Xavier}, \citenamefont {Lima},\ and\ \citenamefont {Paw{\l}owski}}]{mironowicz2021quantum}%
  \BibitemOpen
  \bibfield  {author} {\bibinfo {author} {\bibfnamefont {P.}~\bibnamefont {Mironowicz}}, \bibinfo {author} {\bibfnamefont {G.}~\bibnamefont {Ca{\~n}as}}, \bibinfo {author} {\bibfnamefont {J.}~\bibnamefont {Cari{\~n}e}}, \bibinfo {author} {\bibfnamefont {E.~S.}\ \bibnamefont {G{\'o}mez}}, \bibinfo {author} {\bibfnamefont {J.~F.}\ \bibnamefont {Barra}}, \bibinfo {author} {\bibfnamefont {A.}~\bibnamefont {Cabello}}, \bibinfo {author} {\bibfnamefont {G.~B.}\ \bibnamefont {Xavier}}, \bibinfo {author} {\bibfnamefont {G.}~\bibnamefont {Lima}},\ and\ \bibinfo {author} {\bibfnamefont {M.}~\bibnamefont {Paw{\l}owski}},\ }\bibfield  {title} {\bibinfo {title} {Quantum randomness protected against detection loophole attacks},\ }\href {https://doi.org/https://doi.org/10.1007/s11128-020-02948-3} {\bibfield  {journal} {\bibinfo  {journal} {Quantum Information Processing}\ }\textbf {\bibinfo {volume} {20}},\ \bibinfo {pages} {1} (\bibinfo {year} {2021})}\BibitemShut {NoStop}%
\bibitem [{\citenamefont {Rusca}\ \emph {et~al.}(2020)\citenamefont {Rusca}, \citenamefont {Tebyanian}, \citenamefont {Martin},\ and\ \citenamefont {Zbinden}}]{rusca2020fast}%
  \BibitemOpen
  \bibfield  {author} {\bibinfo {author} {\bibfnamefont {D.}~\bibnamefont {Rusca}}, \bibinfo {author} {\bibfnamefont {H.}~\bibnamefont {Tebyanian}}, \bibinfo {author} {\bibfnamefont {A.}~\bibnamefont {Martin}},\ and\ \bibinfo {author} {\bibfnamefont {H.}~\bibnamefont {Zbinden}},\ }\bibfield  {title} {\bibinfo {title} {Fast self-testing quantum random number generator based on homodyne detection},\ }\bibfield  {journal} {\bibinfo  {journal} {Applied Physics Letters}\ }\textbf {\bibinfo {volume} {116}},\ \href {https://doi.org/https://doi.org/10.1063/5.0011479} {https://doi.org/10.1063/5.0011479} (\bibinfo {year} {2020})\BibitemShut {NoStop}%
\bibitem [{\citenamefont {Avesani}\ \emph {et~al.}(2021)\citenamefont {Avesani}, \citenamefont {Tebyanian}, \citenamefont {Villoresi},\ and\ \citenamefont {Vallone}}]{avesani2021semi}%
  \BibitemOpen
  \bibfield  {author} {\bibinfo {author} {\bibfnamefont {M.}~\bibnamefont {Avesani}}, \bibinfo {author} {\bibfnamefont {H.}~\bibnamefont {Tebyanian}}, \bibinfo {author} {\bibfnamefont {P.}~\bibnamefont {Villoresi}},\ and\ \bibinfo {author} {\bibfnamefont {G.}~\bibnamefont {Vallone}},\ }\bibfield  {title} {\bibinfo {title} {Semi-device-independent heterodyne-based quantum random-number generator},\ }\href {https://doi.org/http://dx.doi.org/10.1103/PhysRevApplied.15.034034} {\bibfield  {journal} {\bibinfo  {journal} {Physical Review Applied}\ }\textbf {\bibinfo {volume} {15}},\ \bibinfo {pages} {034034} (\bibinfo {year} {2021})}\BibitemShut {NoStop}%
\bibitem [{\citenamefont {Huang}\ \emph {et~al.}(2020)\citenamefont {Huang}, \citenamefont {Zhang}, \citenamefont {Zheng}, \citenamefont {Li}, \citenamefont {Xu},\ and\ \citenamefont {Yu}}]{huang2020practical}%
  \BibitemOpen
  \bibfield  {author} {\bibinfo {author} {\bibfnamefont {W.}~\bibnamefont {Huang}}, \bibinfo {author} {\bibfnamefont {Y.}~\bibnamefont {Zhang}}, \bibinfo {author} {\bibfnamefont {Z.}~\bibnamefont {Zheng}}, \bibinfo {author} {\bibfnamefont {Y.}~\bibnamefont {Li}}, \bibinfo {author} {\bibfnamefont {B.}~\bibnamefont {Xu}},\ and\ \bibinfo {author} {\bibfnamefont {S.}~\bibnamefont {Yu}},\ }\bibfield  {title} {\bibinfo {title} {Practical security analysis of a continuous-variable quantum random-number generator with a noisy local oscillator},\ }\href {https://doi.org/10.1103/PhysRevA.102.012422} {\bibfield  {journal} {\bibinfo  {journal} {Phys. Rev. A}\ }\textbf {\bibinfo {volume} {102}},\ \bibinfo {pages} {012422} (\bibinfo {year} {2020})}\BibitemShut {NoStop}%
\bibitem [{\citenamefont {Chi}\ \emph {et~al.}(2011)\citenamefont {Chi}, \citenamefont {Qi}, \citenamefont {Zhu}, \citenamefont {Qian}, \citenamefont {Lo}, \citenamefont {Youn}, \citenamefont {Lvovsky},\ and\ \citenamefont {Tian}}]{chi2011balanced}%
  \BibitemOpen
  \bibfield  {author} {\bibinfo {author} {\bibfnamefont {Y.-M.}\ \bibnamefont {Chi}}, \bibinfo {author} {\bibfnamefont {B.}~\bibnamefont {Qi}}, \bibinfo {author} {\bibfnamefont {W.}~\bibnamefont {Zhu}}, \bibinfo {author} {\bibfnamefont {L.}~\bibnamefont {Qian}}, \bibinfo {author} {\bibfnamefont {H.-K.}\ \bibnamefont {Lo}}, \bibinfo {author} {\bibfnamefont {S.-H.}\ \bibnamefont {Youn}}, \bibinfo {author} {\bibfnamefont {A.}~\bibnamefont {Lvovsky}},\ and\ \bibinfo {author} {\bibfnamefont {L.}~\bibnamefont {Tian}},\ }\bibfield  {title} {\bibinfo {title} {A balanced homodyne detector for high-rate gaussian-modulated coherent-state quantum key distribution},\ }\href {https://doi.org/10.1088/1367-2630/13/1/013003} {\bibfield  {journal} {\bibinfo  {journal} {New Journal of Physics}\ }\textbf {\bibinfo {volume} {13}},\ \bibinfo {pages} {013003} (\bibinfo {year} {2011})}\BibitemShut {NoStop}%
\bibitem [{\citenamefont {Li}\ \emph {et~al.}(2011)\citenamefont {Li}, \citenamefont {Wang}, \citenamefont {Huang}, \citenamefont {Chen}, \citenamefont {Yin}, \citenamefont {Li}, \citenamefont {Zhou}, \citenamefont {Liu}, \citenamefont {Zhang}, \citenamefont {Guo}, \citenamefont {Bao},\ and\ \citenamefont {Han}}]{PhysRevA.84.062308}%
  \BibitemOpen
  \bibfield  {author} {\bibinfo {author} {\bibfnamefont {H.-W.}\ \bibnamefont {Li}}, \bibinfo {author} {\bibfnamefont {S.}~\bibnamefont {Wang}}, \bibinfo {author} {\bibfnamefont {J.-Z.}\ \bibnamefont {Huang}}, \bibinfo {author} {\bibfnamefont {W.}~\bibnamefont {Chen}}, \bibinfo {author} {\bibfnamefont {Z.-Q.}\ \bibnamefont {Yin}}, \bibinfo {author} {\bibfnamefont {F.-Y.}\ \bibnamefont {Li}}, \bibinfo {author} {\bibfnamefont {Z.}~\bibnamefont {Zhou}}, \bibinfo {author} {\bibfnamefont {D.}~\bibnamefont {Liu}}, \bibinfo {author} {\bibfnamefont {Y.}~\bibnamefont {Zhang}}, \bibinfo {author} {\bibfnamefont {G.-C.}\ \bibnamefont {Guo}}, \bibinfo {author} {\bibfnamefont {W.-S.}\ \bibnamefont {Bao}},\ and\ \bibinfo {author} {\bibfnamefont {Z.-F.}\ \bibnamefont {Han}},\ }\bibfield  {title} {\bibinfo {title} {Attacking a practical quantum-key-distribution system with wavelength-dependent beam-splitter and multiwavelength sources},\ }\href {https://doi.org/10.1103/PhysRevA.84.062308} {\bibfield  {journal} {\bibinfo
  {journal} {Phys. Rev. A}\ }\textbf {\bibinfo {volume} {84}},\ \bibinfo {pages} {062308} (\bibinfo {year} {2011})}\BibitemShut {NoStop}%
\bibitem [{\citenamefont {Ma}\ \emph {et~al.}(2013)\citenamefont {Ma}, \citenamefont {Sun}, \citenamefont {Jiang},\ and\ \citenamefont {Liang}}]{PhysRevA.87.052309}%
  \BibitemOpen
  \bibfield  {author} {\bibinfo {author} {\bibfnamefont {X.-C.}\ \bibnamefont {Ma}}, \bibinfo {author} {\bibfnamefont {S.-H.}\ \bibnamefont {Sun}}, \bibinfo {author} {\bibfnamefont {M.-S.}\ \bibnamefont {Jiang}},\ and\ \bibinfo {author} {\bibfnamefont {L.-M.}\ \bibnamefont {Liang}},\ }\bibfield  {title} {\bibinfo {title} {Wavelength attack on practical continuous-variable quantum-key-distribution system with a heterodyne protocol},\ }\href {https://doi.org/10.1103/PhysRevA.87.052309} {\bibfield  {journal} {\bibinfo  {journal} {Phys. Rev. A}\ }\textbf {\bibinfo {volume} {87}},\ \bibinfo {pages} {052309} (\bibinfo {year} {2013})}\BibitemShut {NoStop}%
\bibitem [{\citenamefont {Qin}\ \emph {et~al.}(2018)\citenamefont {Qin}, \citenamefont {Kumar}, \citenamefont {Makarov},\ and\ \citenamefont {All\'eaume}}]{PhysRevA.98.012312}%
  \BibitemOpen
  \bibfield  {author} {\bibinfo {author} {\bibfnamefont {H.}~\bibnamefont {Qin}}, \bibinfo {author} {\bibfnamefont {R.}~\bibnamefont {Kumar}}, \bibinfo {author} {\bibfnamefont {V.}~\bibnamefont {Makarov}},\ and\ \bibinfo {author} {\bibfnamefont {R.}~\bibnamefont {All\'eaume}},\ }\bibfield  {title} {\bibinfo {title} {Homodyne-detector-blinding attack in continuous-variable quantum key distribution},\ }\href {https://doi.org/10.1103/PhysRevA.98.012312} {\bibfield  {journal} {\bibinfo  {journal} {Phys. Rev. A}\ }\textbf {\bibinfo {volume} {98}},\ \bibinfo {pages} {012312} (\bibinfo {year} {2018})}\BibitemShut {NoStop}%
\bibitem [{\citenamefont {Konig}\ \emph {et~al.}(2009)\citenamefont {Konig}, \citenamefont {Renner},\ and\ \citenamefont {Schaffner}}]{konig2009operational}%
  \BibitemOpen
  \bibfield  {author} {\bibinfo {author} {\bibfnamefont {R.}~\bibnamefont {Konig}}, \bibinfo {author} {\bibfnamefont {R.}~\bibnamefont {Renner}},\ and\ \bibinfo {author} {\bibfnamefont {C.}~\bibnamefont {Schaffner}},\ }\bibfield  {title} {\bibinfo {title} {The operational meaning of min-and max-entropy},\ }\href {https://doi.org/10.1109/TIT.2009.2025545} {\bibfield  {journal} {\bibinfo  {journal} {IEEE Transactions on Information theory}\ }\textbf {\bibinfo {volume} {55}},\ \bibinfo {pages} {4337} (\bibinfo {year} {2009})}\BibitemShut {NoStop}%
\bibitem [{\citenamefont {Zhang}\ \emph {et~al.}(2016)\citenamefont {Zhang}, \citenamefont {Nie}, \citenamefont {Liang},\ and\ \citenamefont {Zhang}}]{zhang2016fpga}%
  \BibitemOpen
  \bibfield  {author} {\bibinfo {author} {\bibfnamefont {X.}~\bibnamefont {Zhang}}, \bibinfo {author} {\bibfnamefont {Y.-Q.}\ \bibnamefont {Nie}}, \bibinfo {author} {\bibfnamefont {H.}~\bibnamefont {Liang}},\ and\ \bibinfo {author} {\bibfnamefont {J.}~\bibnamefont {Zhang}},\ }\bibfield  {title} {\bibinfo {title} {Fpga implementation of toeplitz hashing extractor for real time post-processing of raw random numbers},\ }in\ \href {https://doi.org/10.1109/RTC.2016.7543094} {\emph {\bibinfo {booktitle} {2016 IEEE-NPSS Real Time Conference (RT)}}}\ (\bibinfo {organization} {IEEE},\ \bibinfo {year} {2016})\ pp.\ \bibinfo {pages} {1--5}\BibitemShut {NoStop}%
\bibitem [{\citenamefont {Tomamichel}\ \emph {et~al.}(2011)\citenamefont {Tomamichel}, \citenamefont {Schaffner}, \citenamefont {Smith},\ and\ \citenamefont {Renner}}]{tomamichel2011leftover}%
  \BibitemOpen
  \bibfield  {author} {\bibinfo {author} {\bibfnamefont {M.}~\bibnamefont {Tomamichel}}, \bibinfo {author} {\bibfnamefont {C.}~\bibnamefont {Schaffner}}, \bibinfo {author} {\bibfnamefont {A.}~\bibnamefont {Smith}},\ and\ \bibinfo {author} {\bibfnamefont {R.}~\bibnamefont {Renner}},\ }\bibfield  {title} {\bibinfo {title} {Leftover hashing against quantum side information},\ }\href {https://doi.org/https://doi.org/10.1109/TIT.2011.2158473} {\bibfield  {journal} {\bibinfo  {journal} {IEEE Transactions on Information Theory}\ }\textbf {\bibinfo {volume} {57}},\ \bibinfo {pages} {5524} (\bibinfo {year} {2011})}\BibitemShut {NoStop}%
\bibitem [{\citenamefont {AMD}()}]{PYNQ}%
  \BibitemOpen
  \bibfield  {author} {\bibinfo {author} {\bibnamefont {AMD}},\ }\href {https://www.pynq.io/} {\bibinfo {title} {Pynq | python productivity to amd adaptive coompute platforms}},\ \bibinfo {note} {accessed on 25 August 2024}\BibitemShut {NoStop}%
\bibitem [{\citenamefont {Gómez}(2021)}]{RP-PYNQ}%
  \BibitemOpen
  \bibfield  {author} {\bibinfo {author} {\bibfnamefont {P.}~\bibnamefont {Gómez}},\ }\href {https://github.com/dspsandbox/FPGA-Notes-for-Scientists} {\bibinfo {title} {Fpga-notes-for-scientists}} (\bibinfo {year} {2021}),\ \bibinfo {note} {last accessed 13 Aug 2024}\BibitemShut {NoStop}%
\bibitem [{\citenamefont {Barker}(2000)}]{NIST_STS}%
  \BibitemOpen
  \bibfield  {author} {\bibinfo {author} {\bibfnamefont {E.}~\bibnamefont {Barker}},\ }\href {https://tsapps.nist.gov/publication/get_pdf.cfm?pub_id=151231} {\bibinfo {title} {A statistical test suite for random and pseudorandom number generators for cryptographic applications}} (\bibinfo {year} {2000})\BibitemShut {NoStop}%
\bibitem [{\citenamefont {Zdenek~Říha}()}]{FAST-NIST}%
  \BibitemOpen
  \bibfield  {author} {\bibinfo {author} {\bibfnamefont {M.~S.}\ \bibnamefont {Zdenek~Říha}},\ }\href {http://randomness-tests.fi.muni.cz/} {\bibinfo {title} {Faster randomness testing}},\ \bibinfo {note} {accessed on 28 October 2024}\BibitemShut {NoStop}%
\bibitem [{\citenamefont {Struszewski}\ \emph {et~al.}(2017)\citenamefont {Struszewski}, \citenamefont {Bieler}, \citenamefont {Humphreys}, \citenamefont {Bao}, \citenamefont {Peccianti},\ and\ \citenamefont {Pasquazi}}]{struszewski2017characterization}%
  \BibitemOpen
  \bibfield  {author} {\bibinfo {author} {\bibfnamefont {P.}~\bibnamefont {Struszewski}}, \bibinfo {author} {\bibfnamefont {M.}~\bibnamefont {Bieler}}, \bibinfo {author} {\bibfnamefont {D.}~\bibnamefont {Humphreys}}, \bibinfo {author} {\bibfnamefont {H.}~\bibnamefont {Bao}}, \bibinfo {author} {\bibfnamefont {M.}~\bibnamefont {Peccianti}},\ and\ \bibinfo {author} {\bibfnamefont {A.}~\bibnamefont {Pasquazi}},\ }\bibfield  {title} {\bibinfo {title} {Characterization of high-speed balanced photodetectors},\ }\href {https://doi.org/10.1109/TIM.2016.2647519} {\bibfield  {journal} {\bibinfo  {journal} {IEEE Transactions on Instrumentation and Measurement}\ }\textbf {\bibinfo {volume} {66}},\ \bibinfo {pages} {1613} (\bibinfo {year} {2017})}\BibitemShut {NoStop}%
\bibitem [{Fra(2019)}]{FraunhoferBPD}%
  \BibitemOpen
  \href {https://www.hhi.fraunhofer.de/fileadmin/PDF/PC/DET/20191121_100GHz-Balanced-Photodetector-Module.pdf} {\emph {\bibinfo {title} {100 GHz Balanced Photodetector module}}},\ \bibinfo {organization} {Fraunhofer Heinrich Hertz Institute} (\bibinfo {year} {2019}),\ \bibinfo {note} {last accessed 13 Aug 2024}\BibitemShut {NoStop}%
\bibitem [{\citenamefont {{Thorlabs, Inc.}}(nd)}]{thorlabs_pdb482c}%
  \BibitemOpen
  \bibfield  {author} {\bibinfo {author} {\bibnamefont {{Thorlabs, Inc.}}},\ }\href {https://www.thorlabs.com/drawings/3dceed2549232b12-0A9F5E9B-EC5E-B2BF-E9DE09C8260063BA/PDB482C-AC-Manual.pdf} {\emph {\bibinfo {title} {PDB482C-AC Balanced Photodetector Manual}}} (\bibinfo {year} {n.d.}),\ \bibinfo {note} {accessed: 2025-04-10}\BibitemShut {NoStop}%
\bibitem [{\citenamefont {Bruynsteen}\ \emph {et~al.}(2021)\citenamefont {Bruynsteen}, \citenamefont {Vanhoecke}, \citenamefont {Bauwelinck},\ and\ \citenamefont {Yin}}]{Bruynsteen:21}%
  \BibitemOpen
  \bibfield  {author} {\bibinfo {author} {\bibfnamefont {C.}~\bibnamefont {Bruynsteen}}, \bibinfo {author} {\bibfnamefont {M.}~\bibnamefont {Vanhoecke}}, \bibinfo {author} {\bibfnamefont {J.}~\bibnamefont {Bauwelinck}},\ and\ \bibinfo {author} {\bibfnamefont {X.}~\bibnamefont {Yin}},\ }\bibfield  {title} {\bibinfo {title} {Integrated balanced homodyne photonic--electronic detector for beyond 20ghz shot-noise-limited measurements},\ }\href {https://doi.org/10.1364/OPTICA.420973} {\bibfield  {journal} {\bibinfo  {journal} {Optica}\ }\textbf {\bibinfo {volume} {8}},\ \bibinfo {pages} {1146} (\bibinfo {year} {2021})}\BibitemShut {NoStop}%
\bibitem [{\citenamefont {Wang}\ \emph {et~al.}(2024)\citenamefont {Wang}, \citenamefont {Zheng}, \citenamefont {Jia}, \citenamefont {Huang}, \citenamefont {Zhu}, \citenamefont {Shi}, \citenamefont {Wang}, \citenamefont {Lu}, \citenamefont {Zou},\ and\ \citenamefont {Li}}]{wang2024compact}%
  \BibitemOpen
  \bibfield  {author} {\bibinfo {author} {\bibfnamefont {X.}~\bibnamefont {Wang}}, \bibinfo {author} {\bibfnamefont {T.}~\bibnamefont {Zheng}}, \bibinfo {author} {\bibfnamefont {Y.}~\bibnamefont {Jia}}, \bibinfo {author} {\bibfnamefont {J.}~\bibnamefont {Huang}}, \bibinfo {author} {\bibfnamefont {X.}~\bibnamefont {Zhu}}, \bibinfo {author} {\bibfnamefont {Y.}~\bibnamefont {Shi}}, \bibinfo {author} {\bibfnamefont {N.}~\bibnamefont {Wang}}, \bibinfo {author} {\bibfnamefont {Z.}~\bibnamefont {Lu}}, \bibinfo {author} {\bibfnamefont {J.}~\bibnamefont {Zou}},\ and\ \bibinfo {author} {\bibfnamefont {Y.}~\bibnamefont {Li}},\ }\bibfield  {title} {\bibinfo {title} {Compact quantum random number generator based on a laser diode and a hybrid chip with integrated silicon photonics},\ }\bibfield  {journal} {\bibinfo  {journal} {Photonics}\ }\textbf {\bibinfo {volume} {11}},\ \href {https://doi.org/10.3390/photonics11050468} {10.3390/photonics11050468} (\bibinfo {year} {2024})\BibitemShut {NoStop}%
\bibitem [{\citenamefont {Bai}\ \emph {et~al.}(2021)\citenamefont {Bai}, \citenamefont {Huang}, \citenamefont {Qiao}, \citenamefont {Nie}, \citenamefont {Tang}, \citenamefont {Chu}, \citenamefont {Zhang},\ and\ \citenamefont {Pan}}]{bai202118}%
  \BibitemOpen
  \bibfield  {author} {\bibinfo {author} {\bibfnamefont {B.}~\bibnamefont {Bai}}, \bibinfo {author} {\bibfnamefont {J.}~\bibnamefont {Huang}}, \bibinfo {author} {\bibfnamefont {G.-R.}\ \bibnamefont {Qiao}}, \bibinfo {author} {\bibfnamefont {Y.-Q.}\ \bibnamefont {Nie}}, \bibinfo {author} {\bibfnamefont {W.}~\bibnamefont {Tang}}, \bibinfo {author} {\bibfnamefont {T.}~\bibnamefont {Chu}}, \bibinfo {author} {\bibfnamefont {J.}~\bibnamefont {Zhang}},\ and\ \bibinfo {author} {\bibfnamefont {J.-W.}\ \bibnamefont {Pan}},\ }\bibfield  {title} {\bibinfo {title} {18.8 gbps real-time quantum random number generator with a photonic integrated chip},\ }\bibfield  {journal} {\bibinfo  {journal} {Applied Physics Letters}\ }\textbf {\bibinfo {volume} {118}},\ \href {https://doi.org/10.1063/5.0056027} {10.1063/5.0056027} (\bibinfo {year} {2021})\BibitemShut {NoStop}%
\bibitem [{\citenamefont {Raffaelli}\ \emph {et~al.}(2018)\citenamefont {Raffaelli}, \citenamefont {Ferranti}, \citenamefont {Mahler}, \citenamefont {Sibson}, \citenamefont {Kennard}, \citenamefont {Santamato}, \citenamefont {Sinclair}, \citenamefont {Bonneau}, \citenamefont {Thompson},\ and\ \citenamefont {Matthews}}]{raffaelli2018homodyne}%
  \BibitemOpen
  \bibfield  {author} {\bibinfo {author} {\bibfnamefont {F.}~\bibnamefont {Raffaelli}}, \bibinfo {author} {\bibfnamefont {G.}~\bibnamefont {Ferranti}}, \bibinfo {author} {\bibfnamefont {D.~H.}\ \bibnamefont {Mahler}}, \bibinfo {author} {\bibfnamefont {P.}~\bibnamefont {Sibson}}, \bibinfo {author} {\bibfnamefont {J.~E.}\ \bibnamefont {Kennard}}, \bibinfo {author} {\bibfnamefont {A.}~\bibnamefont {Santamato}}, \bibinfo {author} {\bibfnamefont {G.}~\bibnamefont {Sinclair}}, \bibinfo {author} {\bibfnamefont {D.}~\bibnamefont {Bonneau}}, \bibinfo {author} {\bibfnamefont {M.~G.}\ \bibnamefont {Thompson}},\ and\ \bibinfo {author} {\bibfnamefont {J.~C.}\ \bibnamefont {Matthews}},\ }\bibfield  {title} {\bibinfo {title} {A homodyne detector integrated onto a photonic chip for measuring quantum states and generating random numbers},\ }\href {https://doi.org/10.1088/2058-9565/aaa38f} {\bibfield  {journal} {\bibinfo  {journal} {Quantum Science and Technology}\ }\textbf {\bibinfo {volume} {3}},\ \bibinfo {pages} {025003}
  (\bibinfo {year} {2018})}\BibitemShut {NoStop}%
\bibitem [{\citenamefont {Li}\ \emph {et~al.}(2024)\citenamefont {Li}, \citenamefont {Cai}, \citenamefont {Wang}, \citenamefont {Tan}, \citenamefont {Huang}, \citenamefont {Wu},\ and\ \citenamefont {Zeng}}]{li2024chip}%
  \BibitemOpen
  \bibfield  {author} {\bibinfo {author} {\bibfnamefont {L.}~\bibnamefont {Li}}, \bibinfo {author} {\bibfnamefont {M.}~\bibnamefont {Cai}}, \bibinfo {author} {\bibfnamefont {T.}~\bibnamefont {Wang}}, \bibinfo {author} {\bibfnamefont {Z.}~\bibnamefont {Tan}}, \bibinfo {author} {\bibfnamefont {P.}~\bibnamefont {Huang}}, \bibinfo {author} {\bibfnamefont {K.}~\bibnamefont {Wu}},\ and\ \bibinfo {author} {\bibfnamefont {G.}~\bibnamefont {Zeng}},\ }\bibfield  {title} {\bibinfo {title} {On-chip source-device-independent quantum random number generator},\ }\href {https://doi.org/https://doi.org/10.1364/PRJ.506960} {\bibfield  {journal} {\bibinfo  {journal} {Photonics Research}\ }\textbf {\bibinfo {volume} {12}},\ \bibinfo {pages} {1379} (\bibinfo {year} {2024})}\BibitemShut {NoStop}%
\bibitem [{\citenamefont {Bertapelle}\ \emph {et~al.}(2023)\citenamefont {Bertapelle}, \citenamefont {Avesani}, \citenamefont {Santamato}, \citenamefont {Montanaro}, \citenamefont {Chiesa}, \citenamefont {Rotta}, \citenamefont {Artiglia}, \citenamefont {Sorianello}, \citenamefont {Testa}, \citenamefont {De~Angelis} \emph {et~al.}}]{bertapelle2023high}%
  \BibitemOpen
  \bibfield  {author} {\bibinfo {author} {\bibfnamefont {T.}~\bibnamefont {Bertapelle}}, \bibinfo {author} {\bibfnamefont {M.}~\bibnamefont {Avesani}}, \bibinfo {author} {\bibfnamefont {A.}~\bibnamefont {Santamato}}, \bibinfo {author} {\bibfnamefont {A.}~\bibnamefont {Montanaro}}, \bibinfo {author} {\bibfnamefont {M.}~\bibnamefont {Chiesa}}, \bibinfo {author} {\bibfnamefont {D.}~\bibnamefont {Rotta}}, \bibinfo {author} {\bibfnamefont {M.}~\bibnamefont {Artiglia}}, \bibinfo {author} {\bibfnamefont {V.}~\bibnamefont {Sorianello}}, \bibinfo {author} {\bibfnamefont {F.}~\bibnamefont {Testa}}, \bibinfo {author} {\bibfnamefont {G.}~\bibnamefont {De~Angelis}}, \emph {et~al.},\ }\bibfield  {title} {\bibinfo {title} {High-speed source-device-independent quantum random number generator on a chip},\ }\bibfield  {journal} {\bibinfo  {journal} {arXiv preprint arXiv:2305.12472}\ }\href {https://doi.org/https://doi.org/10.48550/arXiv.2305.12472} {https://doi.org/10.48550/arXiv.2305.12472} (\bibinfo {year}
  {2023})\BibitemShut {NoStop}%
\bibitem [{\citenamefont {Du}\ \emph {et~al.}(2023)\citenamefont {Du}, \citenamefont {Hua}, \citenamefont {Zhao}, \citenamefont {Sun}, \citenamefont {Zhang}, \citenamefont {Xiao},\ and\ \citenamefont {Wei}}]{du2023source}%
  \BibitemOpen
  \bibfield  {author} {\bibinfo {author} {\bibfnamefont {Y.}~\bibnamefont {Du}}, \bibinfo {author} {\bibfnamefont {X.}~\bibnamefont {Hua}}, \bibinfo {author} {\bibfnamefont {Z.}~\bibnamefont {Zhao}}, \bibinfo {author} {\bibfnamefont {X.}~\bibnamefont {Sun}}, \bibinfo {author} {\bibfnamefont {Z.}~\bibnamefont {Zhang}}, \bibinfo {author} {\bibfnamefont {X.}~\bibnamefont {Xiao}},\ and\ \bibinfo {author} {\bibfnamefont {K.}~\bibnamefont {Wei}},\ }\bibfield  {title} {\bibinfo {title} {Source-independent quantum random number generators with integrated silicon photonics},\ }\bibfield  {journal} {\bibinfo  {journal} {arXiv preprint arXiv:2312.17011}\ }\href {https://doi.org/https://doi.org/10.48550/arXiv.2312.17011} {https://doi.org/10.48550/arXiv.2312.17011} (\bibinfo {year} {2023})\BibitemShut {NoStop}%
\bibitem [{\citenamefont {Williams}\ \emph {et~al.}(2010)\citenamefont {Williams}, \citenamefont {Salevan}, \citenamefont {Li}, \citenamefont {Roy},\ and\ \citenamefont {Murphy}}]{williams2010fast}%
  \BibitemOpen
  \bibfield  {author} {\bibinfo {author} {\bibfnamefont {C.~R.}\ \bibnamefont {Williams}}, \bibinfo {author} {\bibfnamefont {J.~C.}\ \bibnamefont {Salevan}}, \bibinfo {author} {\bibfnamefont {X.}~\bibnamefont {Li}}, \bibinfo {author} {\bibfnamefont {R.}~\bibnamefont {Roy}},\ and\ \bibinfo {author} {\bibfnamefont {T.~E.}\ \bibnamefont {Murphy}},\ }\bibfield  {title} {\bibinfo {title} {Fast physical random number generator using amplified spontaneous emission},\ }\href {https://doi.org/https://doi.org/10.1364/OE.18.023584} {\bibfield  {journal} {\bibinfo  {journal} {Optics express}\ }\textbf {\bibinfo {volume} {18}},\ \bibinfo {pages} {23584} (\bibinfo {year} {2010})}\BibitemShut {NoStop}%
\bibitem [{\citenamefont {Qi}(2017)}]{qi2017true}%
  \BibitemOpen
  \bibfield  {author} {\bibinfo {author} {\bibfnamefont {B.}~\bibnamefont {Qi}},\ }\bibfield  {title} {\bibinfo {title} {True randomness from an incoherent source},\ }\bibfield  {journal} {\bibinfo  {journal} {Review of Scientific Instruments}\ }\textbf {\bibinfo {volume} {88}},\ \href {https://doi.org/https://doi.org/10.1063/1.4986048} {https://doi.org/10.1063/1.4986048} (\bibinfo {year} {2017})\BibitemShut {NoStop}%
\bibitem [{\citenamefont {Yang}\ \emph {et~al.}(2020)\citenamefont {Yang}, \citenamefont {Fan}, \citenamefont {Liu}, \citenamefont {Su}, \citenamefont {Li}, \citenamefont {Huang},\ and\ \citenamefont {Xu}}]{yang2020randomness}%
  \BibitemOpen
  \bibfield  {author} {\bibinfo {author} {\bibfnamefont {J.}~\bibnamefont {Yang}}, \bibinfo {author} {\bibfnamefont {F.}~\bibnamefont {Fan}}, \bibinfo {author} {\bibfnamefont {J.}~\bibnamefont {Liu}}, \bibinfo {author} {\bibfnamefont {Q.}~\bibnamefont {Su}}, \bibinfo {author} {\bibfnamefont {Y.}~\bibnamefont {Li}}, \bibinfo {author} {\bibfnamefont {W.}~\bibnamefont {Huang}},\ and\ \bibinfo {author} {\bibfnamefont {B.}~\bibnamefont {Xu}},\ }\bibfield  {title} {\bibinfo {title} {Randomness quantification for quantum random number generation based on detection of amplified spontaneous emission noise},\ }\href {https://doi.org/10.1088/2058-9565/abbd80} {\bibfield  {journal} {\bibinfo  {journal} {Quantum Science and Technology}\ }\textbf {\bibinfo {volume} {6}},\ \bibinfo {pages} {015002} (\bibinfo {year} {2020})}\BibitemShut {NoStop}%
\bibitem [{\citenamefont {Laudenbach}\ \emph {et~al.}(2018)\citenamefont {Laudenbach}, \citenamefont {Pacher}, \citenamefont {Fung}, \citenamefont {Poppe}, \citenamefont {Peev}, \citenamefont {Schrenk}, \citenamefont {Hentschel}, \citenamefont {Walther},\ and\ \citenamefont {H{\"u}bel}}]{laudenbach2018continuous}%
  \BibitemOpen
  \bibfield  {author} {\bibinfo {author} {\bibfnamefont {F.}~\bibnamefont {Laudenbach}}, \bibinfo {author} {\bibfnamefont {C.}~\bibnamefont {Pacher}}, \bibinfo {author} {\bibfnamefont {C.-H.~F.}\ \bibnamefont {Fung}}, \bibinfo {author} {\bibfnamefont {A.}~\bibnamefont {Poppe}}, \bibinfo {author} {\bibfnamefont {M.}~\bibnamefont {Peev}}, \bibinfo {author} {\bibfnamefont {B.}~\bibnamefont {Schrenk}}, \bibinfo {author} {\bibfnamefont {M.}~\bibnamefont {Hentschel}}, \bibinfo {author} {\bibfnamefont {P.}~\bibnamefont {Walther}},\ and\ \bibinfo {author} {\bibfnamefont {H.}~\bibnamefont {H{\"u}bel}},\ }\bibfield  {title} {\bibinfo {title} {Continuous-variable quantum key distribution with gaussian modulation—the theory of practical implementations},\ }\href {https://doi.org/https://doi.org/10.1002/qute.201800011} {\bibfield  {journal} {\bibinfo  {journal} {Advanced Quantum Technologies}\ }\textbf {\bibinfo {volume} {1}},\ \bibinfo {pages} {1800011} (\bibinfo {year} {2018})}\BibitemShut {NoStop}%
\bibitem [{\citenamefont {Bachor}\ and\ \citenamefont {Ralph}(2019)}]{bachor2019guide}%
  \BibitemOpen
  \bibfield  {author} {\bibinfo {author} {\bibfnamefont {H.-A.}\ \bibnamefont {Bachor}}\ and\ \bibinfo {author} {\bibfnamefont {T.~C.}\ \bibnamefont {Ralph}},\ }\href@noop {} {\emph {\bibinfo {title} {A guide to experiments in quantum optics}}}\ (\bibinfo  {publisher} {John Wiley \& Sons},\ \bibinfo {year} {2019})\BibitemShut {NoStop}%
\bibitem [{\citenamefont {Martin}\ \emph {et~al.}(2015)\citenamefont {Martin}, \citenamefont {Sanguinetti}, \citenamefont {Lim}, \citenamefont {Houlmann},\ and\ \citenamefont {Zbinden}}]{martin2015quantum}%
  \BibitemOpen
  \bibfield  {author} {\bibinfo {author} {\bibfnamefont {A.}~\bibnamefont {Martin}}, \bibinfo {author} {\bibfnamefont {B.}~\bibnamefont {Sanguinetti}}, \bibinfo {author} {\bibfnamefont {C.~C.~W.}\ \bibnamefont {Lim}}, \bibinfo {author} {\bibfnamefont {R.}~\bibnamefont {Houlmann}},\ and\ \bibinfo {author} {\bibfnamefont {H.}~\bibnamefont {Zbinden}},\ }\bibfield  {title} {\bibinfo {title} {Quantum random number generation for 1.25-ghz quantum key distribution systems},\ }\href {https://opg.optica.org/jlt/abstract.cfm?URI=jlt-33-13-2855} {\bibfield  {journal} {\bibinfo  {journal} {J. Lightwave Technol.}\ }\textbf {\bibinfo {volume} {33}},\ \bibinfo {pages} {2855} (\bibinfo {year} {2015})}\BibitemShut {NoStop}%
\bibitem [{\citenamefont {Wong}\ \emph {et~al.}(1998)\citenamefont {Wong}, \citenamefont {Haus}, \citenamefont {Jiang}, \citenamefont {Hansen},\ and\ \citenamefont {Margalit}}]{wong1998photon}%
  \BibitemOpen
  \bibfield  {author} {\bibinfo {author} {\bibfnamefont {W.~S.}\ \bibnamefont {Wong}}, \bibinfo {author} {\bibfnamefont {H.~A.}\ \bibnamefont {Haus}}, \bibinfo {author} {\bibfnamefont {L.~A.}\ \bibnamefont {Jiang}}, \bibinfo {author} {\bibfnamefont {P.~B.}\ \bibnamefont {Hansen}},\ and\ \bibinfo {author} {\bibfnamefont {M.}~\bibnamefont {Margalit}},\ }\bibfield  {title} {\bibinfo {title} {Photon statistics of amplified spontaneous emission noise in a 10-gbit/s optically preamplified direct-detection receiver},\ }\href {https://doi.org/10.1364/OL.23.001832} {\bibfield  {journal} {\bibinfo  {journal} {Optics letters}\ }\textbf {\bibinfo {volume} {23}},\ \bibinfo {pages} {1832} (\bibinfo {year} {1998})}\BibitemShut {NoStop}%
\bibitem [{\citenamefont {Pietralunga}\ \emph {et~al.}(2003)\citenamefont {Pietralunga}, \citenamefont {Martelli},\ and\ \citenamefont {Martinelli}}]{pietralunga2003photon}%
  \BibitemOpen
  \bibfield  {author} {\bibinfo {author} {\bibfnamefont {S.~M.}\ \bibnamefont {Pietralunga}}, \bibinfo {author} {\bibfnamefont {P.}~\bibnamefont {Martelli}},\ and\ \bibinfo {author} {\bibfnamefont {M.}~\bibnamefont {Martinelli}},\ }\bibfield  {title} {\bibinfo {title} {Photon statistics of amplified spontaneous emission in a dense wavelength-division multiplexing regime},\ }\href {https://doi.org/10.1364/OL.28.000152} {\bibfield  {journal} {\bibinfo  {journal} {Optics letters}\ }\textbf {\bibinfo {volume} {28}},\ \bibinfo {pages} {152} (\bibinfo {year} {2003})}\BibitemShut {NoStop}%
\bibitem [{\citenamefont {Li}\ \emph {et~al.}(2021)\citenamefont {Li}, \citenamefont {Fei}, \citenamefont {Wang}, \citenamefont {Meng}, \citenamefont {Wang}, \citenamefont {Duan},\ and\ \citenamefont {Ma}}]{li2021experimental}%
  \BibitemOpen
  \bibfield  {author} {\bibinfo {author} {\bibfnamefont {Y.}~\bibnamefont {Li}}, \bibinfo {author} {\bibfnamefont {Y.}~\bibnamefont {Fei}}, \bibinfo {author} {\bibfnamefont {W.}~\bibnamefont {Wang}}, \bibinfo {author} {\bibfnamefont {X.}~\bibnamefont {Meng}}, \bibinfo {author} {\bibfnamefont {H.}~\bibnamefont {Wang}}, \bibinfo {author} {\bibfnamefont {Q.}~\bibnamefont {Duan}},\ and\ \bibinfo {author} {\bibfnamefont {Z.}~\bibnamefont {Ma}},\ }\bibfield  {title} {\bibinfo {title} {Experimental study on the security of superluminescent led-based quantum random generator},\ }\href {https://doi.org/10.1117/1.OE.60.11.116106} {\bibfield  {journal} {\bibinfo  {journal} {Optical Engineering}\ }\textbf {\bibinfo {volume} {60}},\ \bibinfo {pages} {116106} (\bibinfo {year} {2021})}\BibitemShut {NoStop}%
\end{thebibliography}

\end{document}